\DeclareMathOperator{\Tr}{Tr}
\newtheorem{remark}{Remark}[section]
\newcommand{\vardbtilde}[1]{\tilde{\raisebox{0pt}[0.85\height]{$\tilde{#1}$}}}
\begin{document}
\title{Geometry of Sensitivity: Twice Sampling and Hybrid Clipping in Differential Privacy with Optimal Gaussian Noise and Application to Deep Learning} 

\begin{abstract}
{We study the fundamental problem of the construction of optimal randomization in Differential Privacy (DP).} 
Depending on the clipping strategy or additional properties of the processing function, the corresponding sensitivity set theoretically determines the necessary randomization to produce the required security parameters. 
{Towards the optimal utility-privacy tradeoff, finding the minimal perturbation for properly-selected sensitivity sets stands as a central problem in DP research. } 
In practice, $l_2/l_1$-norm clippings with Gaussian/Laplace noise mechanisms are among the most common setups. 
However, they also suffer from the {\em curse of dimensionality}. 
For more generic clipping strategies, the understanding of the optimal noise for a high-dimensional sensitivity set remains limited. 
This raises challenges in mitigating the worst-case dimension dependence in privacy-preserving randomization, especially for deep learning applications.  

In this paper, we revisit the geometry of high-dimensional sensitivity sets and present a series of results to characterize the {\em non-asymptotically} optimal Gaussian noise for Rényi DP (RDP). 
Our results are both negative and positive: on one hand, we show the curse of dimensionality is tight for a broad class of sensitivity sets satisfying certain symmetry properties; but if, fortunately, the representation of the sensitivity set is asymmetric on some group of orthogonal bases, we show the optimal noise bounds need {\em not} be explicitly dependent on either dimension or rank. 
We also revisit sampling in the high-dimensional scenario, which is the key for both privacy amplification and computation efficiency in large-scale data processing. We propose a novel method, termed {\em twice sampling}, which implements both sample-wise and coordinate-wise sampling, to enable Gaussian noises to fit the sensitivity geometry more closely. 
With {\em closed-form} RDP analysis, we prove twice sampling produces asymptotic improvement of the privacy amplification given an additional $l_{\infty}$-norm restriction, especially for {\em small} sampling rate. We also provide concrete applications of our results on practical tasks. Through tighter privacy analysis combined with twice sampling, we efficiently train ResNet22 in low sampling rate on CIFAR10, and achieve 69.7\% and 81.6\% test accuracy with $(\epsilon=2,\delta=10^{-5})$ and $(\epsilon=8,\delta=10^{-5})$ DP guarantee, respectively.
\end{abstract}

\begin{CCSXML}
<ccs2012>
<concept>
<concept_id>10002978.10003029.10011703</concept_id>
<concept_desc>Security and privacy~Usability in security and privacy</concept_desc>
<concept_significance>500</concept_significance>
</concept>
</ccs2012>
\end{CCSXML}
\ccsdesc{Security and privacy~Privacy-preserving protocols}

\keywords{Differential Privacy; Sensitivity Geometry; Rényi DP; Clipping; Twice Sampling; Privacy Amplification; Deep Learning} 

\author{Hanshen Xiao}
\affiliation{%
   \institution{Massachusetts Institute of Technology \\   Cambridge, MA, USA}}
\email{hsxiao@mit.edu}

\author{Jun Wan}
\affiliation{%
   \institution{Massachusetts Institute of Technology\\
    Cambridge, MA, USA }}
\email{junwan@mit.edu}

\author{Srinivas Devadas}
\affiliation{%
   \institution{Massachusetts Institute of Technology\\
    Cambridge, MA, USA}}
\email{devadas@mit.edu}

\maketitle

\flushbottom

\section{Introduction}
Emerged as the de-facto privacy risk measurement, Differential Privacy (DP) provides a semantic and input-independent worst-case guarantee regarding the hardness to infer the participation of an individual input from any release. 
At a high level, there are two steps to differentially privatize a data processing protocol $\mathcal{F}: \mathcal{X}^* \to \mathbb{R}^d$.

First, to capture the worst-case influence/effect from an individual to the output of $\mathcal{F}$, we need to determine the sensitivity set $\mathsf{S}$ of $\mathcal{F}$, where $\mathsf{S}= \{\pm\big(\mathcal{F}(X) - \mathcal{F}(X\cup x) \big): X \in \mathcal{X}^*, x\in \mathcal{X}\}.$ 
Sensitivity set $\mathsf{S}$ includes all the possible changes to the output when we arbitrarily remove an individual datapoint $x$ 
With $\mathsf{S}$, the second step is to randomize $\mathcal{F}$ such that the distribution divergence between its randomized version $\mathcal{RF}$ on two arbitrary adjacent data sets is close enough. Here, adjacent data sets denote a pair of sets that only differ in a single datapoint. 
Mathematically, this can be described as  $\sup_{X \in \mathcal{X}^*, x\in \mathcal{X}}\rho\big(\mathbb{P}_{\mathcal{RF}(X)}\|\mathbb{P}_{\mathcal{RF}(X\cup x)}\big) \leq \epsilon(\delta)$, where $\rho$ is some divergence metric for two distributions $\mathbb{P}_{\mathcal{RF}(X)}$ and $\mathbb{P}_{\mathcal{RF}(X\cup x)}$, and $\epsilon(\delta)$ is the security parameter. 

With different motivations, there are many commonly-used metrics $\rho$. 
For example, when $\rho$ is selected to be infinity divergence $\mathcal{D}_{\infty}(\mathbb{P}_a\|\mathbb{P}_b) = \sup_{z}\{\max\{\pm\log \frac{\mathbb{P}_a(z)}{\mathbb{P}_b(z)}\}\}$, i.e., the largest log ratio between the probability density functions, the above becomes the well-known $\epsilon$-DP definition   \cite{dwork2006a,dwork2006b}. 
Small $\epsilon$-DP guarantee suggests that, for arbitrary $X$ and $x$, either Type I or Type II error in a hypothesis testing to infer whether the true input is $X$ or $X \cup x$ is large
\cite{dong2022gaussian,geng-approxDP}. 
Similarly, if one selects $\rho$ to be (symmetrized) $\alpha$-Rényi divergence, then it becomes the $(\alpha,\epsilon)$-Rényi DP (RDP) \cite{renyi}, which excels in DP composition. 
When the randomized $\mathcal{RF}$ is unbiased with respect to $\mathcal{F}$, such as zero-mean noise perturbation, producing required security parameters reduces to determining the worst-case divergence between two distributions. The difference of their means is captured by $\mathsf{S}$.
In particular, we call the elements in $\mathsf{S}$ that cause the largest output divergence as the dominating (worst-case) sensitivity.  

As the central problem in DP research, how to find or closely approximate the sensitivity set and accordingly select the optimal randomization to produce a tight privacy-utility tradeoff remains largely open, especially for high-dimensional and complicated data processing. 
The two seemingly simple privatization steps are actually difficult in practice, as summarized below.

\noindent \textbf{{Intractable Tight Sensitivity}}: First, even for simple mean estimation of a dataset, one cannot claim a bounded sensitivity set without assumptions on an individual datapoint. 
Moreover, even if the processing $\mathcal{F}$ is bounded or a bounded sensitivity set $\mathsf{S}$ is given, the dominating sensitivity is in general NP hard to determine  
\cite{sensitivity-NPhard}. 
Therefore, it is in general {\em impossible} to perfectly achieve the optimal privacy-utility tradeoff for arbitrary data processing. 
To this end, an alternative operation to ensure tractable sensitivity is clipping. 
In most practical applications, instead of characterizing the actual sensitivity set of the target processing function $\mathcal{F}$, one could propose some approximated and analyzable sensitivity set $\mathsf{S}$, and then artificially project the output of processing $\mathcal{F}$ to $\mathsf{S}$. 
For example, $l_2/l_1$-norm clipping is equivalent to a projection into an $l_2/l_1$ norm ball. 
Though such straightforward clipping based on $l_2/l_1$ norm is easy to implement and analyze, its approximation performance and the clipping bias caused are rarely studied for practical high-dimensional tasks.  

\noindent \textbf{{Inefficient Randomization}:} 
{Second, even if the sensitivity set is given or can be closely approximated, existing randomization tools could be inefficient. 
Currently, noise (isotropic Gaussian/Laplace mechanisms) and (sub)sampling (Poisson sampling) are the two most-commonly used approaches to produce DP guarantees. 
However, as only sufficient conditions, both methods have efficiency problems, which are mainly twofold. 
One one hand, they may not perfectly capture the geometry and the introduced perturbation could be sub-optimal.
On the other hand, to produce a better utility-privacy tradeoff, they may also incur high implementation overhead, for example, requiring a large batch of subsampled data and consequently a high memory requirement for the privatized algorithm, as explained below.} 

As for {independent perturbation}, ideally, the injected noise is expected to reflect the geometry of the sensitivity set such that, at a cost of minimal variance, possibly all elements in $\mathsf{S}$ of largest norm could be the dominating sensitivity. 
Unfortunately, to our knowledge, {\em non-asymptotically} optimal noise in terms of minimal variance is only known for $l_1$-norm sensitivity for $\epsilon$-DP  \cite{laplaceoptimal}\footnote{\cite{laplaceoptimal} shows that the optimal noise distribution is a mixture of the uniform distribution and the geometric distribution for $\epsilon$-DP.}. 
For asymptotic results, \cite{geometry-hardt} and  \cite{geometry-approx, de2012lower} provide asymptotic lower bounds of necessary perturbation for $\epsilon$-DP and $(\epsilon,\delta)$-DP, respectively.
They consider applications in private linear query, where the sensitivity set is in a form $\mathsf{S}=A\mathcal{B}_1$ or $\mathsf{S}=A\mathcal{B}_2$ for some matrix $A$ and $l_1/l_2$-ball $\mathcal{B}_1/\mathcal{B}_2$.
Thus, in particular for $l_2$ and $l_1$ norm sensitivity, Gaussian and Laplace noises are known to produce {\em asymptotically} tight utility-privacy tradeoff for $\epsilon$-DP and $(\epsilon, \delta)$-DP, respectively. 
For both cases, a scale of $\tilde{\Theta}({\sqrt{d}}/{\epsilon})$ perturbation is required \cite{bassily2014private,laplaceoptimal}\footnote{Throughout this paper, we use scale to denote the expected $l_2$ norm of a random vector, i.e., $\mathbb{E}_{v}[\|v\|_2].$}, known as the {\em curse of dimensionality}.
However, for more generic sensitivity sets or when one has more fine-grained approximation besides simple $l_2/l_1$-norm restrictions, Gaussian/Laplace may fail to capture the privacy gain from those additional constraints, as we will discuss in detail in Section \ref{sec: optimal noise}. 
Lack of powerful tools to handle more general sensitivity sets is a primary reason that $l_2/l_1$-norm clipping with Gaussian/Laplace noise becomes the almost default option in practice. 
Consequently, the curse of dimensionality is unavoidable in current privatization frameworks unless additional assumptions can be made or better randomization is known.

{Sampling is another popular randomization to enhance privacy guarantees, but its amplification power sharply drops when a smaller sampling rate is applied in current analysis.} With sampling, the chance that an individual gets selected in the processing decreases and thus the security parameter will be scaled by a factor roughly proportional to the sampling rate \cite{zhu2019poission,balle2018privacy,mironov2019r}, known as {\em privacy amplification}. 
Sampling also plays an important role in large-scale data processing for implementation efficiency. 
One classic application is Stochastic Gradient Descent (SGD), the workhorse of optimization and machine learning. 
Under mild assumptions, SGD could bring asymptotic improvement on gradient computation compared to full batch GD {when achieving the same convergence accuracy} \cite{boyd2004convex}. 
However, sampling itself cannot provide meaningful DP guarantees, and thus it has to be accompanied by noise mechanisms. 
Though processing subsampled data requires less noise, from a signal-to-noise-ratio (SNR) perspective, simultaneously, less data is applied to average out the DP noise. Under existing analysis frameworks, it is shown that the SNR could be worse with smaller sampling rate in many practical applications such as DP-SGD \cite{CCS2016, mcmahanlearning}. 
State-of-the-art works generally select very large sampling rate (>0.3) \cite{deepmind} with massive overhead to produce a better utility-privacy tradeoff. Such a conflict between privacy and efficiency remains a challenge.

Therefore, in order to efficiently achieve the optimal or near-optimal utility-privacy tradeoff, three fundamental questions concerning (high-dimensional) sensitivity geometry need to be addressed.

a). \textbf{{What is a proper clipping method to efficiently approximate the sensitivity set of practical high-dimensional data processing?}} 
{While a sufficiently large $l_2/l_1$-norm ball can encompass arbitrary bounded sensitivity sets, such an isotropic clipping approach can be loose and costly in practice, particularly when the power of processed output is not uniformly distributed across the entire space. 
Ideally, we aim to develop clipping methods that rely on a few simple yet stable statistics/features, allowing them to be broadly applicable in practical data processing, while accurately capturing the dominant part of the actual sensitivity set.}

b). \textbf{{How can we overcome the curse of dimensionality?}} {As larger models are employed, especially in the development of deep learning {\cite{openai2023gpt4}}, noise scaling with output dimensionality $d$ poses a significant challenge for DP applications. 
Given the impossibility results presented in {\cite{bassily2014private}} and {\cite{laplaceoptimal}}, where the curse of dimensionality is unavoidable for $l_2/l_1$-norm clipping, a critical question is to identify the form of sensitivity sets for which the noise bound can be independent of $d$. Moreover, we need to explore the feasibility of the corresponding clipping methods that facilitate dimensionality-free noise, while being practically applicable.}

c). \textbf{{How can we design randomization methods that align with sensitivity geometry and have minimal implementation overhead?}} {Ideally, we seek optimal randomization techniques that perturb the processing minimally, while fitting its underlying geometry. Furthermore, to achieve a meaningful privacy-utility tradeoff, we require efficient implementation, which includes simple clipping and noise generation procedures, while allowing for low sampling rates to be used.}

\vspace{-0.1 in }
\subsection*{{Contributions and Paper Organization}}
In this paper, we set out to answer the above three questions and tackle the challenges of private high-dimensional processing both in theory and practice. Our contributions and the remaining contents are summarized as follows. 

\begin{figure}[t]
  \centering
  \includegraphics[width=0.99 \linewidth]{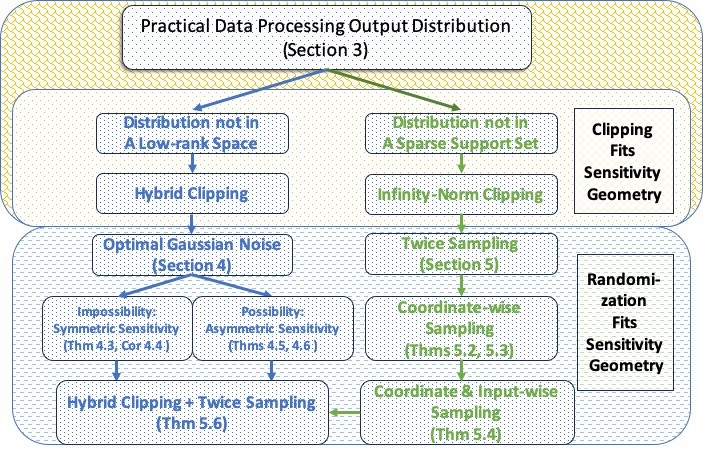}
\vspace{-0.1 in}
\caption{{Workflow of Paper Organization}}
\vspace{-0.35 in}
\label{Fig. workflow}.
\end{figure}  

a). {In Section {\ref{sec: warmup}}, we commence with a preliminary empirical study on the statistical features of practical data processing. We illustrate this using examples of biological gene data and gradients of neural networks (ResNet22) on CIFAR10 image samples. In fact, the distributions of practical high-dimensional processing are more intricate than expected, defying simple categorization or description based on sparsity or low rankness. 
Instead, we observe the existence of a principal subspace wherein the distributions are concentrated, while the power in the residue subspace is also non-negligible. 
To capture this property, where the power of processed output distribution is not uniform, we propose \textbf{smooth hybrid clipping}, involving multiple subspace embeddings alongside an additional $l_{\infty}$-norm restriction. 
Specifically, we experimentally demonstrate that proper $l_{\infty}$-norm clipping causes negligible changes in many complex high-dimensional processing pipelines, and the hybrid clipping can closely approximate the distribution geometry while introducing only small clipping bias.}

b). {In Section {\ref{sec: optimal noise}}, we delve into the question of when the curse of dimensionality is unavoidable and when it can be circumvented. With a specific focus on the generic Gaussian mechanism within the context of RDP, we introduce novel methods to prove optimality and characterize non-asymptotic optimal noise for a broad class of high-dimensional sensitivity sets. On one hand, we present a negative result, revealing that for a class of symmetric sensitivity sets, the curse of dimensionality is unavoidable, and isotropic Gaussian noise already achieves optimality (Theorem {\ref{thm: signAndperm}}). These symmetric sets include arbitrary mixtures of $l_p$-norm balls (Corollary {\ref{cor: l_p mix}}), where a single $l_2$-ball is a special case. On the other hand, we characterize the optimal noise form for generic hybrid clipping, which may allocate different clipping budgets to different subspaces. Remarkably, we show that the scale of the optimal noise can be $O(1)$, without explicit dependence on either the dimension or the rank of the release space.}

c). {In Section {\ref{sec: twice-sampling}}, we further strengthen privacy amplification from sampling in the low-sampling rate regime, and enforce the randomness of sampling to fit the sensitivity geometry restricted by the $l_{\infty}$-norm. 
We present a more fine-grained algorithmic analysis and introduce \textbf{twice sampling}, a novel method that employs both input-wise and coordinate-wise Poisson sampling to enhance efficiency and privacy simultaneously. 
We provide rigorous closed-form RDP analysis, demonstrating that with the assistance of the $l_{\infty}$-norm restriction, twice sampling achieves asymptotic improvement in privacy amplification (Theorems {\ref{thm: privacy coordiante sampling}}-{\ref{thm: sampling twice}}). 
This advancement alleviates limitations on low sampling rates or small additive noise in standard sample-wise sampling to produce useful amplification in higher-order RDP, as explained in Section {\ref{sec: simulation sampling}}. 
As a result, this fundamental improvement enables practical high-dimensional processing to attain better utility-privacy tradeoffs with low overhead, utilizing a small set of subsampled data.}

{We provide simple experimental results on the applications of DP-SGD for training ResNet22 on CIFAR10 and SVHN datasets in Section {\ref{sec: additional exp}}, and show the advantage of hybrid clipping and twice sampling with much sharpened noise bound. 
Even in these small-scale examples, we improved the noise variance by almost an order of magnitude compared to that of standard DP-SGD (with only $l_2$-norm clipping and input-wise sampling) under the same setup. 
Consequently, our results allow running DP-SGD in low sampling rates but with competitive performance as that from state-of-the-art empirical results in {\cite{deepmind}}, which implements nearly full-batch gradient descent after extensive fine-tuning. Our method thus provides a significant efficiency advantage. We discuss related works in Section {\ref{sec: related-works}}. We conclude and discuss the limitations of hybrid clipping in Section {\ref{sec: conclusion}}.} {{Fig. \ref{Fig. workflow}} illustrates the paper organization}. Additional discussions on the construction of clipping for practical black-box machine learning can be found in 
Appendix \ref{sec: add discussion}. 


\vspace{- 0.05 in}
\section{Preliminaries}
\label{sec: backgroud}
\noindent \textbf{$(\epsilon, \delta)$-DP and Rényi-DP (RDP)}: We first formally define the two widely studied and applied DP variants and show their relationship. 
\vspace{- 0.2 in}
\begin{definition}[Differential Privacy \cite{mironov2019r}]
\label{def:DP}
Given a universe $\mathcal{X}^*$, we say that two datasets $X,X'\subseteq \mathcal{X}^*$ are adjacent, denoted as $X \sim X'$, if $X = X' \cup x$ or $X' = X \cup x$ for some additional datapoint $x \in \mathcal{X}$. A randomized algorithm $\mathcal{M}$ is said to be $(\epsilon,\delta)$-differentially-private (DP) if for any pair of adjacent datasets $X,X'$ and any event set $O$ in the output domain of $\mathcal{M}$, it holds that
\begin{equation*}
\vspace{- 0.05 in}
	\mathbb{P}(\mathcal{M}(X)\in O)\leq e^{\epsilon}\cdot \mathbb{P}(\mathcal{M}(X')\in O)+\delta.
\end{equation*}
\end{definition}

\begin{definition}[Rényi Differential Privacy \cite{renyi}]
\label{def: RenyiDP}
A randomized algorithm $\mathcal{M}$ satisfies $(\alpha, \epsilon(\alpha))$-Rényi Differential Privacy (RDP), $\alpha >1$, if for any pair of adjacent datasets $X \sim X'$, $\mathsf{D}_{\alpha}(\mathbb{P}_{\mathcal{M}(X)} \| \mathbb{P}_{\mathcal{M}(X')} ) \leq \epsilon(\alpha).$ Here, $\mathbb{P}_{\mathcal{M}(X)}$ and $\mathbb{P}_{\mathcal{M}(X')}$ represent the distributions of $\mathcal{M}(X)$ and $\mathcal{M}(X')$, respectively, and 
 \begin{equation}
  \vspace{- 0.05 in}
 \label{Rényi-diver}
 \mathsf{D}_{\alpha}(\mathsf{P} \| \mathsf{Q}) = \frac{1}{\alpha-1} \log \int \mathsf{q}(o) (\frac{\mathsf{p}(o)}{\mathsf{q}(o)})^{\alpha} ~d o,
 \end{equation}
represents $\alpha$-Rényi Divergence between two distributions $\mathsf{P}$ and $\mathsf{Q}$ whose density functions are $\mathsf{p}$ and $\mathsf{q}$, respectively. 
\end{definition}
\vspace{- 0.05 in}
RDP can be used to elegantly handle the composition of privacy leakage.
The conversion from RDP to $(\epsilon, \delta)$-DP is characterized in the following lemma. 
\vspace{-0.05 in}
\begin{lemma}[Advanced Composition via RDP and Conversion \cite{renyi}]
\label{lemma: composition-RDP}
For any $\alpha>1$ and $\epsilon>0$, the class of $(\alpha, \epsilon(\alpha))$-RDP mechanisms satisfies $(\tilde{\epsilon}, \tilde{\delta})$-differential privacy under $T$-fold adaptive composition for any $\tilde{\epsilon}$ and $\tilde{\delta}$ such that
\[
\tilde{\epsilon} = T\epsilon(\alpha) - \log(\tilde{\delta})/(\alpha-1).
\]
\end{lemma}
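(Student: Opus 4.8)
The plan is to factor the statement into two classical ingredients and then chain them. \textbf{Step 1 (RDP composition):} first I would show that the $T$-fold adaptive composition of mechanisms that are each $(\alpha,\epsilon(\alpha))$-RDP is $(\alpha, T\epsilon(\alpha))$-RDP. \textbf{Step 2 (RDP-to-DP conversion):} then I would show that any $(\alpha,\epsilon')$-RDP mechanism is $(\tilde{\epsilon},\tilde{\delta})$-DP whenever $\tilde{\epsilon} = \epsilon' - \log(\tilde{\delta})/(\alpha-1)$. Taking $\epsilon' = T\epsilon(\alpha)$ in Step 2 immediately gives the stated relation $\tilde{\epsilon} = T\epsilon(\alpha) - \log(\tilde{\delta})/(\alpha-1)$.

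For Step 1 I would induct on $T$. Fix adjacent datasets $X\sim X'$ and let $(Y_1,\dots,Y_T)$ be the composed output, where the mechanism generating $Y_t$ is allowed to depend on $(Y_1,\dots,Y_{t-1})$. Writing the joint densities $\mathsf{p},\mathsf{q}$ of the outputs on $X,X'$ as products of conditionals, $\mathsf{p}(y_{1:T}) = \mathsf{p}(y_{1:T-1})\,\mathsf{p}_T(y_T\mid y_{1:T-1})$ and likewise for $\mathsf{q}$, I would peel off the last coordinate inside $\int \mathsf{q}\,(\mathsf{p}/\mathsf{q})^{\alpha}$. Conditioned on any prefix $y_{1:T-1}$, the last mechanism is still $(\alpha,\epsilon(\alpha))$-RDP on the induced adjacent pair, so $\int \mathsf{q}_T(y_T\mid y_{1:T-1})(\mathsf{p}_T/\mathsf{q}_T)^{\alpha}\,dy_T \le e^{(\alpha-1)\epsilon(\alpha)}$ pointwise in the prefix; substituting this bound and applying the induction hypothesis to the first $T-1$ coordinates yields $\int \mathsf{q}\,(\mathsf{p}/\mathsf{q})^{\alpha} \le e^{(\alpha-1)T\epsilon(\alpha)}$, i.e. $\mathsf{D}_{\alpha}(\mathsf{P}\|\mathsf{Q}) \le T\epsilon(\alpha)$. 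The same argument with $X,X'$ swapped bounds $\mathsf{D}_{\alpha}(\mathsf{Q}\|\mathsf{P})$, covering the symmetrized divergence referenced around Definition~\ref{def: RenyiDP}.

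For Step 2 I would use a tail bound on the privacy-loss variable $L = \log(\mathsf{p}(Z)/\mathsf{q}(Z))$ for $Z$ drawn from $\mathsf{P} = \mathbb{P}_{\mathcal{M}(X)}$, where $\mathsf{q}$ is the density of $\mathsf{Q} = \mathbb{P}_{\mathcal{M}(X')}$. For any output event $O$, split $\mathbb{P}(\mathcal{M}(X)\in O)$ into the contribution from $\{L\le \tilde{\epsilon}\}$, which is at most $e^{\tilde{\epsilon}}\,\mathbb{P}(\mathcal{M}(X')\in O)$ since $\mathsf{p}(z)\le e^{\tilde{\epsilon}}\mathsf{q}(z)$ there, plus the contribution from $\{L> \tilde{\epsilon}\}$, which is at most $\mathbb{P}(L>\tilde{\epsilon})$. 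A Markov bound on the $(\alpha-1)$-th exponential moment gives $\mathbb{P}(L>\tilde{\epsilon}) \le e^{-(\alpha-1)\tilde{\epsilon}}\,\mathbb{E}_{\mathsf{P}}[e^{(\alpha-1)L}]$, and the identity $\mathbb{E}_{\mathsf{P}}[e^{(\alpha-1)L}] = \int \mathsf{p}\,(\mathsf{p}/\mathsf{q})^{\alpha-1} = \int \mathsf{q}\,(\mathsf{p}/\mathsf{q})^{\alpha} = e^{(\alpha-1)\mathsf{D}_{\alpha}(\mathsf{P}\|\mathsf{Q})} \le e^{(\alpha-1)\epsilon'}$ turns this into $\mathbb{P}(L>\tilde{\epsilon}) \le e^{-(\alpha-1)(\tilde{\epsilon}-\epsilon')}$. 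Choosing $\tilde{\epsilon} = \epsilon' - \log(\tilde{\delta})/(\alpha-1)$ makes this equal to $\tilde{\delta}$, and adding the two contributions yields $\mathbb{P}(\mathcal{M}(X)\in O) \le e^{\tilde{\epsilon}}\,\mathbb{P}(\mathcal{M}(X')\in O) + \tilde{\delta}$, i.e. $(\tilde{\epsilon},\tilde{\delta})$-DP.

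The main obstacle is the bookkeeping in Step 1: one must carefully justify that, conditioned on the observed prefix, the adaptively chosen next mechanism remains $(\alpha,\epsilon(\alpha))$-RDP on the relevant adjacent pair, and that the iterated integrals can be reorganized (Tonelli applies since all integrands are nonnegative). Step 2 is comparatively routine once the exponential-moment identity linking $L$ to $\mathsf{D}_{\alpha}$ is in place; the only points of care there are the direction of the divergence (which matches the one-sided Definition~\ref{def: RenyiDP}, the symmetrized version being handled by repeating the argument with $X,X'$ swapped) and checking that the prescribed $\tilde{\epsilon}$ is positive, which holds whenever $\tilde{\delta}$ is small enough relative to $e^{-(\alpha-1)\epsilon'}$.
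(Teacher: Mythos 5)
Your proposal is correct, but note that the paper offers no proof of this lemma at all: it is imported verbatim from \cite{renyi}, so the only meaningful comparison is with the standard argument in that reference. Your Step~1 is exactly Mironov's composition proof (peel off the last conditional, bound the inner integral by $e^{(\alpha-1)\epsilon(\alpha)}$ uniformly in the prefix, induct), and your handling of adaptivity and of both divergence directions is the right bookkeeping. Your Step~2 departs slightly from the cited proof: Mironov converts RDP to $(\epsilon,\delta)$-DP via a H\"older/probability-preservation inequality of the form $\mathbb{P}(\mathcal{M}(X)\in O)\le \big(e^{(\alpha-1)\epsilon'}\,\mathbb{P}(\mathcal{M}(X')\in O)\big)^{(\alpha-1)/\alpha}$ followed by a case split, whereas you bound the tail of the privacy-loss variable $L$ by Markov applied to $\mathbb{E}_{\mathsf{P}}[e^{(\alpha-1)L}]=e^{(\alpha-1)\mathsf{D}_\alpha(\mathsf{P}\|\mathsf{Q})}$; both routes give exactly the stated $\tilde{\epsilon}=T\epsilon(\alpha)-\log(\tilde{\delta})/(\alpha-1)$, and the tail-bound route is arguably more transparent and is the one used in concentrated-DP treatments. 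One small remark: your worry about $\tilde{\epsilon}>0$ is vacuous here, since $\epsilon>0$ and $\tilde{\delta}<1$ already force $\tilde{\epsilon}>0$; no additional smallness condition on $\tilde{\delta}$ is needed.
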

\vspace{- 0.05 in}
\noindent For simplicity, in this paper we focus on RDP with integer $\alpha$. To randomize a deterministic algorithm $\mathcal{F}$ to produce the required security parameters, we need to characterize the sensitivity set and especially the dominating sensitivity element(s). 
\vspace{- 0.05 in}
\begin{definition}[Sensitivity Set] For a deterministic function $\mathcal{F}$, its sensitivity set $\mathsf{S}$ is defined as 
$$ \mathsf{S} = \{\bm{s} = \mathcal{F}(X) - \mathcal{F}(X'): X \sim X'\},$$
for any adjacent datasets $X$ and $X'$. 
\label{def: sensitivity set}
\end{definition}
\vspace{- 0.05 in}
For a given DP definition and a randomization strategy, the dominating sensitivity is the element in $\mathsf{S}$ that causes the maximal privacy loss. 
In RDP, if the sensitivity set is an $l_2$-norm ball of radius $c_2$, the Gaussian mechanism via adding isotropic noise from $\mathcal{N}(\bm{0},\sigma^2\cdot \bm{I})$ is known to produce $(\alpha, \alpha c^2_2/(2\sigma^2))$-RDP \cite{renyi}.  In the following, we formally define the clipping operator, essentially a projection. 

\begin{definition}[Clipping] For any vector $\bm{v}$ and any given set $\mathsf{S}$, a clipping operator $\mathcal{CP}$ on $\bm{v}$ with respect to $\mathsf{S}$ and a distance metric $\rho$ is defined as 
$\mathcal{CP}(\bm{v}, \mathsf{S}, \rho) = \arg \inf_{\bm{s} \in \mathsf{S}} \rho(\bm{v}, \bm{s}).$
\end{definition}
\vspace{- 0.05 in}
In this paper, the selection of $\rho$ is not our main focus, since we basically assume via clipping, the clipped output is within some given set $\mathsf{S}$. 
Thus, we will often use $\mathcal{CP}(\cdot)$ to denote some generic clipping operator. In particular, for the classic $l_2$-norm clipping with parameter $c_2$, it can be defined as $\mathcal{CP}(\bm{v}, c_2) = \bm{v}\cdot \min\{1, \frac{c_2}{\|\bm{v}\|_2}\}$, i.e., a projection to an $l_2$-ball $\mathcal{B}_2$ of radius $c_2$ under $\|\cdot\|_2$ metric distance, where we use $\|\cdot\|_p$ to denote the $l_p$-norm. 



\noindent \textbf{DP-SGD and Private Machine Learning}: 
In a supervised machine learning task, we are given a dataset $\{(x_i, y_i), i=1,2,\cdots,n\}$, where $x_i$ and $y_i$ represents feature and label, respectively, and a model $\mathsf{M}(x,w)$ with parameter $w$ to learn.
The objective optimization problem is 
\begin{equation*}
\vspace{-0.05 in }
    \min_{w} f(w) = \frac{1}{n}\cdot \sum_{i=1}^n l(\mathsf{M}(w,x_i),y_i),
\end{equation*}
where $l(\cdot,\cdot)$ is some loss function. 
DP-SGD can be described as follows. At the $t$-th iteration, with previous iterate $w^{(t-1)}$, we implement input-wise Poisson sampling with parameter $q$ to select a batch $S^{(t)}$ of samples from the entire set.
For each sample $(x_i,y_i)$, we calculate the per-sample gradient $\nabla l(\mathcal{N}(w,x_i),y_i)$. To ensure bounded sensitivity, most existing DP-SGD works usually adopt $l_2$-norm clipping and given some stepsize $\eta$, a noisy SGD is implemented as
\begin{equation}
         w^{(t)} = w^{(t-1)} - \eta\big(\sum_{(x_i,y_i) \in S^{(t)}}  \mathcal{CP}\big(\nabla l(\mathcal{N}(w,x_i),y_i)\big) + e^{(t)}\big),
    \label{DP-SGD}
    \vspace{-0.05 in}
\end{equation}
where $e^{(t)}$ is some Gaussian noise. Running for $T$ iterations with a total privacy budget $(\epsilon, \delta)$, one may select $e^{(t)} \sim \mathcal{N}(0,\sigma^2\cdot \bm{I}_d)$ where $\sigma = O({\sqrt{Td\log(1/\delta)}}/{\epsilon})$ by the composition bound \cite{CCS2016}. 
\vspace{- 0.05 in}
\section{A Warm-Up: Geometry of Practical High-dimensional Data}
\label{sec: warmup}
Before formally presenting our results, we aim to provide more intuition about the geometry of practical high-dimensional data. 
Two examples are presented: one involving biometric gene data and the other focusing on the gradients of a neural network, which serve as a fundamental component in deep learning. In each example, we divide the entire dataset into two equal parts. 
One half of the samples is considered public, while the other half is assumed private. 
{It is worth mentioning that this setup is designed to model the scenarios where one has access to public data or possesses prior knowledge about the sensitive data to process. The goal is to construct efficient clipping methods that can accurately represent the processed output distribution. At this stage, we do not consider the privacy implications.}

{Another critical motivation behind these experiments is to evaluate the performance of classic dimension-reduction clipping methods, such as sparsification {\cite{Luo2021sparse}}, {\cite{zhang2021sparse}}, {\cite{Zhu2021sparse}} (preserving only significant coordinates) or low-rank embedding {\cite{embedyu2021}} (projection to a subspace). From a theoretical perspective, these strategies can artificially alleviate the curse of dimensionality, as the scale of noise is now determined by the Hamming weight after sparsification or the rank of embedding. However, their corresponding clipping bias remains largely unclear in practice. In particular, if these approaches fail to capture practical output distributions, one crucial question we have to answer is: what other features can we reliably learn (from public data) to design improved clipping techniques?}

\vspace{- 0.1 in}
\subsection{Biometric Gene Data}

\label{sec: bio data}
We adopt the {\em Gene Expression Cancer RNA-Seq Dataset} from the UCI Machine Learning Repository\footnote{\url{https://archive.ics.uci.edu/ml/datasets/gene+expression+cancer+RNA-Seq}}, which contains $800$ samples, each of dimension $d=20,531$. 
As mentioned above, we evenly split the data into two parts and each sample is normalized to $1$ in $l_2$-norm. 
For private data, in Fig. \ref{Fig. gene statistics}(a), we plot the absolute value of coordinate-wise {\em coefficient of variation}, which is the ratio between the standard deviation and the mean of each coordinate. 
Higher coefficient of variation suggests greater dispersion.  
We can see that the individual private sample is of heavy diversity and most coordinates bear large dispersion. 

We then compute the average of the magnitude of each coordinate from public samples as a measurement of significance, and sort their indices in a descending order of the significance score. 
In Fig. \ref{Fig. gene statistics}(b), we consider a sparsification method where we only preserve the first $\beta\%$ coordinates of largest significance for each private sample. 
The x-axis represents the quantile $\beta\%$ and the y-axis records the average of the $l_2$-norm of the residual component. 
Here, we define the residual component as the remaining $(1-\beta\%)$ less significant coordinates. 
Still, such approximation error does not drop sharply as $\beta$ increases, which means that the data distribution does not enjoy a strongly concentrated sparsity.

We then consider implementing more involved Principal Component Analysis (PCA) \cite{abdi2010principal} on the public data. 
In Fig. \ref{Fig. gene statistics} (c), we consider projecting each private sample to the principal subspace spanned by the eigenvectors of the $k$ largest eigenvalues from public data. 
The x-axis represents the $k$ for $k=1,2,\cdots,400$, given that we only have 400 public samples, and the y-axis records the $l_2$-norm of the residual component. 
Compared to Fig. \ref{Fig. gene statistics}(b), low-rank embedding produces a better approximation.
However, the residual remains as a non-negligible component. 
Finally, we record the mean and the variance of the $l_2$-norm of private samples' projection into the $k$-th principal space (spanned by the largest $k$ eigenvectors), for $k=1,2,\cdots, 400$, in Fig. \ref{Fig. gene statistics}(d), and per-sample $l_{\infty}$-norm in Fig. \ref{Fig. gene statistics}(e). 
Interestingly, the $l_2$-norm of the main components is a much more stable statistic, whose standard deviation is only about $0.015$. 
Moreover, it is noted that the $l_{\infty}$-norm is much smaller than the $l_2$-norm and actually mostly smaller than $0.02$. 
This is not surprising, given our observation in Fig. \ref{Fig. gene statistics}(b). 
The data is not of strong sparsity and its power is shared by many coordinates.   
\begin{figure*}[t] 
    \subfigure[Coordinate-wise Coefficient of Variation]
    {\includegraphics[width=.19\linewidth]{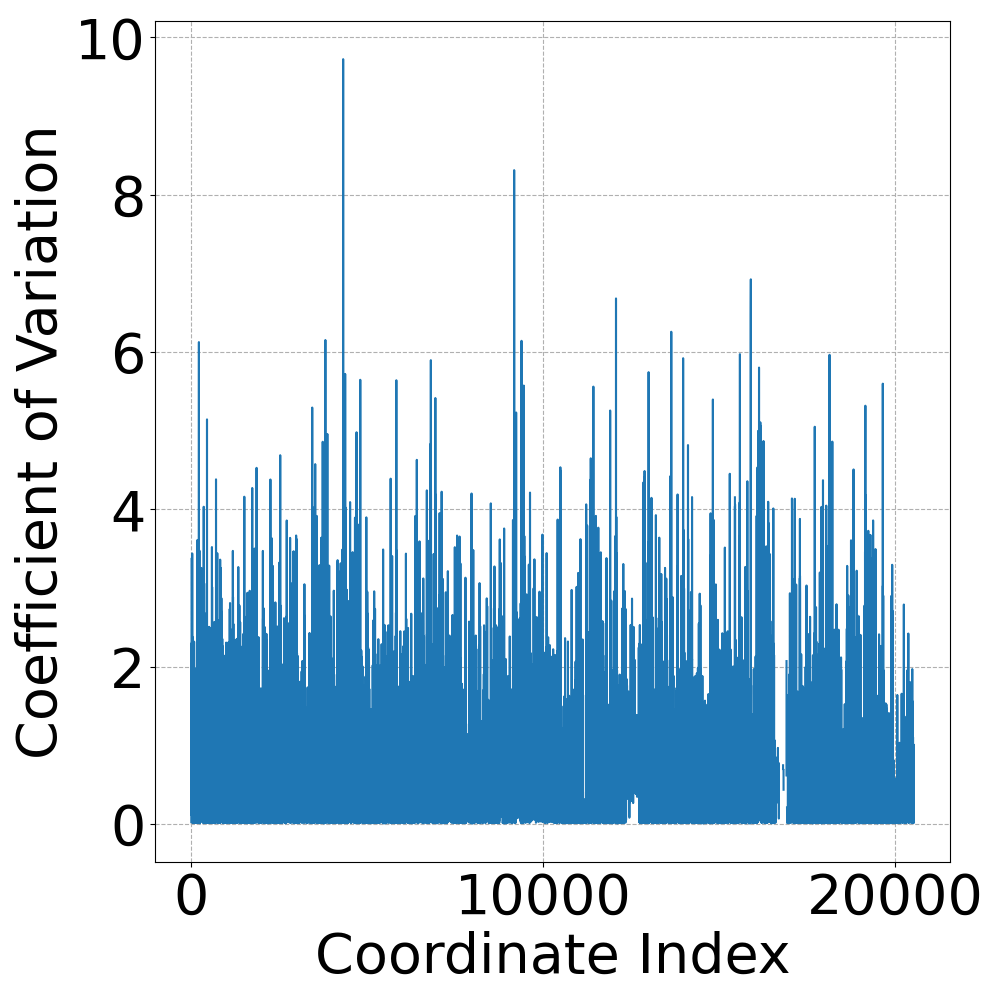}
    \label{Fig: bio_a}}
    \subfigure[Sparsification]
    {\includegraphics[width=.19\linewidth]{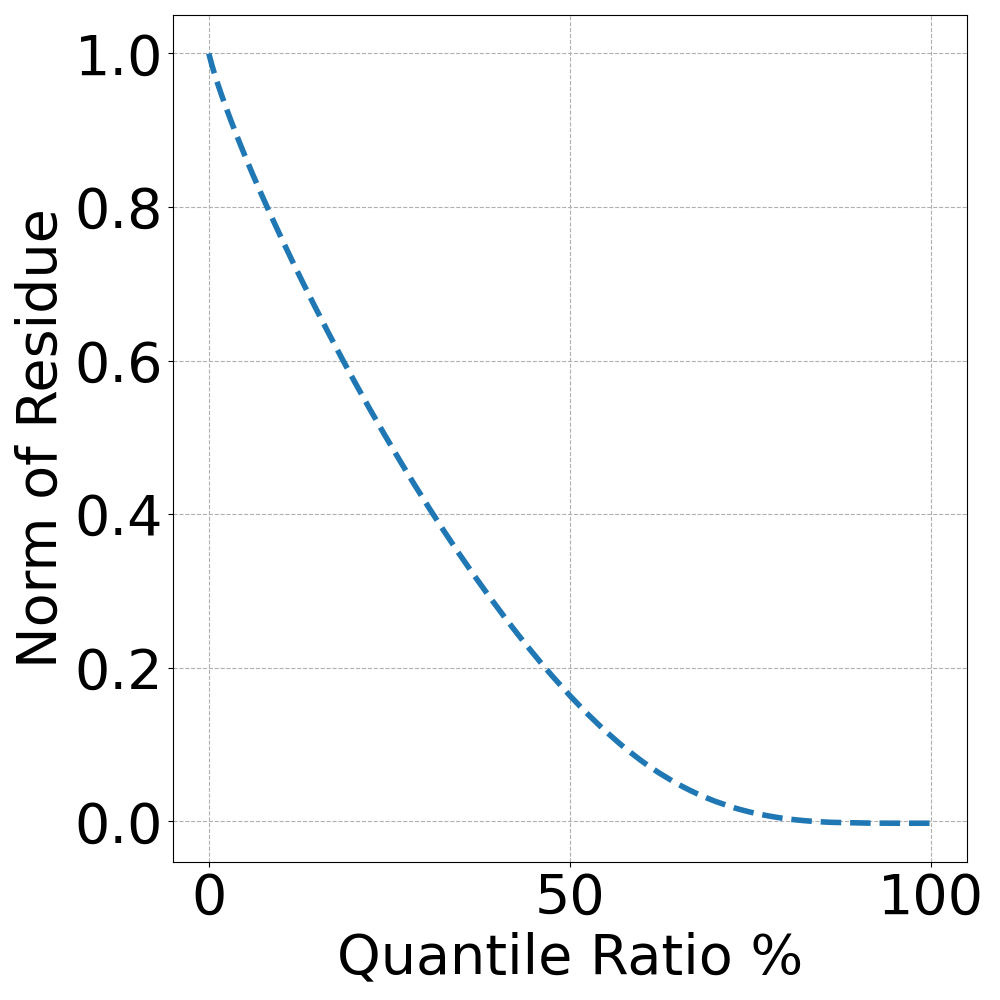}
    \label{Fig: bio_b}}
    \subfigure[Low-rank Embedding]
    {\includegraphics[width=.19\linewidth]{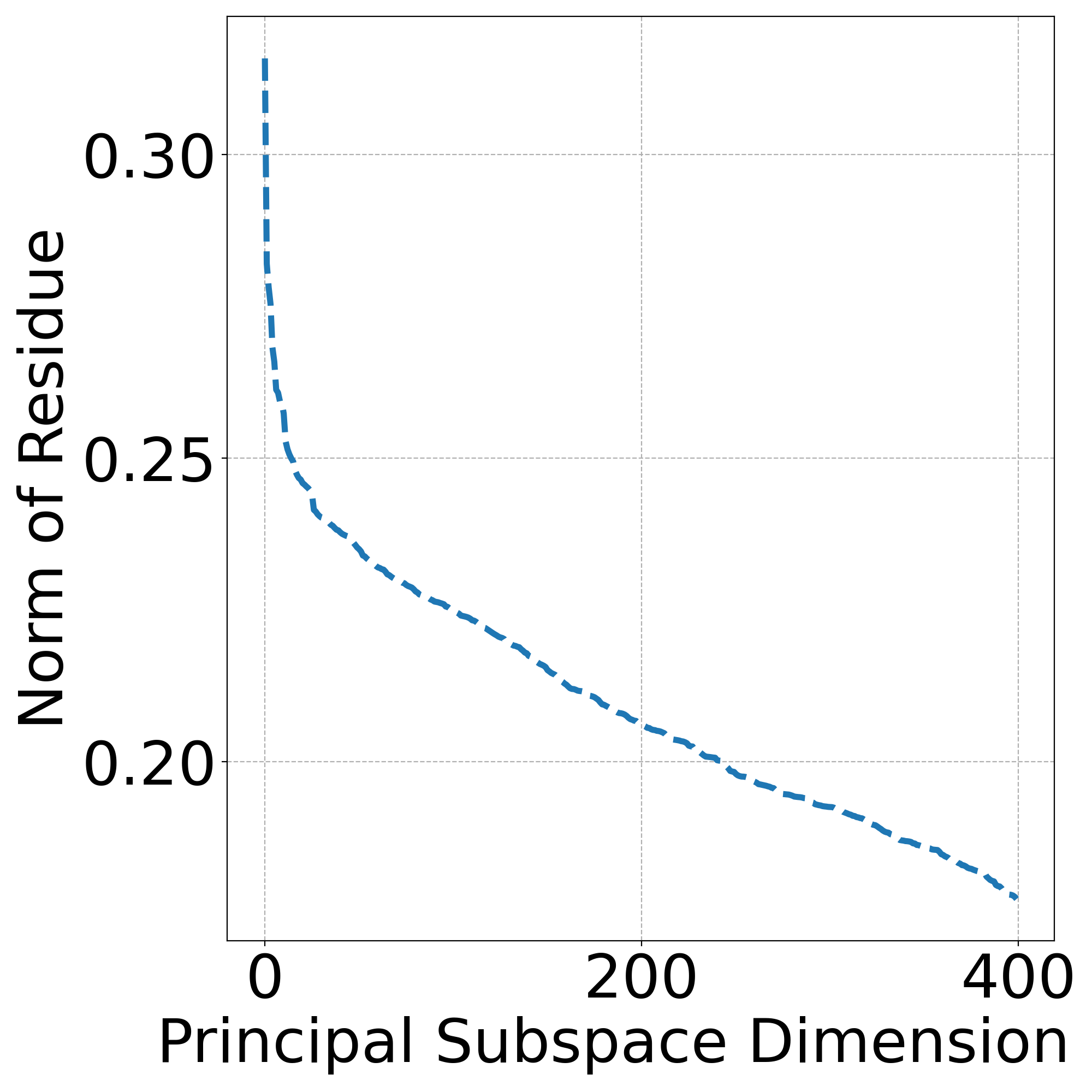}
    \label{Fig: bio_c}}
    \subfigure[$l_2$ Norm of Principal Component]
    {\includegraphics[width=.19\linewidth]{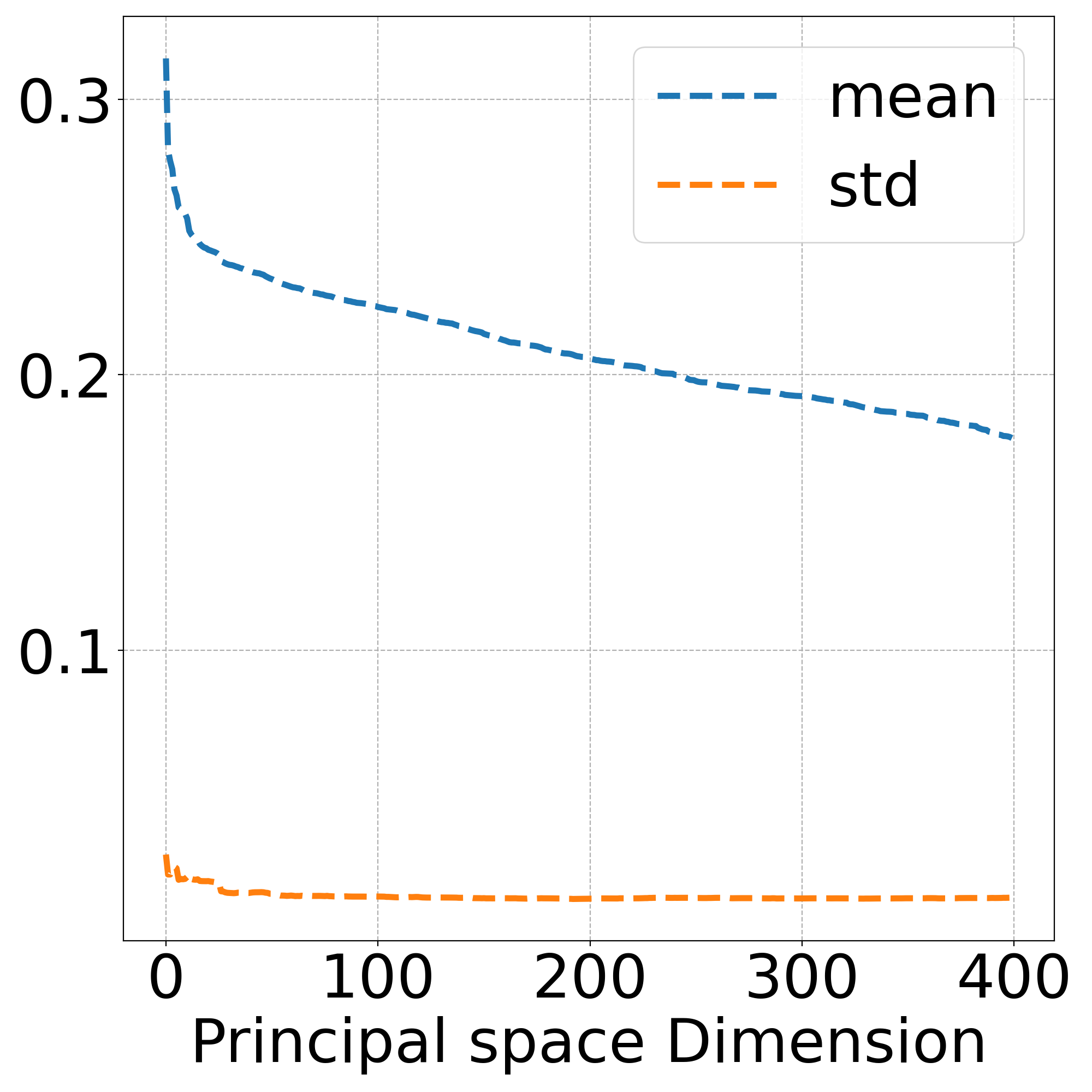}
    \label{Fig: bio_d}}
    \subfigure[$l_{\infty}$ Norm]
    {\includegraphics[width=.19\linewidth]{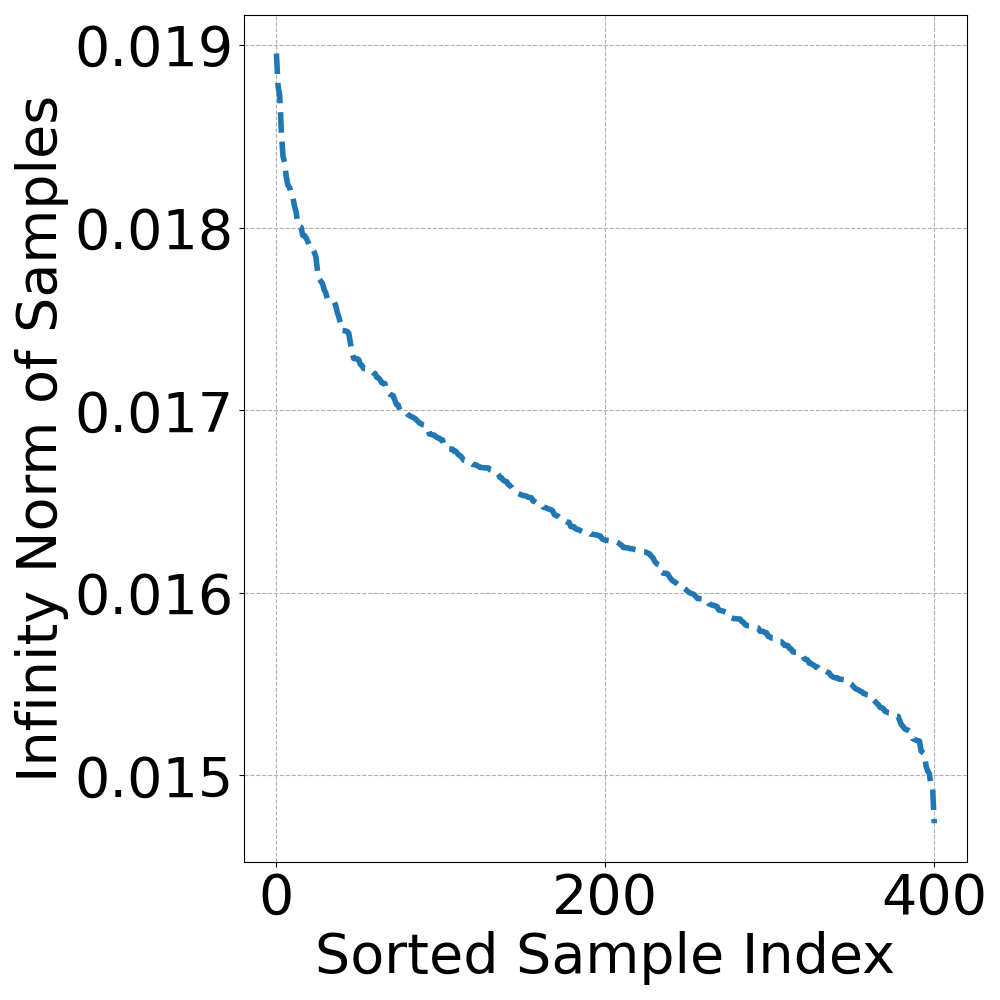}
    \label{Fig: bio_e}}\
    \vspace{-0.2 in}
    \caption{Statistics of High-dimensional Gene Data with Sparsification and Low-rank Approximation}
    \vspace{-0.2 in}
    \label{Fig. gene statistics} 
\end{figure*}
\begin{figure*}[t] 
    \subfigure[Coordinate-wise Coefficient of Variation]
    {\includegraphics[width=.19\linewidth]{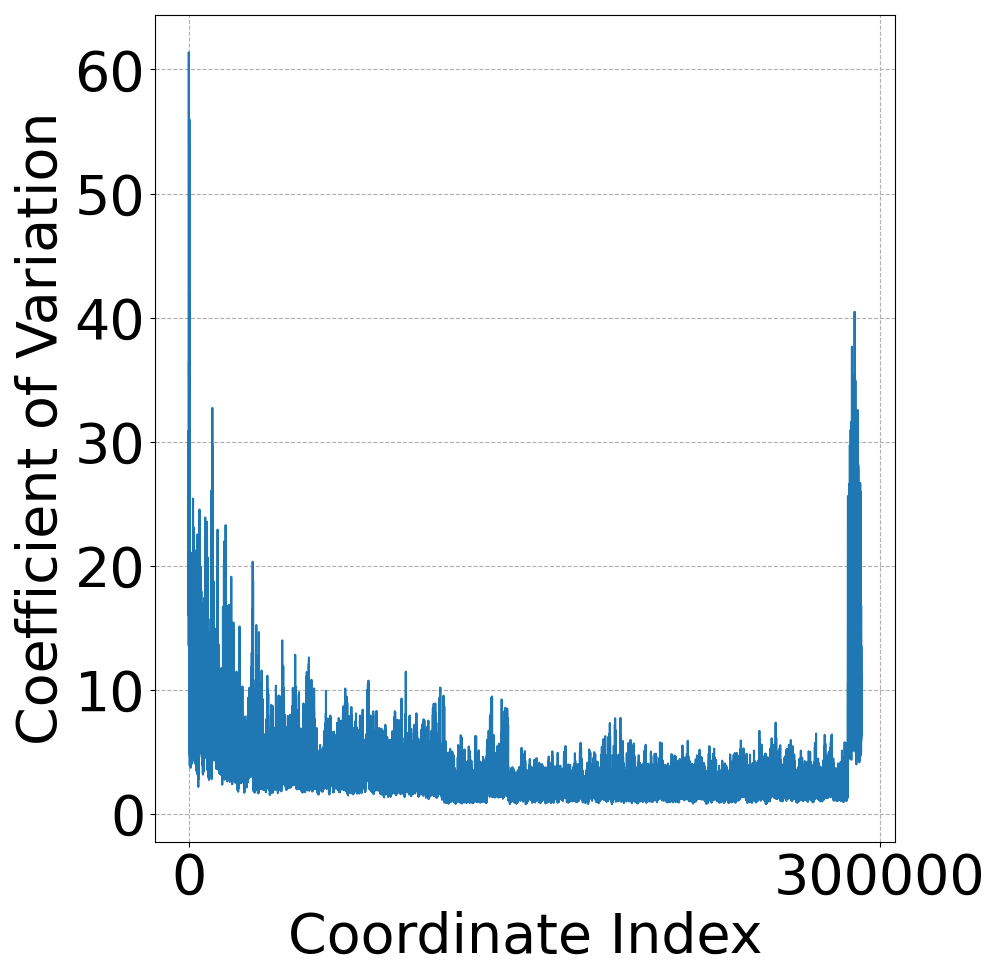}
    \label{Fig: grad_a}}
    \subfigure[Sparsification]
    {\includegraphics[width=.19\linewidth]{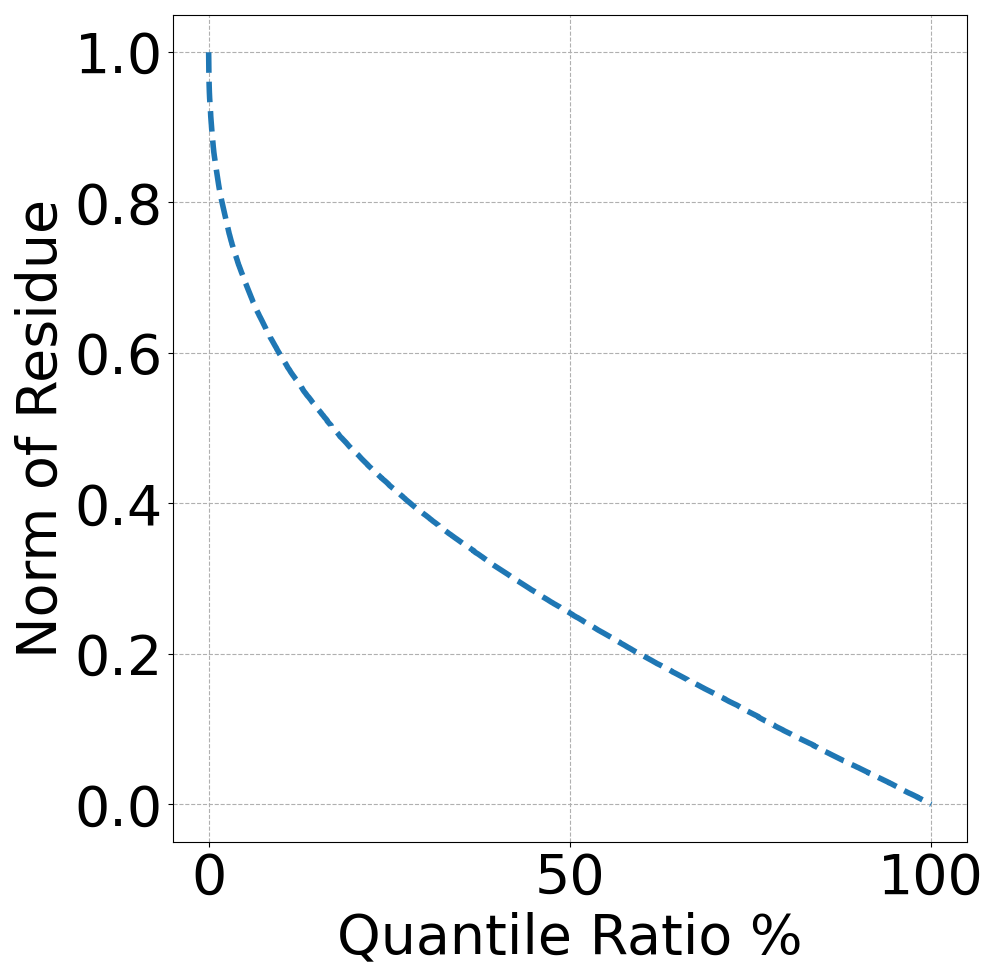}
    \label{Fig: grad_b}}
    \subfigure[Low-rank Embedding]
    {\includegraphics[width=.19\linewidth]{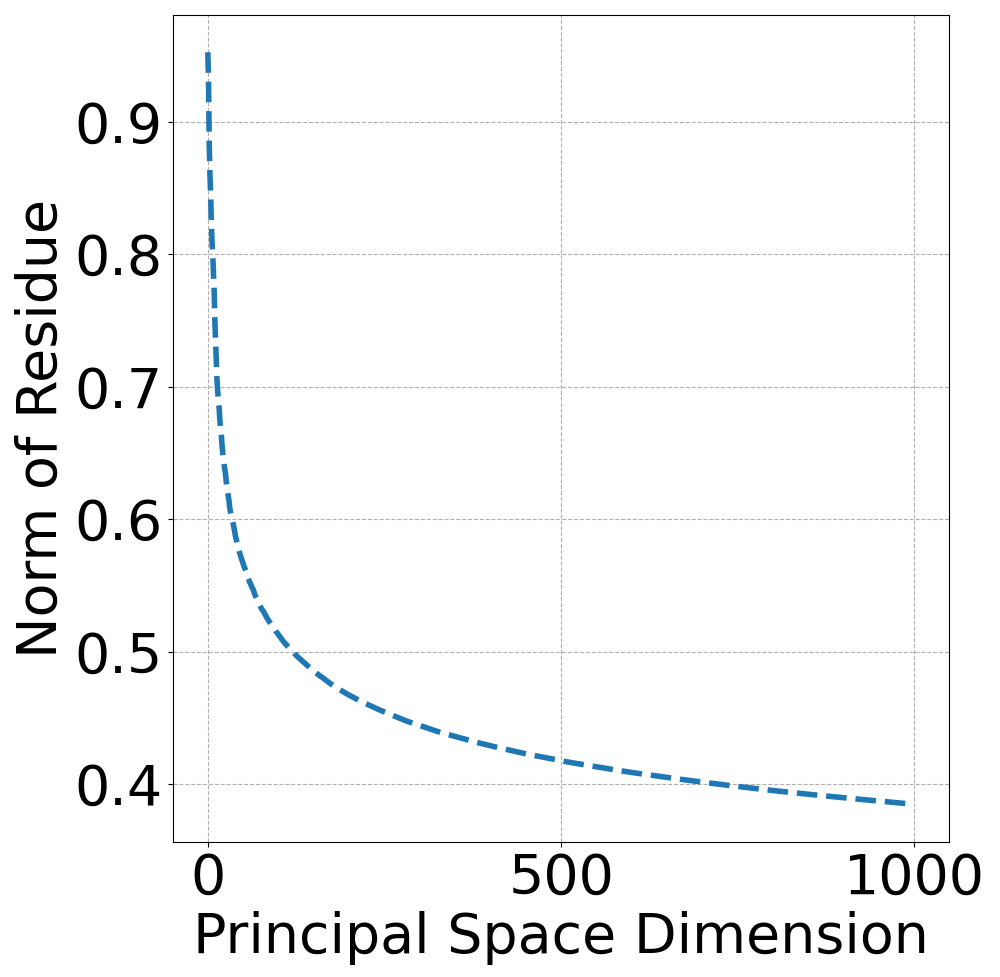}
    \label{Fig: grad_c}}
    \subfigure[$l_2$ Norm of Principal Component]
    {\includegraphics[width=.19\linewidth]{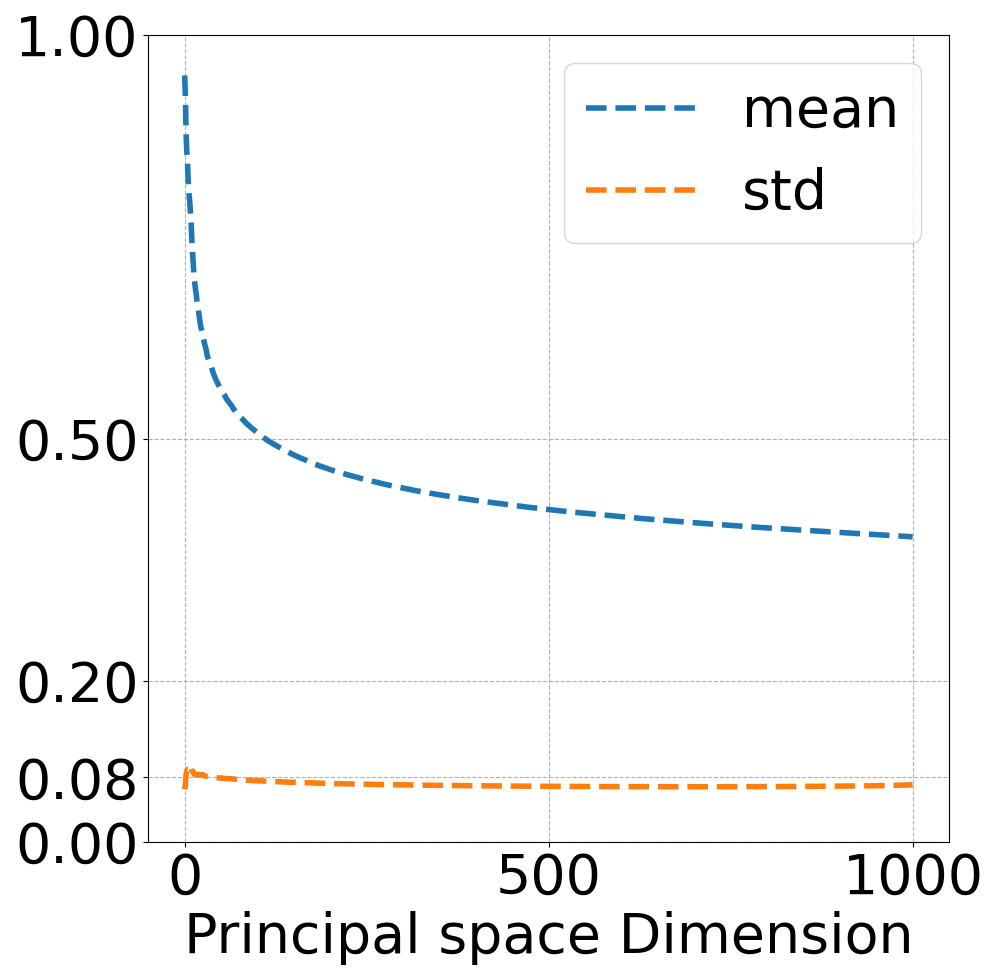}
    \label{Fig: grad_d}}
    \subfigure[$l_{\infty}$ Norm]
    {\includegraphics[width=.19\linewidth]{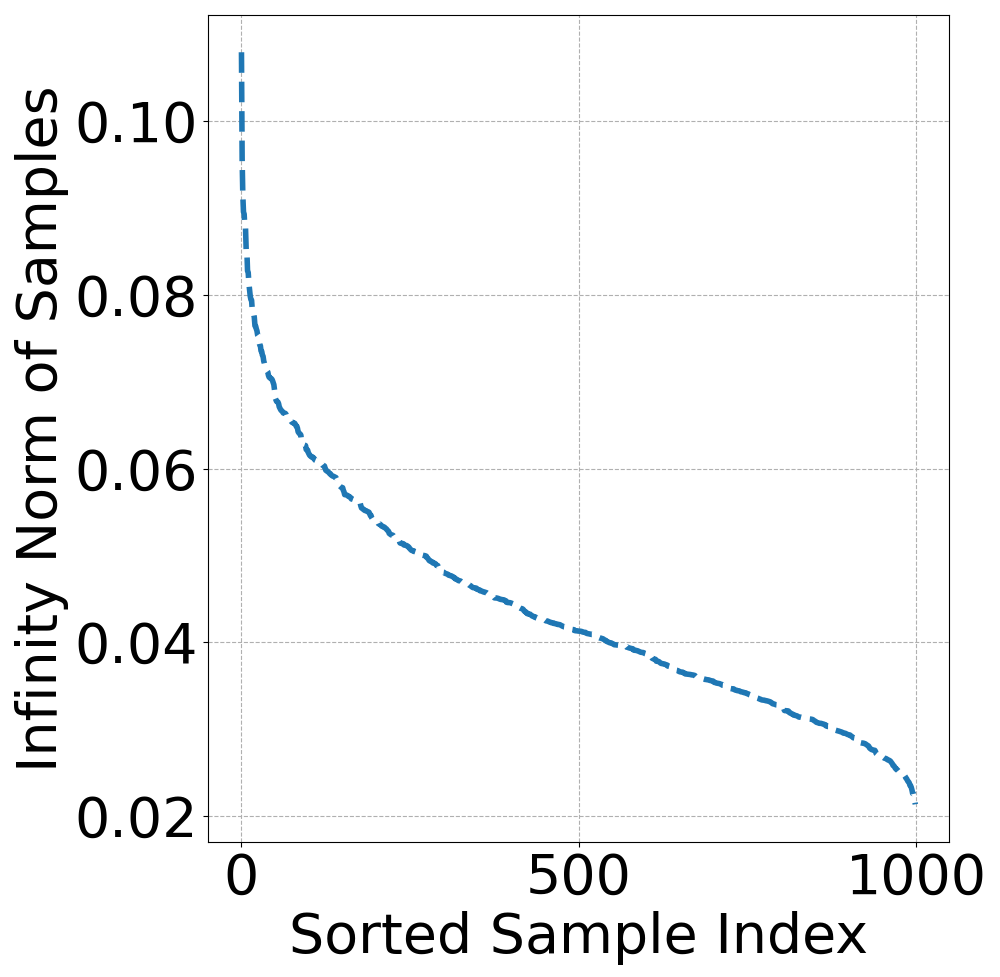}
    \label{Fig: grad_e}}\
    \vspace{-0.2 in}
    \caption{Statistics of Per-sample Gradient of ResNet22 on CIFAR 10 with Sparsification and Low-rank Approximation}
    \vspace{-0.15 in}
    \label{Fig. grad statistics} 
\end{figure*}

\vspace{- 0.1 in}
\subsection{Stochastic Gradient in Deep Learning}
\label{sec: gradient data}
We further study the stochastic gradient of ResNet22 \cite{resnet} on CIFAR10 \cite{cifar}, a benchmark dataset for an object recognition task in image processing. 
The number of parameters in ResNet22 is $291, 898$ which is also the dimension $d$ of the gradient. 
We select 2,000 samples from the CIFAR10 set and similarly split them into the public and private subsets, each of 1,000 samples. 
We run gradient descent using the private samples and record the private per-sample gradients in the 100th iteration, where the gradient descent has already entered a stable convergence phase. 
We also evaluate the gradients of the public data at the same iteration. 
The 2,000 private and public  per-sample gradients are clipped to 1 in $l_2$-norm. 
We conduct the same experiments as described in Section \ref{sec: bio data} and the results are shown in Fig. \ref{Fig. grad statistics}. 
In this more complicated and higher-dimensional example, the dispersion of per-sample gradients is even more significant with a larger residue component in both sparsification and low-rank approximation. However, stochastic gradient also shares very similar properties to gene data, where from Fig. \ref{Fig. grad statistics}(d,e), the norm of the principal component is stable with standard deviation of about $0.08$ and mostly the per-sample gradient's $l_{\infty}$-norm is smaller than $0.1$. This suggests that putting an additional $l_{\infty}$-norm with parameter $c_{\infty} \geq 0.1$ on already clipped per-sample gradient in $l_2$-norm to $1$ generally does {\em not} cause any change to utility loss. 

\vspace{- 0.1 in}
\subsection{A Short Summary}
In some simpler cases where the data is {\em concentrated}, and of strong sparsity or largely distributed in some low-rank space, artificial sparsification or low-rank embedding could significantly mitigate the dimensionality curse even using the current noise mechanism by post-processing projection.
Here, we have to stress that the {\em concentration} requirement is because, in most applications of DP, we need to clip each individual rather than an aggregation. 
Thus, even if populationally the data distribution has desired properties, we may still not guarantee good approximation for each individual with a large clipping error \cite{xiao2023theory}, and let along the scenario of heavy-tailed data distribution \cite{heavy_tail1,heavy_tail2}. 
The two examples presented above are cases where simple sparsification or low-rank embedding lead to large bias. 
{However, such failure also has two very interesting and meaningful implications that open up a new possibility: we can still learn from the fact that the distribution is neither sparse nor low-rank concentrated, and construct useful clipping!}

First, weaker sparsity suggests that the data is more randomly distributed across the entire space. 
Thus, in general, we may expect a small $l_{\infty}$-norm given that the data does not concentrate on few coordinates. 
Second, though having high variance, the norm of the component of each sample projected in some subspace is an aggregated statistic, which is usually more stable when the rank of the subspace is larger. More importantly, from Fig. \ref{Fig. gene statistics}(c) and Fig. \ref{Fig. grad statistics}(c), though the residue component is in a heavy tail which decays slowly, the scale of data in different subspaces is not uniform or identical. 
Therefore, with the consideration of both clipping bias and distribution geometry, a more smooth clipping method is to split the whole $d$-dimensional space into multiple (relatively large) subspaces/blocks and assign different clipping budgets to each of them. 
Besides, an additional proper $l_{\infty}$-norm clipping can be implemented afterwards, which is mostly free of making changes to the output. 

Compared to clipping simply by $l_2$-norm, the above-mentioned hybrid clipping puts more restriction on the produced sensitivity set. 
Intuitively, we should expect better utility-privacy tradeoff compared to the case where we only know the worst-case $l_2$-norm. 
However, construction of randomization to reflect such sensitivity restriction is non-trivial. 
The classic Gaussian mechanism injects noise only determined by the worst-case $l_2$-norm or alternatively one may separately apply the Gaussian mechanism to each subspace and derive an upper bound of privacy loss via composition \cite{embedyu2021}. 
Unfortunately, as we will show in Section \ref{sec: optimal noise}, this strategy could be far from optimal. Moreover, a more complicated restriction like $l_{\infty}$-norm cannot be captured in such a manner. We will provide more intuition through examples and Fig. \ref{Fig: illustration} in Section \ref{sec: additional exp}. 
The goal of the remainder of this paper is to study the {\em non-asymptotically} optimal Gaussian noise for a wide class of sensitivity sets and accordingly construct proper randomness to fit the high-dimensional geometry we observe here.  

\vspace{-0.05 in}
\section{Sensitivity Geometry and Optimal Gaussian Noise}
\label{sec: optimal noise}
Before proceeding, we first formally define the problem of optimal Gaussian noise in terms of minimal variance in the context of RDP. 
Given a deterministic processing function $\mathcal{F}$ and its corresponding sensitivity set $\mathsf{S} \subset \mathcal{R}^d$, we set out to determine a $d$-dimensional multivariate Gaussian noise $\bm{e} \sim \mathcal{N}(\bm{0}, \Sigma_0)$, where $\Sigma_0$ is a $d \times d$ covariance matrix, satisfying
\begin{equation}
\begin{aligned}
 &\inf_{\Sigma_0} \Tr(\Sigma_0)~ \text{s.t.}~ \sup_{ s\in \mathsf{S}}  \mathcal{D}_{\alpha} \big(\mathcal{N}(\bm{0},\Sigma_0)\| \mathcal{N}(s,\Sigma_0)\big) \leq \epsilon(\alpha), 
\end{aligned}
\label{optimal main}
\end{equation}
for a required $(\alpha, \epsilon(\alpha))$-RDP guarantee. 
Since the covariance matrix $\Sigma$ must be (semi) positive definite, its singular value decomposition (SVD) can be expressed as $\Sigma_0 = U\Sigma U^T$ where $U$ is some unitary matrix formed by its eigenvectors and $\Sigma = \text{Diag}\big(\sigma^2_1, \sigma^2_2, \cdots, \sigma^2_d \big)$ is a diagonal matrix. 
Thus, the objective we want to minimize is $\mathbb{E}\|\bm{e}\|^2 = \Tr(\Sigma_0)= \Tr(\Sigma)$, the trace of $\Sigma$. 
The second inequality in (\ref{optimal main}) captures the constraint to produce an $(\alpha, \epsilon(\alpha))$-RDP. 
Recall Definition \ref{def: RenyiDP}, $(\alpha, \epsilon(\alpha))$-RDP is equivalent to saying that for arbitrary two adjacent datasets $X$ and $X'$, $\mathcal{D}_{\alpha}\big(\mathcal{N}(\mathcal{F}(X),\Sigma_0)\|\mathcal{N}(\mathcal{F}(X'),\Sigma_0)\big) \leq \epsilon(\alpha).$ 
By the translation invariance of Rényi divergence, 
\begin{align*}
  \mathcal{D}_{\alpha}\big(\mathcal{N}& (\mathcal{F}(X),\Sigma_0)\|\mathcal{N}(\mathcal{F}(X'),\Sigma_0)\big)\\
  & =\mathcal{D}_{\alpha}\big(\mathcal{N}(\bm{0},\Sigma_0)\|\mathcal{N}(\mathcal{F}(X')-\mathcal{F}(X),\Sigma_0)\big),  
\end{align*}
where a uniform shift on the distributions by $-\mathcal{F}(X)$ does not change the divergence. 
Thus, the RDP definition can be transformed to the version in (\ref{optimal main}) via the worst case on the sensitivity set $\mathsf{S}$. 
The $\alpha$-Rényi divergence between two multivariate Gaussians indeed has a closed form \cite{van2014renyi} and (\ref{optimal main}) can be rewritten as 
\begin{equation*}
\begin{aligned}
&\inf_{U, \Sigma=\text{Diag}\{\sigma^2_1, \cdots, \sigma^2_d\}} \sum_{l=1}^d \sigma^2_l,\\
&\text{s.t.}~ \sup_{ \bm{s}\in \mathsf{S}} \frac{\alpha}{2} \cdot \bm{s}U\cdot \Sigma^{-1} \cdot (\bm{s}U)^T \leq \epsilon(\alpha).
\end{aligned}
\vspace{- 0.05 in}
\end{equation*}
Therefore, for fixed privacy guarantee $(\alpha, \epsilon(\alpha))$, determining the minimal noise is equivalent to finding a unitary transform matrix $U$ such that the $\alpha$-Rényi divergence is minimal conditioned on the noise's variance being $1$, i.e., $\sum_{l=1}^d \sigma^2_l=1$.
This can be formally stated as a min-max problem as follows, 
\begin{equation}
\begin{aligned}
\inf_{U, \bm{\sigma}} \mathcal{L}(U,\bm{\sigma},\mathsf{S}) &= \inf_{U, \bm{\sigma}}\sup_{ \bm{s}\in \mathsf{S}}  \bm{s}U\cdot \Sigma^{-1}\cdot (\bm{s}U)^T,  \\
&\text{s.t.}~
\sum_{l=1}^d \sigma^2_l=1,
\end{aligned}
\vspace{- 0.05 in}
\label{optimal main 2}
\end{equation}
where $\bm{\sigma}=(\sigma_1, \cdots, \sigma_d)$. In the following, we will use $\mathcal{L}(U,\bm{\sigma},\mathsf{S}) = \sup_{ \bm{s}\in \mathsf{S}}  \bm{s}U\cdot \Sigma^{-1}\cdot (\bm{s}U)^T=  \sup_{ \bm{s}\in \mathsf{S}}  \bm{s}U\cdot \text{Diag}\{\sigma^{-2}_1, \cdots, \sigma^{-2}_d\}\cdot (\bm{s}U)^T$ to represent the target privacy loss function. Below, we will present two sets of results to answer the above min-max problem for a broad class of sensitivity sets, which characterize the optimal noise for most commonly-used clipping methods. 

\vspace{- 0.1 in}
\subsection{Symmetric Sensitivity Set}
We first consider the scenario where the sensitivity set $\mathsf{S}$ satisfies certain symmetry properties, formally defined as follows. 
\vspace{- 0.05 in}
\begin{definition}[Sign Invariance]
A set $\mathsf{S}$ satisfies \textit{sign invariance} if for any $(z_1, \dotsb, z_d)\in \{-1, 1\}^d$ and any $\bm{s} = (s_1, \dotsb, s_d)\in \mathsf{S}$, $(z_1 s_1, \dotsb, z_d s_d)$ is also in $S$.
\label{def: sign invariance}
\end{definition}
\vspace{- 0.1 in}
\begin{definition}[Permutation Invariance] 
A set $\mathsf{S}$ satisfies \textit{permutation invariance} if for any permutation $\pi$ over $\{1, 2, \dotsb, d\}$ and any $\bm{s} = (s_1, \dotsb, s_d)\in \mathsf{S}$, $(s_{\pi(1)}, \dotsb, s_{\pi(d)})$ is also in $S$.
\label{def: permutation invariance}
\end{definition}
\vspace{- 0.05 in}
Definitions \ref{def: sign invariance} and \ref{def: permutation invariance} basically say that for any element $\bm{s} \in \mathsf{S}$, when we arbitrarily change the sign of, or permute its coordinates, the resultant element is still within $\mathsf{S}$. The following theorem shows that if $\mathsf{S}$ satisfies Definitions \ref{def: sign invariance} and \ref{def: permutation invariance}, then for any selection of unitary matrix $U$, the $\inf_{\bm{\sigma}}\mathcal{L}(U,\bm{\sigma},\mathsf{S})$ is identical, and the isotropic Gaussian is already the optimal.  
\vspace{- 0.05 in}
\begin{restatable}[Optimal Noise for Symmetric $\mathsf{S}$]{theorem}{OptimalSymmetric}
If the sensitivity set $\mathsf{S}$ is invariant to sign and permutation as defined in Definitions \ref{def: sign invariance} and \ref{def: permutation invariance}, conditional on $\sum_{i = 1}^d \sigma^2_i = 1$, the optimal privacy loss is achieved when we select $\sigma_1 = \sigma_2 = \dotsb = \sigma_d = 1 / \sqrt{d}$ and is independent of the selection of $U$.
\label{thm: signAndperm}
\end{restatable}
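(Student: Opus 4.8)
The plan is to first dispose of the isotropic case and then reduce everything to one clean lower bound. Since $U$ is unitary, $\|\bm{s}U\|_2 = \|\bm{s}\|_2$ for every $\bm{s}$, so substituting $\sigma_1 = \cdots = \sigma_d = 1/\sqrt{d}$ into the loss $\mathcal{L}(U,\bm{\sigma},\mathsf{S})$ of (\ref{optimal main 2}) yields $d\,\sup_{\bm{s}\in\mathsf{S}}\|\bm{s}\|_2^2$, a quantity with no dependence on $U$. It therefore suffices to show that for \emph{every} unitary $U$ and \emph{every} $\bm{\sigma}$ on the constraint surface $\sum_{l=1}^d\sigma_l^2 = 1$ one has $\mathcal{L}(U,\bm{\sigma},\mathsf{S}) \ge d\,\sup_{\bm{s}\in\mathsf{S}}\|\bm{s}\|_2^2$, with equality only at the isotropic $\bm{\sigma}$. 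The degenerate situations are trivial: if some $\sigma_l = 0$ the loss is $+\infty$ (the constraint in (\ref{optimal main}) cannot hold), and if $\sup_{\bm{s}}\|\bm{s}\|_2 = \infty$ both sides are $+\infty$, so we may assume $\mathsf{S}$ bounded and all $\sigma_l > 0$.

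For the lower bound I would fix $\bm{s}^{\ast}\in\mathsf{S}$ with $\|\bm{s}^{\ast}\|_2$ equal to (or, if the supremum is not attained, arbitrarily close to) $\sup_{\bm{s}\in\mathsf{S}}\|\bm{s}\|_2$. By Definitions \ref{def: sign invariance} and \ref{def: permutation invariance}, the whole orbit of $\bm{s}^{\ast}$ under signed coordinate permutations lies inside $\mathsf{S}$, so $\mathcal{L}(U,\bm{\sigma},\mathsf{S})$ is at least the \emph{average} of $\sum_{l=1}^d \sigma_l^{-2}(\bm{s}'U)_l^2$ over $\bm{s}'$ drawn uniformly from that orbit, i.e. $\bm{s}' = (z_1 s^{\ast}_{\pi(1)},\dots, z_d s^{\ast}_{\pi(d)})$ with $\pi$ a uniform random permutation and $z\in\{-1,1\}^d$ uniform and independent of $\pi$. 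The core computation is that this average is exactly $\tfrac{\|\bm{s}^{\ast}\|_2^2}{d}\sum_{l=1}^d\sigma_l^{-2}$: expanding $(\bm{s}'U)_l^2$ and taking the expectation over $z$ first, independence of the signs annihilates every off-diagonal cross term; then taking the expectation over $\pi$ replaces each squared entry $U_{kl}^2$ by the row average $\tfrac1d\sum_k U_{kl}^2 = \tfrac1d\|u_l\|_2^2 = \tfrac1d$, because the columns $u_l$ of a unitary matrix are unit vectors. This is precisely the step where both symmetry hypotheses are consumed, and it is what makes the answer independent of $U$.

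To finish, combine this with Cauchy--Schwarz (equivalently AM--HM), $\big(\sum_{l=1}^d\sigma_l^2\big)\big(\sum_{l=1}^d\sigma_l^{-2}\big)\ge d^2$, which under $\sum_l\sigma_l^2 = 1$ gives $\sum_l\sigma_l^{-2}\ge d^2$ with equality iff all $\sigma_l$ are equal. Chaining the two estimates gives $\mathcal{L}(U,\bm{\sigma},\mathsf{S}) \ge \tfrac{\|\bm{s}^{\ast}\|_2^2}{d}\cdot d^2 = d\|\bm{s}^{\ast}\|_2^2$; letting $\|\bm{s}^{\ast}\|_2\to\sup_{\bm{s}}\|\bm{s}\|_2$ matches the isotropic value from the first step, so the isotropic $\bm{\sigma}$ is optimal and equality in Cauchy--Schwarz forces $\sigma_1 = \cdots = \sigma_d = 1/\sqrt{d}$; $U$-independence of the optimum is then automatic.

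I expect the only real obstacle to be the orbit-averaging identity in the second paragraph. The subtle points are (i) checking that one genuinely needs \emph{both} invariances --- sign invariance to cancel the $z_j z_{j'}$ cross terms, permutation invariance to symmetrize $U_{kl}^2$ over rows --- and (ii) handling a non-attained supremum by a limiting argument. Everything else (unitary invariance of $\|\cdot\|_2$, the closed-form Gaussian Rényi divergence already quoted before (\ref{optimal main 2}), and Cauchy--Schwarz) is routine bookkeeping.
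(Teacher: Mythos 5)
Your proposal is correct and follows essentially the same route as the paper's proof in Appendix~\ref{app: thm: signAndperm}: lower-bound the worst case by the average over the signed-permutation orbit of a norm-maximizing $\bm{s}$, use sign invariance to kill the cross terms and permutation invariance (plus unit columns of $U$) to obtain $\tfrac{\|\bm{s}\|_2^2}{d}\sum_l \sigma_l^{-2}$, then apply H\"older/Cauchy--Schwarz to get $\sum_l \sigma_l^{-2}\ge d^2$ with equality exactly at the isotropic choice. Your extra care about non-attained suprema and degenerate $\sigma_l=0$ is a harmless refinement, not a different argument.
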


\begin{proof}
See Appendix \ref{app: thm: signAndperm}.
\end{proof}
\vspace{- 0.05 in}
Theorem \ref{thm: signAndperm} is a negative result: for symmetric $\mathsf{S}$, the curse of dimensionality is tight. 
Due to the invariance to $U$, we simply select $U=\bm{I}_d$ to be the identity matrix, and $\sigma_1=\sigma_2 = \cdots = \sigma_d$ is identical to $\sigma_0$, then
\[
\mathcal{L}(U,\bm{\sigma},\mathsf{S}) = \sup_{\bm{s} \in \mathsf{S}} \frac{\|\bm{s}\|_2^2}{\sigma^2_0},
\]
and thus the minimal variance of noise $\bm{e}$ is just $\sigma^2_0d$ and $\sigma_0$ is only determined by the worst-case $l_2$-norm of the elements in $\mathsf{S}$.  

An immediate corollary from Theorem \ref{thm: signAndperm} is that if we use a mixture of $m$ kinds of $l_{p}$-norm clippings of parameters $\big((p_1,c_{p_1}), \cdots,$ $(p_m,c_{p_m})\big)$, respectively, and the resultant set  $\mathsf{S}$ is in a form 
\[
\mathsf{S} = \cap_{j=1}^m \{\bm{s}~|~\|\bm{s}\|_{p_j} \le c_{p_j}\},
\]
which is the intersection of $m$ many $l_{p_j}$ balls of radius $c_{p_j}$, respectively, then it is not hard to verify that such $\mathsf{S}$ is also invariant to sign and permutation. 
Thus, the optimal strategy, formalized by Corollary \ref{cor: l_p mix}, is still to add isotropic noise where the deviation of each coordinate is proportional to the maximal of the largest $l_2$-norm in the set.
\vspace{- 0.05 in}
\begin{corollary}[Optimal Noise for Mixture $l_p$-norm Clipping]
If $\mathsf{S} = \cap_{j=1}^m \{\bm{s}~|~\|\bm{s}\|_{p_j} \le c_{p_j}\}$, where $p_j$ and $c_{p_j}$ are positive real numbers for $j=1,2,\cdots,m$, then the optimal Gaussian noise to achieve arbitrary required $(\alpha, \epsilon(\alpha))$-RDP is in an isotropic form $\mathcal{N}(\bm{0},\sigma_0 \cdot \bm{I}_d)$, where there exists some constant $b_0$ determined by $\alpha$ and $\epsilon(\alpha)$ such that 
$$ \sigma_0 = b_0 \cdot \max_{\bm{s}}\{\|\bm{s}\|_2~|~\forall j \in \{1,2,\cdots, m\}, \|\bm{s}\|_{p_j} \leq c_{p_j}\}.$$
\label{cor: l_p mix} 
\end{corollary}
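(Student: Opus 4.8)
The plan is to obtain Corollary~\ref{cor: l_p mix} as a direct consequence of Theorem~\ref{thm: signAndperm}; the only real work is (i) checking that the intersection-of-$l_{p_j}$-balls set inherits the symmetry hypotheses, and (ii) carrying the conclusion about the normalized min--max problem (\ref{optimal main 2}) back to the original variance-minimization problem (\ref{optimal main}) to read off the constant $b_0$.

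First I would verify sign and permutation invariance. For a single constraint set $\mathsf{S}_{p} = \{\bm{s} : \|\bm{s}\|_{p} \le c\}$, note that $\|\bm{s}\|_{p} = \big(\sum_{i} |s_i|^{p}\big)^{1/p}$ depends only on the multiset $\{|s_1|,\dots,|s_d|\}$; this remains true when $p<1$ and $\|\cdot\|_p$ is merely a quasi-norm. Hence replacing $\bm{s}=(s_1,\dots,s_d)$ by $(z_1 s_1,\dots,z_d s_d)$ for any $z_i\in\{-1,1\}$, or by $(s_{\pi(1)},\dots,s_{\pi(d)})$ for any permutation $\pi$ of $\{1,\dots,d\}$, leaves $\|\bm{s}\|_p$ unchanged, so $\mathsf{S}_p$ satisfies Definitions~\ref{def: sign invariance} and~\ref{def: permutation invariance}. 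Since an intersection of sets each invariant under the same group of transformations is again invariant, $\mathsf{S}=\cap_{j=1}^m \mathsf{S}_{p_j}$ satisfies both definitions. I would also record that $\mathsf{S}$ is compact (each $l_{p_j}$-ball in $\mathbb{R}^d$ is compact, hence so is their intersection), so $\max_{\bm{s}\in\mathsf{S}}\|\bm{s}\|_2$ is attained and finite, which makes the right-hand side in the claimed formula for $\sigma_0$ well defined.

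Next, apply Theorem~\ref{thm: signAndperm} to this $\mathsf{S}$: for every unitary $U$, $\inf_{\bm{\sigma}}\mathcal{L}(U,\bm{\sigma},\mathsf{S})$ subject to $\sum_{l}\sigma_l^2=1$ takes the same value, attained at $\sigma_1=\cdots=\sigma_d=1/\sqrt{d}$. Choosing $U=\bm{I}_d$ and $\sigma_l=1/\sqrt{d}$ gives $\Sigma^{-1}=d\,\bm{I}_d$, so the optimal value of (\ref{optimal main 2}) equals $d\cdot\max_{\bm{s}\in\mathsf{S}}\|\bm{s}\|_2^2$. Finally I would rescale back to (\ref{optimal main}): writing any admissible covariance as $\Sigma_0 = V\cdot\hat\Sigma_0$ with $\Tr(\hat\Sigma_0)=1$, the RDP constraint $\frac{\alpha}{2}\sup_{\bm{s}}\bm{s}U\Sigma^{-1}(\bm{s}U)^T\le\epsilon(\alpha)$ becomes $V\ge \frac{\alpha}{2\epsilon(\alpha)}\,\mathcal{L}(U,\hat{\bm{\sigma}},\mathsf{S})$, while the objective is $\Tr(\Sigma_0)=V$. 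Minimizing $V$ thus minimizes $\mathcal{L}$ over the normalized problem, whose optimum is $d\max_{\bm{s}}\|\bm{s}\|_2^2$, achieved by the isotropic choice; hence the optimal noise is $\mathcal{N}(\bm{0},\sigma_0^2\bm{I}_d)$ with total variance $\sigma_0^2 d = \frac{\alpha}{2\epsilon(\alpha)}\, d\max_{\bm{s}\in\mathsf{S}}\|\bm{s}\|_2^2$, i.e. $\sigma_0 = b_0\max_{\bm{s}}\{\|\bm{s}\|_2 : \|\bm{s}\|_{p_j}\le c_{p_j}\ \forall j\}$ with $b_0=\sqrt{\alpha/(2\epsilon(\alpha))}$, which depends only on $\alpha$ and $\epsilon(\alpha)$.

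I do not expect a genuine obstacle here, since this is a corollary of Theorem~\ref{thm: signAndperm}. The only points needing a moment of care are that the symmetry argument must survive the regime $p_j<1$, where $\|\cdot\|_{p_j}$ is only a quasi-norm (handled by the observation that each constraint depends solely on the coordinatewise absolute values), and that the maximum defining $\sigma_0$ is actually attained (handled by compactness of $\mathsf{S}$). Everything else is the routine bookkeeping of passing between the two equivalent formulations (\ref{optimal main}) and (\ref{optimal main 2}).
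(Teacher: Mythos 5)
Your proposal is correct and follows essentially the same route as the paper: verify that the intersection of $l_{p_j}$-balls is sign- and permutation-invariant (each constraint depends only on coordinatewise absolute values, so the quasi-norm case $p_j<1$ is harmless), invoke Theorem~\ref{thm: signAndperm}, and rescale from the normalized problem (\ref{optimal main 2}) back to (\ref{optimal main}) to read off $b_0=\sqrt{\alpha/(2\epsilon(\alpha))}$. The paper treats this as an immediate consequence with exactly this reasoning, so your added bookkeeping (compactness of $\mathsf{S}$ and the explicit rescaling) is just a more careful write-up of the same argument.
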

\vspace{- 0.2 in}
Thus, still as a negative result, from Corollary \ref{cor: l_p mix}, Gaussian noise cannot capture the gain from the {\em additional} $l_{\infty}$-norm restriction (discussed in Section \ref{sec: warmup}), unless it becomes trivial to decrease the global $l_2$-norm bound of $\mathsf{S}$. 
In Section \ref{sec: twice-sampling}, we will show how to address this problem and utilize the $l_{\infty}$-norm using different methods and randomization. 
\vspace{- 0.1 in}
\subsection{Hypercube Sensitivity Set}
\label{sec: hypercube}
Given the negative results on symmetric $\mathsf{S}$ and the observation from Section \ref{sec: warmup} where, in general, for learnable high-dimensional distribution, the power of $\mathsf{S}$ will not be uniform across the entire space, we are motivated to consider the asymmetric case. 
We consider the following scenario that, on some set of orthogonal unit basis vectors $\bm{u}_1, \bm{u}_2, \cdots, \bm{u}_d$, where $\bm{u}_l \in \mathbb{R}^d$, $\|\bm{u}_l\|_2=1$, for $l=1,2,\cdots, d$, $\mathsf{S}$ is a hypercube in a form
\begin{equation}
    \mathsf{S} = \{\bm{s}=\sum_{l=1}^d v_l\bm{u}_l: v_l \in [-V_l,V_l], l=1,2,\cdots, d\}.
    \label{hypercube}
\end{equation}
In other words, the projection of $\mathsf{S}$ along any base $u_l$ is an interval $[-V_l,V_l]$ for some non-negative constant $V_l$, and $l=1,2, \cdots, d$. Interestingly, we will prove in the following theorem that the scale of the optimal noise does not need to be explicitly dependent on either the dimension $d$ or the rank (the number of non-zero $V_l$). 
\vspace{- 0.05 in}
\begin{restatable}[Optimal Noise for Hypercube]{theorem}{OptimalHyperCube}
If  $\mathsf{S}$ is a hypercube defined in (\ref{hypercube}), 
then the optimal privacy loss is achieved when we select $U=(\bm{u}_1, \dotsb, \bm{u}_d)$ and select $\bm{\sigma}=(\sigma_1, \cdots, \sigma_d)$ such that $\sigma_l = \sqrt{V_l \over \sum_{j=1}^d V_j}$. Or equivalently, to achieve required $(\alpha, \epsilon(\alpha))$-RDP, there exists some constant $b_0$ such that for the optimal noise $\bm{e}$, $\sigma_l =  b_0 \cdot \sqrt{V_l \cdot \sum_{j=1}^d V_j}$ and $\mathbb{E}[\|\bm{e}\|^2] = b^2_0 \cdot (\sum_{l=1}^d V_l)^2.$
\label{thm: cube}
\end{restatable}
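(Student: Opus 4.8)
The plan is to recast the normalized min--max problem $(\ref{optimal main 2})$ as an optimization over precision matrices, prove matching upper and lower bounds of $(\sum_{l=1}^d V_l)^2$ on the optimal value of $\mathcal{L}$, and then rescale to an arbitrary $(\alpha,\epsilon(\alpha))$-RDP target. Throughout, vectors are columns. Write $U_0=(\bm{u}_1,\dots,\bm{u}_d)$, the orthogonal matrix whose columns are the given basis vectors, so that every $\bm{s}\in\mathsf{S}$ equals $U_0\bm{v}$ with $\bm{v}$ ranging over the box $\mathcal{V}:=\prod_{l=1}^d[-V_l,V_l]$. For a candidate covariance $\Sigma_0=U\Sigma U^T$ with $\Tr(\Sigma_0)=\sum_l\sigma_l^2=1$, set $M:=U_0^T\Sigma_0^{-1}U_0$; then $M$ is symmetric positive definite, $\Tr(M^{-1})=\Tr(U_0^T\Sigma_0 U_0)=\Tr(\Sigma_0)=1$, and $\mathcal{L}(U,\bm{\sigma},\mathsf{S})=\sup_{\bm{v}\in\mathcal{V}}\bm{v}^T M\bm{v}$. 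Conversely every symmetric positive definite $M$ with $\Tr(M^{-1})=1$ arises this way from $\Sigma_0=U_0 M^{-1}U_0^T$, so $(\ref{optimal main 2})$ is exactly $\inf_{M\succ 0,\ \Tr(M^{-1})=1}\ \sup_{\bm{v}\in\mathcal{V}}\bm{v}^T M\bm{v}$. (If some $V_l=0$ that coordinate is degenerate and is dropped; otherwise finiteness of the loss forces $M$ nonsingular.)

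\textbf{Upper bound and feasibility of the claimed optimum.} Restricting to diagonal $M=\text{Diag}(m_1,\dots,m_d)$ gives $\sup_{\bm{v}\in\mathcal{V}}\bm{v}^T M\bm{v}=\sum_l m_l V_l^2$ under $\sum_l m_l^{-1}=1$, and Cauchy--Schwarz gives $\big(\sum_l m_l V_l^2\big)\big(\sum_l m_l^{-1}\big)\ge\big(\sum_l V_l\big)^2$ with equality at $m_l=\big(\sum_j V_j\big)/V_l$. Translated back, this is $U=U_0$ and $\sigma_l^2=V_l/\sum_j V_j$, matching the theorem, with value $\big(\sum_l V_l\big)^2$.

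\textbf{Lower bound (the crux).} Fix any feasible $M$. Restrict the supremum to the $2^d$ ``full'' vertices of $\mathcal{V}$, i.e. $\bm{v}$ with $v_l=\pm V_l$, and average over i.i.d.\ uniform signs $z_l\in\{-1,1\}$; since $\mathbb{E}[z_l z_m]=\delta_{lm}$,
\[
\sup_{\bm{v}\in\mathcal{V}}\bm{v}^T M\bm{v}\ \ge\ \mathbb{E}_{\bm{z}}\Big[\Big(\sum_l z_l V_l\bm{e}_l\Big)^{T} M\Big(\sum_l z_l V_l\bm{e}_l\Big)\Big]\ =\ \sum_{l=1}^d V_l^2 M_{ll}.
\]
The elementary fact that for $M\succ 0$ one has $M_{ll}(M^{-1})_{ll}=(\bm{e}_l^T M\bm{e}_l)(\bm{e}_l^T M^{-1}\bm{e}_l)\ge(\bm{e}_l^T\bm{e}_l)^2=1$ (Cauchy--Schwarz in the inner product induced by $M$) gives $M_{ll}\ge 1/(M^{-1})_{ll}$, hence $\sum_l V_l^2 M_{ll}\ge\sum_l V_l^2/(M^{-1})_{ll}$. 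A second Cauchy--Schwarz against $\sum_l (M^{-1})_{ll}=\Tr(M^{-1})=1$ then yields $\sum_l V_l^2/(M^{-1})_{ll}\ge\big(\sum_l V_l\big)^2$. Therefore $\mathcal{L}(U,\bm{\sigma},\mathsf{S})\ge\big(\sum_l V_l\big)^2$ for every feasible $(U,\bm{\sigma})$, so the upper bound is tight and the choice $U=U_0$, $\sigma_l^2=V_l/\sum_j V_j$ is optimal.

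\textbf{Rescaling and where the difficulty lies.} To meet $\frac{\alpha}{2}\sup_{\bm{s}\in\mathsf{S}}\bm{s}^T\Sigma_0^{-1}\bm{s}\le\epsilon(\alpha)$ with the optimal shape $\Sigma_0^\ast$ from above (which has $\Tr(\Sigma_0^\ast)=1$ and normalized loss $(\sum_l V_l)^2$), one scales by $c=\alpha(\sum_l V_l)^2/(2\epsilon(\alpha))$, which gives $\sigma_l^2=c\,V_l/\sum_j V_j=b_0^2\,V_l\sum_j V_j$ with $b_0=\sqrt{\alpha/(2\epsilon(\alpha))}$ and $\mathbb{E}\|\bm{e}\|^2=\Tr(c\Sigma_0^\ast)=c=b_0^2\big(\sum_l V_l\big)^2$, as stated. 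The substantive obstacle is the lower bound: a priori a rotated, anisotropic Gaussian could conceivably exploit the corners of the hypercube more cheaply than the axis-aligned noise, and ruling this out is where the argument has content. The pivotal move is the sign-averaging step, which converts ``worst $\bm{v}$ over the box'' into the clean diagonal quantity $\sum_l V_l^2 M_{ll}$ of the rotated-back precision matrix $M$; after that, chaining the inequality $M_{ll}(M^{-1})_{ll}\ge 1$ with a Cauchy--Schwarz step that activates the trace constraint $\Tr(M^{-1})=1$ closes the gap. Verifying these chained inequalities together with their equality conditions (to confirm optimality, not merely a bound) is the technical heart of the proof.
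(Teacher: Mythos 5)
Your proof is correct, and it reaches the lower bound by a genuinely different route than the paper. Both arguments share the same opening move: averaging over sign flips of the hypercube's corners to kill the cross terms, which reduces the worst case over $\mathsf{S}$ to the diagonal quantity $\sum_l V_l^2 M_{ll}$ (in the paper's coordinates, $\sum_{i,l}(V_l w_{il}/\sigma_i)^2$ — the two are the same object, since $M_{ll}=\sum_i w_{il}^2/\sigma_i^2$). From there the paper applies H\"older against the constraint $\sum_i\sigma_i^2=1$ to get $\big(\sum_i\sqrt{\sum_l V_l^2 w_{il}^2}\big)^2$ and then invokes Lemma~\ref{lem:stochasticConcave}, a rearrangement-type statement about doubly stochastic matrices and concave functions, with $f=\sqrt{\cdot}$ and the doubly stochastic matrix $[w_{il}^2]$. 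You instead phrase everything through the rotated precision matrix $M=U_0^T\Sigma_0^{-1}U_0$ with $\Tr(M^{-1})=1$ and chain two elementary Cauchy--Schwarz steps: $M_{ll}(M^{-1})_{ll}\ge 1$, and then $\sum_l V_l^2/(M^{-1})_{ll}\ge(\sum_l V_l)^2$ against $\sum_l (M^{-1})_{ll}=1$. This bypasses the doubly-stochastic lemma entirely and is arguably shorter and more transparent; the paper's lemma, on the other hand, is reused almost verbatim in the proof of Theorem~\ref{thm: hybrid clipping}, so the paper's route amortizes its extra machinery over the hybrid-clipping result. Two small remarks: your closing claim that verifying equality conditions in the lower-bound chain is needed is not actually required — matching the universal lower bound with the explicit feasible point already proves optimality (the theorem does not assert uniqueness) — and your handling of the rescaling, giving $b_0=\sqrt{\alpha/(2\epsilon(\alpha))}$, correctly recovers the second formulation in the statement.
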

\vspace{- 0.05 in}
\begin{proof}
    See Appendix \ref{app: thm: cube}.
\end{proof}
\vspace{- 0.05 in}
Theorem \ref{thm: cube} states that if we know $\mathsf{S}$ is a hypercube under some basis, then to produce the optimal Gaussian noise, we should also select the unitary $U$ formed by the exact basis and add noise of variance $\sigma^2_l$ along $\bm{u}_l$ proportional to $V_l$. 
Thus, the optimal noise scale $\mathbb{E}[\|\bm{e}\|]$ is determined by the $l_1$-norm of the vector $(V_1, V_2, \cdots, V_l)$, i.e., the sum of side lengths of the hypercube. 
If its $l_1$-norm is constant, then we only need to add {\em constant} noise independent of the dimension $d$ or the rank. 
This is an elegant example where noise fits the geometry and we only add the {\em necessary} amount to each direction.
Simply using an isotropic noise could be far from optimal. 

With Theorem \ref{thm: cube}, we can also show the optimal noise for hybrid clipping, where we assign different clipping budgets to $m$ orthogonal subspaces. 
Suppose the $j$-th subspace is of rank $r_j$ and $\sum_{j = 1}^m r_j = d$. For a hybrid clipping, we clip the projection of the release in the $j$-th subspace to the $l_2$-norm of parameter $c_{2j}$. 
Without loss of generality, we transform $\mathsf{S}$ back to the representation with the natural one-hot unit bases, and the produced sensitivity set $\mathsf{S}$ is in a form
\begin{equation}
    \mathsf{S} = \{\bm{s}=(\bm{s}_1, \bm{s}_2, \dotsb, \bm{s}_m)~|~ \bm{s}_j \in \mathbb{R}^{d_j} ~~\text{and}~~ \|\bm{s}_j\|_2 \le c_{2j}\}.
    \label{sensitivity set hybrid clipping}
\end{equation}
Notice that $\mathsf{S}$ is invariant to sign, but it is only invariant to permutation in each subspace (segment). We will show in the following theorem that the optimal noise is to still use the original basis and select $\sigma^2_j \propto c_{2j} / \sqrt{r_j}$
for all bases in the $j$-{th} subspace. The proof is a combination of Theorems \ref{thm: signAndperm} and \ref{thm: cube}.
\vspace{- 0.05 in}
\begin{restatable}{theorem}{OptimalHybrid}
Given a hybrid clipping with a sensitivity set $\mathsf{S}$ described in (\ref{sensitivity set hybrid clipping}), 
the optimal noise is to select $U$ formed by the same basis and the standard deviation $\sigma_j$ is in a form {
$\sigma_j = \sqrt{c_{2j} \over  \sqrt{r_j} \sum_{l=1}^m c_{2l}\sqrt{r_l}}$} for all bases in the $j$-th subspaces. Or equivalently, to achieve required $(\alpha, \epsilon(\alpha))$-RDP, there exists some constant $b_0$ determined by $\alpha$ and $\epsilon(\alpha)$ such that the optimal noise $\bm{e}$ is to add isotropic noise in each subspace with {$\sigma_j  = b_0\cdot \sqrt{c_{2j}\sum_{l=1}^m c_{2l}\sqrt{r_l} \over \sqrt{r_j} }$}
and $\mathbb{E}[\|\bm{e}\|^2] = b^2_0 \cdot \big(\sum_{l=1}^m c_{2l}\sqrt{r_l}\big)^2$.
\label{thm: hybrid clipping}
\end{restatable}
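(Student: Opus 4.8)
The plan is to reduce the problem to a combination of Theorems~\ref{thm: signAndperm} and~\ref{thm: cube} by first collapsing each of the $m$ subspaces into a single effective coordinate, and then applying the hypercube result. First I would argue, using the within-subspace permutation invariance and sign invariance of $\mathsf{S}$ noted after~(\ref{sensitivity set hybrid clipping}), that the optimal $U$ must be block-diagonal with respect to the given decomposition, and that within the $j$-th block the optimal covariance must be isotropic, i.e.\ all $r_j$ standard deviations in that block are equal to a common value $\tau_j$. This step mirrors the proof of Theorem~\ref{thm: signAndperm} applied blockwise: for any candidate $U$ and $\bm\sigma$, one can average the divergence over all sign flips and coordinate permutations acting inside each block, and by convexity of $\bm{s}U\Sigma^{-1}(\bm{s}U)^T$ in the randomization, the symmetrized (blockwise-isotropic) choice is no worse.

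Once the search is restricted to blockwise-isotropic noise with per-block deviation $\tau_j$, the privacy-loss functional $\mathcal{L}$ evaluated against $\mathsf{S}$ decouples: since $\|\bm{s}_j\|_2 \le c_{2j}$ independently across $j$, we have $\sup_{\bm{s}\in\mathsf{S}} \sum_{j=1}^m \|\bm{s}_j\|_2^2/\tau_j^2 = \sum_{j=1}^m c_{2j}^2/\tau_j^2$, while the trace constraint reads $\sum_{j=1}^m r_j \tau_j^2 = 1$. This is now exactly an instance of the hypercube optimization from Theorem~\ref{thm: cube} in $m$ effective coordinates: set $\widetilde V_j = c_{2j}/\sqrt{r_j}$ and a rescaled deviation $\widetilde\sigma_j = \sqrt{r_j}\,\tau_j$, so the constraint becomes $\sum_j \widetilde\sigma_j^2 = 1$ and the objective $\sum_j \widetilde V_j^2/\widetilde\sigma_j^2$. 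Theorem~\ref{thm: cube} (or a direct Cauchy--Schwarz / Lagrange-multiplier computation) then gives $\widetilde\sigma_j^2 \propto \widetilde V_j = c_{2j}/\sqrt{r_j}$, i.e.\ $\tau_j^2 \propto c_{2j}/(r_j^{3/2})$ before renormalization; chasing the normalization constant $\sum_j \widetilde V_j = \sum_j c_{2j}/\sqrt{r_j}$ through yields $\tau_j = \sigma_j = \sqrt{c_{2j} / (\sqrt{r_j}\sum_{l=1}^m c_{2l}\sqrt{r_l})}$ under the unit-variance normalization, and the stated scaled form and $\mathbb{E}[\|\bm e\|^2] = b_0^2(\sum_l c_{2l}\sqrt{r_l})^2$ after incorporating the constant $b_0$ from the target $(\alpha,\epsilon(\alpha))$-RDP level.

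The main obstacle I anticipate is rigorously justifying the first reduction --- that the optimal $U$ is block-diagonal. Sign and permutation invariance of $\mathsf{S}$ only hold \emph{within} each subspace, not globally, so the clean averaging argument of Theorem~\ref{thm: signAndperm} does not immediately force $U$ to respect the block structure; one must rule out that some cross-block rotation of the eigenbasis could help. I would handle this by the same variational/convexity device used in the appendix proof of Theorem~\ref{thm: cube}: for a fixed $U$, show that the worst-case $\bm{s}$ is attained at a vertex-like configuration where each $\bm{s}_j$ independently saturates $\|\bm{s}_j\|_2 = c_{2j}$ and points in the locally worst direction, and that any off-block-diagonal component of $U$ strictly increases $\sup_{\bm{s}}\bm{s}U\Sigma^{-1}(\bm{s}U)^T$ for the optimal $\Sigma$ --- equivalently, that the problem is separable once the noise is blockwise isotropic, which it must be by the blockwise version of Theorem~\ref{thm: signAndperm}. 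Assuming the appendix establishes a lemma of this flavor for Theorem~\ref{thm: cube}, the present theorem follows by invoking it on each block and then composing; the remaining steps are the routine Cauchy--Schwarz optimization and bookkeeping of constants.
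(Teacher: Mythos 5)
Your second stage (once the noise is block-aligned and blockwise isotropic, a Cauchy--Schwarz/Lagrange computation gives the stated $\sigma_j$) is fine in spirit, but your first stage is exactly where the content of the theorem lies, and your proposal does not supply it. You assert that the optimal $U$ must be block-diagonal, acknowledge that the within-block symmetries cannot force this by the averaging argument of Theorem~\ref{thm: signAndperm}, and then defer to ``the same variational/convexity device used in the appendix proof of Theorem~\ref{thm: cube}'', assuming a lemma of that flavor exists. No block-diagonality lemma is established (or needed) in the paper: its proof never shows that an optimizing $U$ respects the block structure. Instead it proves, for \emph{every} unitary $U$ and every feasible $\bm{\sigma}$, the lower bound $\mathcal{L}(U,\bm{\sigma},\mathsf{S}) \ge \big(\sum_{j=1}^m c_{2j}\sqrt{r_j}\big)^2$ directly: average $\mathcal{L}(U,\bm{\sigma},\cdot)$ over within-block sign flips and permutations of a worst-case $\bm{s}$, which yields $\sum_{i,j}\|\bm{s}_j\|_2^2\,\|\bm{v}_{ij}\|_2^2/(r_j\sigma_i^2)$ with $\bm{v}_{ij}$ the block-$j$ segment of $\bm{u}_i$ (note this does \emph{not} decouple by blocks for general $U$); then apply H\"older with $\sum_i\sigma_i^2=1$; and finally apply Lemma~\ref{lem:stochasticConcave} with the block-constant vector $a_i=\|\bm{s}_j\|_2/\sqrt{r_j}$ and the doubly stochastic matrix $[u_{ij}^2]$. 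It is this majorization step, not any block-diagonality claim, that rules out cross-block rotations helping; the block-aligned choice is then exhibited as attaining the bound. Without an argument of this type (or an explicit proof that cross-block mixing cannot lower the worst-case loss for the optimal $\Sigma$), your reduction to an $m$-coordinate hypercube problem is unsupported, and that is the essential gap.

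A secondary, fixable issue: your bookkeeping in the reduced problem is internally inconsistent. With blockwise deviations $\tau_j$, the objective is $\sum_j c_{2j}^2/\tau_j^2$ under $\sum_j r_j\tau_j^2=1$; in your substitution $\widetilde\sigma_j=\sqrt{r_j}\,\tau_j$ the effective side lengths are $\widetilde V_j=c_{2j}\sqrt{r_j}$, not $c_{2j}/\sqrt{r_j}$, and the optimum is $\tau_j^2\propto c_{2j}/\sqrt{r_j}$ normalized by $\sum_l c_{2l}\sqrt{r_l}$, not $\tau_j^2\propto c_{2j}/r_j^{3/2}$ normalized by $\sum_l c_{2l}/\sqrt{r_l}$. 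Your intermediate constants do not reproduce the $\sigma_j$ in the theorem even though you quote the correct final formula, so this part needs to be redone carefully once the main gap is closed.
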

\vspace{- 0.05 in}
\begin{proof}
    See Appendix \ref{app: thm: hybrid clipping}.
\end{proof}
\vspace{- 0.05 in}

{Comparing Theorems {\ref{thm: cube}} and {\ref{thm: hybrid clipping}}, the hybrid clipping indeed captures a more coarse but generic partition of the entire space. 
In Theorem {\ref{thm: cube}}, we specify the power (budget) of the sensitivity set (clipping) along the direction of each basis, which, strictly speaking, represents $d$ rank-1 subspaces. 
However, in Theorem  {\ref{thm: hybrid clipping}}, we only consider $m$ subsapces (basis subsets) and assign a local $l_2$-norm bound on each. Thus, Theorem {\ref{thm: cube}}  is a special case of Theorem  {\ref{thm: hybrid clipping}} if we fix $r_j=1$.  Moreover, it is not surprising that in Theorem  {\ref{thm: hybrid clipping}}, the optimal noise bound is still in a weighted average form, which is determined by both the local dimension $r_j$ and the $l_2$-norm power $c_{2j}$.}

We can compare Theorem \ref{thm: hybrid clipping} with the standard DP analysis by composition. 
If we apply the standard Gaussian mechanism to each subspace and upper bound the total privacy loss via composition, then the variance of noise required is ${\Theta}\big(m(\sum_{j=1}^m c^2_{2j}r_j)\big)$. 
Let $\bm{c}_2=(c_{21}, \cdots, c_{2m})$, when $r_1 = \cdots = r_m = d/m$ identically, the variance bound from composition is $\Theta(d\|\bm{c}_2\|_2^2)$, while Theorem \ref{thm: hybrid clipping} improves it to $\Theta({d\|\bm{c}_2\|_1^2/ m})$, reducing by a factor of $\Theta(m{\|\bm{c}_2\|_2^2/\|\bm{c}_2\|_1^2})$.
In practical applications, such as the examples in Section \ref{sec: warmup}, the dimension of the residual component space could be large.
Suppose $r_1 = \cdots = r_{m - 1} = r$ for some constant $r$ and $r_{m} = d - (m - 1)r$, and $c_{21} = \dotsb = c_{2m}$ for some constant $c$.
The variance bound from composition is $\Theta(mdc^2)$, while the bound from Theorem \ref{thm: hybrid clipping} is $\Theta(c^2(\sqrt{d - (m - 1)r} + (m - 1)\sqrt{r})^2)$, which is $\Theta(dc^2)$ when $m= o(d)$. Theorem \ref{thm: hybrid clipping} thus reduces the noise variance bound by a factor $m$. As a final remark, the results in Theorem \ref{thm: hybrid clipping} can be generalized to arbitrary hybrid clipping once  the projection of the sensitivity set $\mathsf{S}$ in each subspace satisfies the symmetry property in Definition \ref{def: sign invariance} and Definition \ref{def: permutation invariance}. We assume $l_2$-norm clipping in Theorem \ref{thm: hybrid clipping} for presentation simplicity.    
\vspace{-0.05 in}
\section{Twice Sampling}
\label{sec: twice-sampling}
So far, we have solved the first half of the problem where we showed carefully-constructed optimal Gaussian noise {\em can} reflect the desired asymmetric high-dimensional geometry where the sensitivity magnitude varies in different subspaces. 
However, as Corollary \ref{cor: l_p mix} suggests, in RDP with the pure Gaussian mechanism, the isotropic Gaussian is already optimal for any mixture of $l_p$-norm clippings, and the noise scale is only determined by the maximal $l_2$-norm of the elements in the sensitivity set. 
Thus, we cannot expect tighter privacy analysis to capture additional, non-trivial $l_{\infty}$-norm restrictions (without decreasing the worst-case $l_2$-norm), unless the randomization is not (purely) Gaussian noise. 
We are then motivated to consider whether it is possible that, provided extra analyzable randomization beyond only Gaussian noise, $l_{\infty}$-norm geometry can be properly reflected. 
{In particular, given that the $l_{\infty}$-norm is a coordinate-wise property, can independent sampling across coordinates match the $l_{\infty}$-norm geometry and improve Gaussian noise? We will answer the above questions affirmatively with a carefully-designed sampling strategy, termed twice sampling, to address both the privacy and efficiency challenges.}    

\vspace{- 0.1 in}
\subsection{Coordinate-Wise Poisson Sampling}
Though sampling seems a promising way to introduce fresh randomness, we notice that a straightforward application of standard privacy amplification results still cannot enjoy the gain from the additional $l_{\infty}$-norm constraint. 
Classic amplification results basically state that for any mechanism $\mathcal{M}$ 
satisfying $(\epsilon, \delta)$-DP or $(\alpha, \epsilon(\alpha))$-RDP, $\mathcal{M}$ on $q$-Poisson sampled data satisfies $O(q\epsilon, q\delta)$-DP \cite{li2012sampling} or $O(\alpha, q^2\alpha(e^{\epsilon(2)}-1))$-RDP \cite{zhu2019poission}. 
As Corollary \ref{cor: l_p mix} already gives a negative answer to improving the privacy analysis for the Gaussian mechanism before input sampling,
we cannot obtain a better privacy bound compared to existing works on input-wise subsampled Gaussian \cite{mironov2019r,zhu2019poission}. 
To this end, instead of sampling on the input data dimension, we consider coordinate-wise sampling to exploit the $l_{\infty}$ restriction, a per-coordinate property, and formally describe it as Algorithm \ref{alg: coordinate sampling}.

{At a high level, Algorithm \ref{alg: coordinate sampling} samples independently for each coordinate of a given function $\mathcal{F}$'s outputs.}
We can imagine a matrix $Y \in \mathbb{R}^{n \times d}$, where each row corresponds to a clipped processing on an individual datapoint $\mathcal{CP}(\mathcal{F}(x_i))$. Standard mean estimation based on input-level sampling basically works as sampling a subset of the rows of and returning its empirical mean \cite{mironov2019r}. As a comparison, coordinate-wise sampling independently selects elements in each column of the matrix and returns their empirical mean as an estimation. It is not hard to verify that Algorithm \ref{alg: coordinate sampling} produces an unbiased estimation and the variance of estimation error is the same as that of $q$-input-wise sampling, since they share exactly the same marginal distribution in each coordinate. A formal statement is given as follows.
\vspace{- 0.05 in}
\begin{proposition}[Unbiasedness and Estimation Variance] For an arbitrary processing function $\mathcal{F}$ and an input set $X= \{x_1, x_2, \cdots x_n\}$, let $Y=\{ y_i = \mathcal{CP}(\mathcal{F}(x_i)), i=1,2,\cdots, n\}$ and $\mu= \frac{1}{n} \cdot \sum_{i=1}^n y_i$. Suppose $\bm{o}'$ is the aggregation of a subset of $Y$ generated by $q$-input-wise  Poisson sampling and $\bm{o}$ is the output of Algorithm \ref{alg: coordinate sampling} with $q$-coordinate-wise sampling, then
$\mathbb{E}[\frac{\bm{o}}{nq}]=\mathbb{E}[\frac{\bm{o'}}{nq}]=\mu$ and $\mathbb{E}[\|\frac{\bm{o}}{nq}-\mu\|^2] = \mathbb{E}[\|\frac{\bm{o}'}{nq}-\mu\|^2].$
\label{prop: unbiasedness} 
\end{proposition}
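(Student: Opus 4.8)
The plan is to write both sampling schemes in terms of i.i.d.\ Bernoulli indicators, observe that for each fixed coordinate the two schemes induce the \emph{same} marginal law, and then use the fact that the squared Euclidean error decomposes over coordinates so that the (genuinely different) cross-coordinate dependence structures become irrelevant.

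First I would fix the data, so that $Y$ is a deterministic $n\times d$ matrix with rows $y_i=\mathcal{CP}(\mathcal{F}(x_i))$ and entries $Y_{ij}$, and $\mu=\frac1n\sum_i y_i$. For $q$-input-wise Poisson sampling, write $\bm{o}'=\sum_{i=1}^n b_i\,y_i$ with $b_1,\dots,b_n$ i.i.d.\ $\mathrm{Bernoulli}(q)$, i.e.\ $o'_j=\sum_{i=1}^n b_i Y_{ij}$. For Algorithm~\ref{alg: coordinate sampling}, write $o_j=\sum_{i=1}^n b_{ij}Y_{ij}$ with $\{b_{ij}\}_{i\in[n],\,j\in[d]}$ i.i.d.\ $\mathrm{Bernoulli}(q)$ across \emph{both} indices. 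The sole structural difference is that input-wise sampling reuses one indicator $b_i$ for an entire row, whereas coordinate-wise sampling draws a fresh $b_{ij}$ per entry.

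\emph{Unbiasedness.} By linearity, $\mathbb{E}[o'_j]=q\sum_i Y_{ij}=nq\,\mu_j$ and likewise $\mathbb{E}[o_j]=q\sum_i Y_{ij}=nq\,\mu_j$, so $\mathbb{E}[\bm{o}'/(nq)]=\mathbb{E}[\bm{o}/(nq)]=\mu$. \emph{Variance.} Since $\bm{o}'/(nq)-\mu$ and $\bm{o}/(nq)-\mu$ are centered and $\|\cdot\|^2$ is additive over coordinates, $\mathbb{E}[\|\bm{o}'/(nq)-\mu\|^2]=\frac{1}{n^2q^2}\sum_{j=1}^d\mathrm{Var}(o'_j)$ and similarly for $\bm{o}$. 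For each fixed $j$, both $o'_j$ and $o_j$ are sums of independent terms of the form $b\,Y_{ij}$ with $b\sim\mathrm{Bernoulli}(q)$, hence $\mathrm{Var}(o'_j)=\mathrm{Var}(o_j)=q(1-q)\sum_i Y_{ij}^2$; summing over $j$ gives $\sum_j\mathrm{Var}(o'_j)=\sum_j\mathrm{Var}(o_j)=q(1-q)\sum_i\|y_i\|^2$, so the two estimation-error variances coincide. (If $X$ is itself random, condition on $Y$ and apply the tower rule; the identity holds for every realization.)

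The argument is mostly bookkeeping; the one point worth isolating — and the place where a careless proof could overclaim — is that $\bm{o}'$ and $\bm{o}$ do \emph{not} share a joint distribution: their cross-coordinate covariances differ, with $\mathrm{Cov}(o'_j,o'_k)=q(1-q)\sum_i Y_{ij}Y_{ik}$ for input-wise sampling versus $0$ for coordinate-wise sampling. The claimed equality is therefore \emph{not} a distributional identity; it holds precisely because the Euclidean norm is a coordinate-wise sum, which annihilates the off-diagonal covariance terms. I would make the write-up pin down exactly this, rather than asserting that the two samplers are interchangeable.
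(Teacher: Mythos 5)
Your proof is correct and takes essentially the same route as the paper, which offers no formal proof but justifies the proposition in one line by exactly the observation you formalize: the two samplers have identical per-coordinate marginal laws, and the squared Euclidean error sums over coordinates, so the differing cross-coordinate covariances (which you rightly isolate) never enter. The only bookkeeping caveat is that the output of Algorithm~\ref{alg: coordinate sampling} as written also carries the additive Gaussian noise; since that noise is mean-zero and independent of the sampling, one should either read $\bm{o}$ as the noiseless sampled aggregate (as you implicitly do) or add the same noise to $\bm{o}'$, and in either reading your argument goes through unchanged.
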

\vspace{- 0.05 in}
In the following theorem, we show coordinate-wise sampling does reflect the additional $l_{\infty}$-norm restriction.
\vspace{- 0.05 in}
\begin{restatable}[Privacy Amplification of Coordinate-Wise Sampling]{theorem}{PriAmpCoor} 
In Algorithm \ref{alg: coordinate sampling}, if we select a mixture of $l_{\infty}$-norm clipping and $l_p$-norm for some $p \in (0,2]$, with parameters $c_{\infty}$ and $c_{p}$, respectively, where $d_0\cdot c_{\infty}^p =(c_p)^p$ and $d_0 \leq d$, then the dominating sensitivity of Algorithm \ref{alg: coordinate sampling} is in a form $$(\underbrace{c_{\infty}, \cdots, c_{\infty}}_{d_0},\underbrace{0,\cdots,0}_{d-d_0}).$$
Moreover, the $(\alpha,\epsilon(\alpha))$-RDP of Algorithm \ref{alg: coordinate sampling} has a closed form, where
\begin{equation}
\vspace{-0.1 in}
\epsilon(\alpha) = \frac{d_0}{\alpha-1}\cdot \log\big((1-q)^{\alpha} + \sum_{v=1}^{\alpha}\binom{\alpha}{v}(1-q)^{\alpha-v}q^{v}e^{\frac{v(v-1)c^2_{\infty}}{2\sigma^2}} \big).
\vspace{-0.05 in}
\label{coordinate-sample-bound}
\end{equation}
\label{thm: privacy coordiante sampling}
\end{restatable}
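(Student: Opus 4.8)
The plan is to decompose the privacy analysis coordinate by coordinate, exploiting that Algorithm~\ref{alg: coordinate sampling} samples every coordinate independently, and then to optimize the resulting expression over the sensitivity set. Fix adjacent datasets $X' = X \cup \{x\}$ and write $\bm{s} = \mathcal{CP}(\mathcal{F}(x)) = (s_1, \dots, s_d)$ for the clipped contribution of $x$, so that $\bm{s}$ lies in $\mathsf{S} = \{\bm{v} : \|\bm{v}\|_{\infty} \le c_{\infty},\ \|\bm{v}\|_p \le c_p\}$, i.e. $\|\bm{v}\|_\infty \le c_\infty$ and $\sum_j |v_j|^p \le c_p^p = d_0 c_\infty^p$. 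By the standard reduction for subsampled Gaussian mechanisms, it suffices to bound the Rényi divergence in the worst two-point instance where $x$ is the only active datapoint; coordinate-wise sampling makes this instance factorize, so that the output under $X$ is the product Gaussian $\bigotimes_{j=1}^{d}\mathcal{N}(0,\sigma^2)$ while under $X'$ coordinate $j$ is the independent mixture $(1-q)\mathcal{N}(0,\sigma^2) + q\,\mathcal{N}(s_j,\sigma^2)$. Since Rényi divergence is additive over product distributions, the privacy loss equals $\sum_{j=1}^{d} g(|s_j|)$, where $g(r)$ denotes the one-dimensional subsampled Gaussian RDP at scalar sensitivity $r$ (the reverse direction of the divergence is handled analogously).

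Next I would evaluate $g$. The Gaussian likelihood ratio between $\mathcal{N}(r,\sigma^2)$ and $\mathcal{N}(0,\sigma^2)$ at $z$ is $e^{(2rz - r^2)/(2\sigma^2)}$, so by the Gaussian moment generating function $\mathbb{E}_{z \sim \mathcal{N}(0,\sigma^2)}\big[e^{v(2rz - r^2)/(2\sigma^2)}\big] = e^{v(v-1)r^2/(2\sigma^2)}$. Expanding $\exp\!\big((\alpha-1)\,\mathsf{D}_\alpha\big((1-q)\mathcal{N}(0,\sigma^2) + q\,\mathcal{N}(r,\sigma^2)\,\big\|\,\mathcal{N}(0,\sigma^2)\big)\big) = \mathbb{E}_{z \sim \mathcal{N}(0,\sigma^2)}\big[\big((1-q) + q\,e^{(2rz - r^2)/(2\sigma^2)}\big)^{\alpha}\big]$ by the binomial theorem and taking term-by-term expectations yields
\begin{equation*}
g(r) = \frac{1}{\alpha-1}\log\!\Big((1-q)^{\alpha} + \sum_{v=1}^{\alpha}\binom{\alpha}{v}(1-q)^{\alpha-v}q^{v}e^{v(v-1)r^{2}/(2\sigma^{2})}\Big),
\end{equation*}
and in particular $g(0) = 0$. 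Hence the RDP of Algorithm~\ref{alg: coordinate sampling} equals $\sup_{\bm{s} \in \mathsf{S}}\sum_{j=1}^{d} g(|s_j|)$.

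It remains to solve this maximization, which is the crux of the argument. Substituting $t_j = |s_j|^p \in [0, c_\infty^p]$, the feasible set becomes the polytope $\{(t_1,\dots,t_d): 0 \le t_j \le c_\infty^p,\ \sum_j t_j \le d_0 c_\infty^p\}$ and the objective becomes $\sum_j h(t_j)$ with $h(t) = g(t^{1/p})$. The key point is that $h$ is nondecreasing and convex on $[0, c_\infty^p]$: it is the composition of $t \mapsto t^{2/p}$, which is convex and increasing because $p \in (0,2]$ forces $2/p \ge 1$, with the map $u \mapsto \frac{1}{\alpha-1}\log\!\big(\sum_{v}\binom{\alpha}{v}(1-q)^{\alpha-v}q^{v}e^{v(v-1)u/(2\sigma^2)}\big)$, which is convex and increasing in $u$ since each exponent is affine in $u$ with nonnegative slope. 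Since $h$ is nondecreasing we may assume the budget is saturated, $\sum_j t_j = d_0 c_\infty^p$; and since $\sum_j h(t_j)$ is a sum of convex functions with fixed total, a standard exchange (bang--bang) argument moves every $t_j$ to an endpoint of $[0, c_\infty^p]$ without decreasing the objective, so the maximizer has exactly $d_0$ coordinates equal to $c_\infty^p$ and the remaining $d - d_0$ equal to $0$. Translating back, the dominating sensitivity is $(\underbrace{c_{\infty}, \dots, c_{\infty}}_{d_0}, \underbrace{0, \dots, 0}_{d-d_0})$, and substituting it into $\sum_j g(|s_j|)$ gives $d_0\, g(c_\infty) + (d - d_0)\,g(0)$, which is precisely the closed form~\eqref{coordinate-sample-bound}.

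The step I expect to be the main obstacle is making this optimization rigorous: one must check the convexity of $h$ with care --- this is exactly where the hypothesis $p \le 2$ enters --- and then argue that the maximizer of the convex objective over the box-intersect-halfspace polytope can be taken at the combinatorial vertex with $d_0$ coordinates at $c_\infty^p$ and the rest at $0$, the integrality of $d_0$ being what rules out a leftover fractional coordinate. A secondary point requiring care is the first-step reduction: one should pin down which direction of the Rényi divergence the formula~\eqref{coordinate-sample-bound} corresponds to and verify the reverse direction does not exceed it, so that the stated $\epsilon(\alpha)$ is a legitimate RDP bound for every adjacent pair.
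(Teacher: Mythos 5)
Your proposal is correct and follows essentially the same route as the paper's proof: coordinate-wise factorization of the Rényi divergence thanks to independent per-coordinate sampling and noise, reduction to the one-dimensional subsampled-Gaussian closed form, and then the substitution $y_l=|s_l|^p$ with convexity (using $2/p\ge 1$) to push the maximizer to the vertex with $d_0$ coordinates equal to $c_\infty$ and the rest zero. The only cosmetic differences are that you derive the one-dimensional formula via the Gaussian MGF rather than citing it, you use a log-sum-exp plus monotone exchange argument where the paper invokes its log-convexity lemma and the ``maximum of a convex function over a polyhedron is at a vertex'' step, and the reverse divergence direction you flag is handled in the paper by citing the known dominance result for subsampled Gaussians.
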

\vspace{- 0.05 in}
\begin{proof}
\vspace{-0.05 in}
    See Appendix \ref{app: thm: privacy coordiante sampling}. 
\end{proof}
\vspace{- 0.05 in}
\begin{remark}
By the proof of Theorem \ref{thm: privacy coordiante sampling}, for general sensitivity set $\mathsf{S}$ of a function $\mathcal{F}(\cdot)$, if $\mathsf{S}$ is a convex set, then under $q$-coordinate-wise Poisson sampling and the Gaussian mechanism, for RDP the dominating sensitivity must be on the boundary of $\mathsf{S}$. In particular, if $\mathsf{S}$ is a polytope, then the dominating sensitivity must be within the vertices of $\mathsf{S}$. Thus, in Theorem \ref{thm: privacy coordiante sampling}, $d_0\cdot c_{\infty}^p =(c_p)^p$ is not necessary and can be relaxed to $d_0\cdot c_{\infty}^p + (c')^p =(c_p)^p$, for some $c' \in (0,c_{\infty})$. In this case, for RDP, the dominating sensitivity is instead in a form $( c_{\infty}, \cdots, c_{\infty} ,c',0,\cdots,0)$.
\vspace{-0.1 in}
\end{remark}
Theorem \ref{thm: privacy coordiante sampling} characterizes the form of dominating sensitivity in a sensitivity set  $\mathsf{S} = \{\bm{s} | \|\bm{s}\|_p \leq c_p, \|\bm{s}\|_{\infty} \leq c_{\infty} \}$ in Algorithm \ref{alg: coordinate sampling}, when we use a mixture of $l_p$-norm for $p \in (0,2]$ and $l_{\infty}$ clipping on the processing function $\mathcal{F}$. 
Roughly speaking, to achieve the worst-case divergence, the adversary will concentrate their sensitivity budget on a few coordinates and maximize the magnitude of each. 
When $d_0 \to 1$, $c_{\infty} \to c_p$ and the $l_{\infty}$ restriction becomes weaker,  (\ref{coordinate-sample-bound}) reduces to the regular input-wise subsampled Gaussian mechanism with sampling rate $q$ \cite{mironov2019r}. 
The dominating sensitivity turns out to be the one-hot vector. 
In other words, without the $l_{\infty}$-norm restriction, the privacy amplification by coordinate-wise sampling is the same as that of input-wise sampling with the same rate $q$.  

However, given the additional $l_{\infty}$-norm restriction, for example, if we select $p=2$, Theorem \ref{thm: privacy coordiante sampling} suggests that not all of the elements on the $l_2$-norm ball sphere in $\mathsf{S}$ behave as the dominating sensitivity, which is the key to render a sharpened privacy guarantee. This does not contradict our previous analysis on the pure Gaussian mechanism, since in Algorithm \ref{alg: coordinate sampling} the distribution of each output coordinate is an independent Gaussian mixture rather than pure Gaussian. In the following theorem, we present 
a formal quantification on the asymptotic improvement of the privacy analysis by coordinate-wise sampling. 
\vspace{- 0.05 in}
\begin{restatable}[Asymptotic Privacy Improvement through Coordinate-wise Sampling]{theorem}{AsympPrivCoor} 
Under the same setup as Theorem \ref{thm: privacy coordiante sampling}, let $\tau = (c_p/\sqrt{2}\sigma)^2$, then for any sampling rate $q$, when $d_0$ is sufficiently large, the $(\alpha, \epsilon(\alpha))$-RDP bound (\ref{coordinate-sample-bound}) of Algorithm \ref{alg: coordinate sampling} converges to $\epsilon(\alpha) = \alpha q^2\tau.$ 
As a comparison, when $d_0=1$, (\ref{coordinate-sample-bound}) is equivalent to input-wise subsampled Gaussian mechanism with rate $q$. In the regime where $q$ is small such that $q < 1/(2\alpha e^{\alpha\tau})$, then $\epsilon(\alpha) = \Theta\big(\alpha q^2 (e^{\tau}-1)\big)$; when $\alpha(\alpha-1)\tau \geq 2$ and $q$ is relatively large such that $q \geq 1/(e^{(\alpha-1)\tau/2})$, then $\epsilon(\alpha) = \Omega(\alpha\tau)$. 
\label{thm: coordinate-wise-sampling improvement}
\end{restatable}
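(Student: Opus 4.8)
The plan is to work directly with the closed-form bound (\ref{coordinate-sample-bound}) and to reduce each of the three claims to an estimate on a single scalar quantity. Abbreviate $\beta = c_\infty^2/(2\sigma^2)$; the constraint $d_0 c_\infty^p = c_p^p$ gives $\beta = \tau\, d_0^{-2/p}$, so $\beta d_0 = \tau$ when $p=2$ and $\beta d_0 = \tau d_0^{1-2/p}\to 0$ when $p\in(0,2)$. The first step is a bookkeeping rewrite: because $e^{v(v-1)\beta}=1$ for $v\in\{0,1\}$ and $\sum_{v=0}^{\alpha}\binom{\alpha}{v}(1-q)^{\alpha-v}q^v = 1$, the argument of the logarithm in (\ref{coordinate-sample-bound}) equals $1+R$ with
\[
R \;=\; \sum_{v=2}^{\alpha}\binom{\alpha}{v}(1-q)^{\alpha-v}q^v\bigl(e^{v(v-1)\beta}-1\bigr)\;\ge\;0,
\]
so that $\epsilon(\alpha) = \frac{d_0}{\alpha-1}\log(1+R)$. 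All three claims then follow from estimating $R$ and, when $R$ is small, replacing $\log(1+R)$ by $R$.

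For the $d_0\to\infty$ statement I would Taylor-expand $e^{v(v-1)\beta} = 1 + v(v-1)\beta + O(\beta^2)$, with the implied constant uniform over the finitely many $v\le\alpha$. Substituting and using the second factorial moment of $\mathrm{Binomial}(\alpha,q)$, namely $\sum_{v}\binom{\alpha}{v}(1-q)^{\alpha-v}q^v\,v(v-1) = \alpha(\alpha-1)q^2$, gives $R = \alpha(\alpha-1)q^2\beta + O(\beta^2)$, hence $\log(1+R) = \alpha(\alpha-1)q^2\beta + O(\beta^2)$ and therefore $\epsilon(\alpha) = \frac{d_0}{\alpha-1}\log(1+R) = \alpha q^2(\beta d_0) + O(\beta^2 d_0)$. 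For $p=2$ this is $\alpha q^2\tau + O(\tau^2/d_0)\to\alpha q^2\tau$; for $p\in(0,2)$ both the main term $\alpha q^2\tau d_0^{1-2/p}$ and the error vanish, so the limiting bound is at most $\alpha q^2\tau$.

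For $d_0=1$ we have $c_\infty = c_p$, $\beta = \tau$, and (\ref{coordinate-sample-bound}) is exactly the standard subsampled-Gaussian RDP moment bound with $l_2$-sensitivity $c_p$ and noise $\sigma$; so it remains to estimate $\epsilon(\alpha) = \frac{1}{\alpha-1}\log(1+R)$ with $R = \sum_{v=2}^{\alpha}\binom{\alpha}{v}(1-q)^{\alpha-v}q^v(e^{v(v-1)\tau}-1)$. In the small-$q$ regime, a lower bound comes from keeping only the $v=2$ term, $R\ge\binom{\alpha}{2}(1-q)^{\alpha-2}q^2(e^{2\tau}-1) = \Omega(\alpha^2 q^2(e^{2\tau}-1))$; for the matching upper bound I would control the ratio of consecutive summands $\frac{\binom{\alpha}{v+1}}{\binom{\alpha}{v}}\cdot\frac{q}{1-q}\cdot\frac{e^{v(v+1)\tau}-1}{e^{v(v-1)\tau}-1}$ and show that the threshold $q < 1/(2\alpha e^{\alpha\tau})$ forces the series to be within a constant factor of its $v=2$ term, so that $R = \Theta(\alpha^2 q^2(e^{2\tau}-1)) = O(1)$, whence $\log(1+R) = \Theta(R)$ and $\epsilon(\alpha) = \Theta(\alpha q^2(e^{2\tau}-1)) = \Theta(\alpha q^2(e^{\tau}-1))$ (the last step absorbing the bounded factor $e^\tau+1$, legitimate in the regime $\tau = O(1)$ implicit in the threshold on $q$). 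For the large-$q$ regime I would instead discard all but the top summand: $1+R\ge q^{\alpha}e^{\alpha(\alpha-1)\tau}\ge e^{-\alpha(\alpha-1)\tau/2}e^{\alpha(\alpha-1)\tau} = e^{\alpha(\alpha-1)\tau/2}$ by the hypothesis $q\ge e^{-(\alpha-1)\tau/2}$, so $\epsilon(\alpha)\ge\frac{\alpha(\alpha-1)\tau}{2(\alpha-1)} = \frac{\alpha\tau}{2} = \Omega(\alpha\tau)$; the extra hypothesis $\alpha(\alpha-1)\tau\ge 2$ guarantees $1+R\ge e$, ensuring the bound is bounded away from $0$ and confirming that input-wise subsampling offers no asymptotic improvement over the unsubsampled Rényi divergence $\alpha\tau$ in this regime.

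The step I expect to be the main obstacle is the upper bound on $R$ in the small-$q$ regime: one must show the whole tail $\sum_{v\ge 3}$ is dominated by the $v=2$ term \emph{under exactly the stated threshold} $q < 1/(2\alpha e^{\alpha\tau})$, which a crude $\binom{\alpha}{v}\le\alpha^v$ estimate does not deliver. Calibrating the constant appears to require carefully tracking how the binomial factor $\binom{\alpha}{v+1}/\binom{\alpha}{v} = (\alpha-v)/(v+1)\le\alpha/3$ partially cancels the growth $e^{2v\tau}$ of the exponential ratio, together with a uniform control of the gap between $\log(1+R)$ and $R$. By comparison, the $d_0\to\infty$ limit needs only uniformity of a Taylor remainder over the $\alpha+1$ values of $v$, and the large-$q$ bound is essentially a one-line estimate from a single term.
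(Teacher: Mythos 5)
Your treatment of the $d_0\to\infty$ limit (Taylor expansion plus the second factorial moment of the binomial, which is the same computation as the paper's identity $\binom{\alpha}{v}=\frac{\alpha(\alpha-1)}{v(v-1)}\binom{\alpha-2}{v-2}$) and of the large-$q$ regime (keep only the $v=\alpha$ term and use $q\ge e^{-(\alpha-1)\tau/2}$) is correct and essentially identical to the paper's argument; your extra care with $p<2$, where $d_0 c_\infty^2/(2\sigma^2)=\tau d_0^{1-2/p}\to 0$, is if anything slightly more careful than the paper, which implicitly takes $c_\infty^2/(2\sigma^2)=\tau/d_0$.

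The genuine gap is exactly the step you flagged, and the repair you propose would not close it. A domination-by-the-$v=2$-term argument cannot succeed at the stated threshold $q<1/(2\alpha e^{\alpha\tau})$ with constants uniform in $\tau$: take $\alpha\ge 3$, $\tau$ large, and $q$ just below the threshold; then the $v=2$ term is of order $\alpha^2 q^2(e^{2\tau}-1)\approx e^{-2(\alpha-1)\tau}$, while the $v=\alpha$ term is $q^\alpha\bigl(e^{\alpha(\alpha-1)\tau}-1\bigr)\approx(2\alpha)^{-\alpha}e^{-\alpha\tau}$, and since $\alpha<2(\alpha-1)$ the top term dominates, so $R$ is not within a constant factor of its $v=2$ term and $\epsilon(\alpha)$ exceeds any fixed multiple of $\alpha q^2(e^\tau-1)$. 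The paper does not prove the claim at that threshold either: its proof imposes the much stronger hypothesis $q^{1/6}\le 1/(\alpha e^{\alpha\tau})$, bounds each $v\ge 3$ term crudely by $(\alpha q e^{(v-1)\tau})^v\le q^{2.5}$, and concludes only in the limit $q\to 0$ with $\alpha,\tau$ fixed; i.e., the $\Theta(\alpha q^2(e^\tau-1))$ statement is really a $q\to 0$ asymptotic, the stated threshold being indicative. To complete your proof you should take that route: bound the whole tail $v\ge 3$ by a higher power of $q$ (so it is $o(q^2)$), not by a constant multiple of the $v=2$ term, and state the conclusion as an asymptotic in $q$. Relatedly, your justification for absorbing the factor $e^\tau+1$ (``$\tau=O(1)$ implicit in the threshold on $q$'') is not valid, since the threshold constrains $q$ and not $\tau$; under the fixed-$\tau$, $q\to 0$ reading this looseness is harmless (the paper itself writes $e^\tau-1$ where the $v=2$ term carries $e^{2\tau}-1$), but it does not follow from the hypothesis on $q$.
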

\begin{proof}
\vspace{- 0.05 in}
See Appendix \ref{app: thm: coordinate-wise-sampling improvement}.
\vspace{- 0.05 in}
\end{proof}

Theorem \ref{thm: coordinate-wise-sampling improvement} provides a tight quantification on the improvement through coordinate-wise sampling when we are allowed to assume sufficiently small, but non-trivial, $l_{\infty}$-norm restriction. In Theorem \ref{thm: coordinate-wise-sampling improvement}, $\tau$ is essentially the $\epsilon(2)$ of the $(2,\epsilon(2))$-RDP bound for the pure Gaussian mechanism. Thus, for a standard $q$ input-wise subsampled Gaussian mechanism, when $q$ is sufficiently small, though the produced $\epsilon$ is still $\Theta(q^2)$, Algorithm \ref{alg: coordinate sampling} improves the factor from $\Theta(e^{\tau}-1)$ to $\tau$, which will be helpful given small noise and large $\tau$.

Theorem \ref{thm: coordinate-wise-sampling improvement} also characterizes an important phenomenon for subsampled Gaussians that, when $\alpha$ is relatively large, the effect of privacy amplification will diminish, where the security parameter $\epsilon(\alpha)$ is independent of $q$. This is also observed in previous works \cite{zhu2019poission, wang2019subsampled}. This means that $\epsilon(\alpha)$ in RDP with larger $\alpha$ will not benefit from sampling and we may not use larger $\alpha$ to obtain tighter composition results in practice. More details can be found in Fig. \ref{Fig. amplification RDP} in Section \ref{sec: simulation sampling}. This is also one of the primary reasons why smaller sampling rate produces worse SNR when provided small noise. As a comparison, coordinate-wise sampling can always provide a tight bound $\epsilon(\alpha) = \alpha q^2\tau$ to enjoy the $q^2$ amplification in any setups with assistance of small enough $l_{\infty}$-norm restriction. This brings asymptotic improvement on the converted $(\epsilon, \delta)$ guarantee produced and could significantly narrow the performance gap using smaller sampling rate as shown later in Section \ref{sec: simulation sampling} and Section \ref{sec: additional exp}.

\begin{algorithm}[t]
\caption{Private Aggregation with $q$-Coordinate-Wise Poisson (i.i.d.) Sampling}
\begin{algorithmic}[1]
\STATE \textbf{Input:} A processing function $\mathcal{F}(\cdot): \mathcal{X}^* \to \mathbb{R}^d$, a sensitive input set $X = \big\{x_1, x_2, ... ,x_n\big\} \in \mathcal{X}^n$ of $n$ datapoints, Poisson sampling rate $q$, clipping operator $\mathcal{CP}$ and Gaussian noise variance parameter $\sigma$.
\STATE Apply $\mathcal{F}$ on each $x_i$ for $i=1,2,\cdots, n$, and apply $\mathcal{CP}$ on $\mathcal{F}(x_i)$, suppose $y_i = \mathcal{CP}(\mathcal{F}(x_i))$. 
\FOR{$l=1,2,...,d$}
   \STATE Apply independent Poisson sampling with rate $q$ on index $[1:n]$ and obtain an index set $\mathcal{I}_l = \big\{[1_l], [2_l], \cdots, [B_l]\big\}$, where $[\cdot]$ represents some permutation.
   \STATE Compute the aggregation of the $l$-th coordinate of selected indexes $\sum_{i \in \mathcal{I}_l} y_i(l)$, and independently generate a noise $e_l \sim \mathcal{N}(0,\sigma^2)$.
   \STATE $o_l = \sum_{i \in \mathcal{I}^{(l)}} y_i(l) + \mathcal{N}(0,\sigma^2)$. 
\ENDFOR 
\STATE \textbf{Output}: $\bm{o} = (o_1, o_2, \cdots, o_d)$. 
\end{algorithmic}
\label{alg: coordinate sampling}
\end{algorithm}

Before the end of this section, we have a final remark on the generalization of Algorithm \ref{alg: coordinate sampling}, where coordinate-wise sampling with enhanced privacy can be applied to more generic processing beyond aggregation. In general, for an arbitrary function $\mathcal{F}$ and a dataset $X$, for the $l$-th coordinate estimation, we may randomly sample a subset $J_l$ from $X$ and take $\mathcal{F}(J_l)(l)$ as the output. The results in Proposition \ref{prop: unbiasedness} and Theorem \ref{thm: privacy coordiante sampling} also apply to such a scenario if one can ensure the following sensitivity guarantee: for any selection of $\bar{J}=(J_1, J_2, \cdots, J_d)$ from arbitrary $X$ and a differing datapoint $x$, the $d$-dimensional vector $\big(\mathcal{F}(J_1)(1)-\mathcal{F}(J_1\cup x)(1), \cdots, \mathcal{F}(J_d)-\mathcal{F}(J_d\cup x)\big)$ is within the intersection between an $l_p$-norm ($p\in(0,2])$ and an $l_{\infty}$-norm ball.

\subsection{Twice Sampling Algorithm}
\label{subsec: twice sampling}

As shown in the previous section, coordinate-wise Poisson sampling could enable the Gaussian mechanism to benefit from an additional $l_{\infty}$-norm sensitivity restriction, which is usually free for high-dimensional tasks. 
However, it could be inefficient since, in general, preprocessing is required to compute $\mathcal{F}(x_i)$ for each datapoint $x_i$  before the sampling.
Though in some applications, for example, mean estimation on a given dataset or when the computation of each coordinate of $\mathcal{F}(x_i)$ is independent, such preprocessing is not necessary and Algorithm \ref{alg: coordinate sampling} can still be implemented in $O(ndq)$ time.  
However, deep learning is a negative example, where the gradient computation of a neural network requires back-propagation \cite{cilimkovic2015neural}. Before we can evaluate a single coordinate of a gradient, we need to first calculate the gradients of other parameters in latter layers in a sequential manner, which requires $O(nqd^2)$ time without preprocessing. Thus, in general, Algorithm \ref{alg: coordinate sampling} would take $O(\max\{nqd^2,nd\})$ time, rather than the ideal $O(nqd)$ complexity if we adopt a standard $q$-input-wise sampling. To tackle this efficiency challenge, we propose an alternative method termed {\em twice sampling} formally presented in Algorithm \ref{alg: sampling twice}.  

Twice sampling is a neat composition of input-wise sampling and coordinate-wise sampling. Instead of applying Algorithm \ref{alg: coordinate sampling} on the entire data, we will first apply $q_1$-Poisson sampling on the dataset to generate a subset, of expected size $nq_1$, as the input to Algorithm \ref{alg: coordinate sampling}. Thus, from an efficiency perspective, twice sampling only takes $O(nq_1d)$ time at most as we only need to preprocess the subset of samples generated from the first round $q_1$-sampling. It is not hard to verify that, for the marginal distribution of each coordinate, each sample will be selected with probability $q_1q_2$. However, we must stress that twice sampling is {\em not} equivalent to an independent coordinate-wise sampling with parameter $q_1q_2$, since now different coordinates become correlated; or they are only independent conditional on the selection from the first round input-wise sampling. Thus, the privacy analysis of twice sampling is non-trivial and more complicated compared to that of Algorithm \ref{alg: coordinate sampling}. After a careful study on the Rényi divergence between Gaussian mixture models, we derive a closed-form RDP bound of Algorithm \ref{alg: sampling twice}, summarized as the following theorem.  
\vspace{- 0.05 in}
\begin{restatable}[Privacy Amplification from Twice Sampling]{theorem}{PrivAmplTS}
If a data processing function $\mathcal{F}$ is perturbed with the Gaussian mechanism, and under $q_2$-coordinate-wise Poisson sampling, it satisfies $\big(\alpha_0, \epsilon_0(\alpha_0)\big)$-RDP for $\alpha_0 = 2,3, \cdots$, then with twice sampling of parameters $(q_1, q_2)$, it satisfies $(\alpha, \epsilon(\alpha))$-RDP, where $\epsilon(\alpha)$ is in a form 
{
\begin{equation}
\epsilon(\alpha) = \frac{\log\big((1-q_1)^{\alpha}+ \sum_{v=1}^{\alpha}\binom{\alpha}{v}(1-q_1)^{\alpha-v}q^{v}_1e^{(v-1)\epsilon_0(v)} \big)}{\alpha-1}. 
\label{twice-sampling-bound}
\end{equation}
}
\label{thm: sampling twice}
\end{restatable}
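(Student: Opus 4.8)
The plan is to reduce the analysis of twice sampling to the standard privacy‑amplification‑by‑Poisson‑subsampling argument, applied to the \emph{base} mechanism $\mathcal{M}_2$ defined as Algorithm \ref{alg: coordinate sampling} run with coordinate‑wise rate $q_2$, which by hypothesis satisfies $(\alpha_0,\epsilon_0(\alpha_0))$-RDP for every integer $\alpha_0\ge 2$. Fix adjacent datasets $X$ and $X'=X\cup\{x^*\}$, let $S$ be the (random) subset produced by the first‑round $q_1$-Poisson sampling, and let $\nu_S$ denote the output distribution of $\mathcal{M}_2(S)$. By the law of total probability, $\mathbb{P}_{\mathcal{M}(X)}=\mathbb{E}_{S}[\nu_S]$ with $S$ a $q_1$-Poisson sample of $X$; re‑indexing the first‑round sample $S'$ of $X'$ by the pair $\big(S,\ \mathbf{1}\{x^*\in S'\}\big)$, where $S=S'\cap X$ has exactly the same law and $\mathbf{1}\{x^*\in S'\}$ is an independent $\mathrm{Bernoulli}(q_1)$, gives $\mathbb{P}_{\mathcal{M}(X')}=\mathbb{E}_{S}\big[(1-q_1)\nu_S+q_1\nu_{S\cup\{x^*\}}\big]$. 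Thus both output distributions are mixtures over $S$ with \emph{identical} mixing weights; the inner coordinate‑wise sampling is fully absorbed into $\nu_S$, which is the only place the coordinates' correlation lives.

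Next I use the joint convexity of $(a,b)\mapsto a^{\alpha}b^{1-\alpha}$ on $(0,\infty)^2$ for $\alpha\ge 1$ (it is the perspective of the convex map $t\mapsto t^{\alpha}$), which yields $e^{(\alpha-1)\mathcal{D}_{\alpha}(\mathbb{E}_\theta P_\theta\|\mathbb{E}_\theta Q_\theta)}\le \mathbb{E}_\theta\, e^{(\alpha-1)\mathcal{D}_{\alpha}(P_\theta\|Q_\theta)}$ for any common mixing law. Taking $\theta=S$ reduces the outer divergence to the conditional one, $e^{(\alpha-1)\mathcal{D}_{\alpha}(\mathbb{P}_{\mathcal{M}(X')}\|\mathbb{P}_{\mathcal{M}(X)})}\le \mathbb{E}_{S}\big[e^{(\alpha-1)\mathcal{D}_{\alpha}((1-q_1)\nu_S+q_1\nu_{S\cup\{x^*\}}\,\|\,\nu_S)}\big]$. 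For each fixed $S$, since $\alpha$ is an integer I expand $\big(1-q_1+q_1\,\nu_{S\cup\{x^*\}}/\nu_S\big)^{\alpha}$ by the binomial theorem and integrate against $\nu_S$; the $v$-th term is $\binom{\alpha}{v}(1-q_1)^{\alpha-v}q_1^{v}\int \nu_{S\cup\{x^*\}}^{v}\nu_S^{1-v}=\binom{\alpha}{v}(1-q_1)^{\alpha-v}q_1^{v}\,e^{(v-1)\mathcal{D}_{v}(\nu_{S\cup\{x^*\}}\|\nu_S)}$, with the $v=0$ and $v=1$ moments both equal to $1$. Because $S$ and $S\cup\{x^*\}$ are adjacent, the $(v,\epsilon_0(v))$-RDP of $\mathcal{M}_2$ gives $\mathcal{D}_{v}(\nu_{S\cup\{x^*\}}\|\nu_S)\le\epsilon_0(v)$ for integer $v\ge 2$, so each term is at most $\binom{\alpha}{v}(1-q_1)^{\alpha-v}q_1^{v}e^{(v-1)\epsilon_0(v)}$ uniformly in $S$ (writing the $v=1$ term's $1$ as $e^{(1-1)\epsilon_0(1)}$). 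Summing over $v$, taking $\mathbb{E}_S$ of a constant, then $\log$ and dividing by $\alpha-1$, gives exactly the bound (\ref{twice-sampling-bound}) for $\mathcal{D}_{\alpha}(\mathbb{P}_{\mathcal{M}(X')}\|\mathbb{P}_{\mathcal{M}(X)})$.

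It remains to bound the reverse divergence $\mathcal{D}_{\alpha}(\mathbb{P}_{\mathcal{M}(X)}\|\mathbb{P}_{\mathcal{M}(X')})$ by the same quantity, and I expect this to be the main obstacle. The same joint‑convexity step reduces it to bounding $\mathcal{D}_{\alpha}(\nu_S\,\|\,(1-q_1)\nu_S+q_1\nu_{S\cup\{x^*\}})$ for each fixed $S$, but now the mixture sits in the denominator, the clean binomial expansion is unavailable, and a crude estimate such as $\mathcal{D}_{\alpha}\le-\log(1-q_1)$ is too lossy (it fails to vanish as $\epsilon_0\to 0$). The resolution is to exploit the structure of the base mechanism: each $\nu_S$ is a product over the $d$ coordinates of Gaussian mixtures, and $(\nu_S,\nu_{S\cup\{x^*\}})$ differs only through a bounded shift of the ``selected'' component in each coordinate. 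For such shifted‑Gaussian likelihood ratios one shows, extending the monotonicity established for the plain subsampled Gaussian mechanism in \cite{mironov2019r} (and already used inside the proof of Theorem \ref{thm: privacy coordiante sampling}), that the mixture‑in‑the‑numerator direction is the binding one, i.e. $\mathcal{D}_{\alpha}(\nu_S\,\|\,(1-q_1)\nu_S+q_1\nu_{S\cup\{x^*\}})\le \mathcal{D}_{\alpha}((1-q_1)\nu_S+q_1\nu_{S\cup\{x^*\}}\,\|\,\nu_S)$, which the previous paragraph already controls by (\ref{twice-sampling-bound}). Combining the two directions gives the claimed $(\alpha,\epsilon(\alpha))$-RDP. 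As a sanity check, when $\epsilon_0\equiv 0$ the bracket in (\ref{twice-sampling-bound}) collapses to $\sum_{v=0}^{\alpha}\binom{\alpha}{v}(1-q_1)^{\alpha-v}q_1^{v}=1$, so $\epsilon(\alpha)=0$, consistent with a perfectly private base mechanism remaining perfectly private under any further subsampling.
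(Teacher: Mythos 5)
Your first direction is sound and is essentially the paper's argument in a lightly different parametrization: you condition on the first-round sample $S$, write $\mathbb{P}_{\mathcal{M}(X)}=\mathbb{E}_S[\nu_S]$ and $\mathbb{P}_{\mathcal{M}(X')}=\mathbb{E}_S[(1-q_1)\nu_S+q_1\nu_{S\cup\{x^*\}}]$, apply joint convexity of $(a,b)\mapsto a^{\alpha}b^{1-\alpha}$, expand binomially, and invoke the $(v,\epsilon_0(v))$-RDP of the coordinate-sampled base mechanism on the adjacent pair $(S,S\cup\{x^*\})$; this reproduces the paper's bound for $\mathcal{D}_{\alpha}(\mathbb{P}_{\mathcal{M}(X')}\|\mathbb{P}_{\mathcal{M}(X)})$, i.e., exactly (\ref{twice-sampling-bound}).

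The genuine gap is the reverse direction. The inequality you assert, $\mathcal{D}_{\alpha}\big(\nu_S\,\|\,(1-q_1)\nu_S+q_1\nu_{S\cup\{x^*\}}\big)\le \mathcal{D}_{\alpha}\big((1-q_1)\nu_S+q_1\nu_{S\cup\{x^*\}}\,\|\,\nu_S\big)$, is not an off-the-shelf "extension" of \cite{mironov2019r}: that monotonicity is established there for a one-dimensional \emph{pure} Gaussian versus a two-component shifted mixture, whereas in your conditioning both $\nu_S$ and $\nu_{S\cup\{x^*\}}$ are $d$-fold products of many-component Gaussian mixtures (the coordinate-wise selections of the common points are absorbed into $\nu_S$), and for general pairs of distributions the mixture-in-the-denominator direction can in fact dominate — this is precisely why the generic amplification bound of \cite{zhu2019poission} carries extra correction terms. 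Proving this comparison is the main technical content of the paper's proof: it conditions \emph{further}, on the coordinate-wise selections of the common data, so that the reference measure $\mu_0$ is a pure product Gaussian and $\mu_1$ a product of two-component shifted mixtures; it then uses the decomposition $P=Q+q_1\mu_0-q_1\mu_1$, reduces the reverse direction to the moment comparison $\mathbb{E}_{\mu_0}[(2-\mu_1/\mu_0)^v]\le\mathbb{E}_{\mu_0}[(\mu_1/\mu_0)^v]$, i.e., to nonnegativity of the Pearson--Vajda pseudo-divergences $\mathbb{E}_{\mu_0}[(\mu_1/\mu_0-1)^j]$, and establishes that by expanding the likelihood ratio over coordinates into a positive-coefficient polynomial in one-dimensional $\chi^t$-pseudo-divergences between equal-variance Gaussians. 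Your coarser conditioning on $S$ alone actually makes this last step unavailable as stated, because the per-coordinate factors of $\nu_S$ are themselves mixtures, so the known one-dimensional Gaussian positivity does not apply verbatim. As written, the sentence "one shows, extending the monotonicity \dots" is therefore a placeholder for the heart of the proof; you would need either to refine the conditioning as the paper does and run the Pearson--Vajda argument, or supply a new proof of the asserted monotonicity for products of equal-variance Gaussian mixtures.
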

\begin{proof}
\vspace{-0.1 in}
    See Appendix \ref{app: thm: sampling twice}. 
\vspace{-0.1 in}
\end{proof}

{In Theorem {\ref{thm: sampling twice}}, the two sampling parameters $(q_1, q_2)$ both affect the final privacy bound {(\ref{twice-sampling-bound})} produced, where, on one hand, {(\ref{twice-sampling-bound})} straightforwardly decreases as the input-wise sampling rate $q_1$ decreases; meanwhile, the coordinate-wise sampling rate $q_2$ also has an implicit influence on {(\ref{twice-sampling-bound})}, captured by $\epsilon_0(v)$. It is noted that we assume the objective processing function $\mathcal{F}$ satisfies $(\alpha_0, \epsilon_0(\alpha_0))$-RDP with $q_2$-coordinate-wise sampling only. The $(\alpha_0, \epsilon_0(\alpha_0))$-RDP is captured by Theorem {\ref{thm: privacy coordiante sampling}} and, in the same setup, a smaller $q_2$ also leads to a smaller $\epsilon_0(\alpha_0)$, and thus smaller $\epsilon(\alpha)$ in ({\ref{twice-sampling-bound}}).} 

From Theorem \ref{thm: sampling twice}, there are two steps to calculate concrete RDP parameters of twice sampling. First, via (\ref{coordinate-sample-bound}) in Theorem \ref{thm: privacy coordiante sampling}, we can determine the RDP parameters  $(\alpha_0, \epsilon(\alpha_0))$  given different $\alpha_0$ for a coordinate-wise sampling of rate $q_2$. Second, plugging those numbers into (\ref{twice-sampling-bound}) in Theorem \ref{thm: sampling twice} will produce the final RDP bound. Comparing Theorem \ref{thm: privacy coordiante sampling} with Theorem \ref{thm: sampling twice}, we have several comments. On one hand, twice sampling is a tradeoff between efficiency and privacy. As mentioned before, though marginally the distribution of each coordinate of $(q_1, q_2)$-twice sampling is equivalent to that of $q_1q_2$ coordinate-wise sampling, the privacy enhancement of twice sampling is weaker. In (\ref{coordinate-sample-bound}), if we select $q=q_1q_2$, we will obtain a smaller $\epsilon(\alpha)$ bound compared to that in (\ref{twice-sampling-bound}). This is not surprising, as in twice sampling we put more restriction on the sampling randomness, where coordinate-wise sampling is only implemented on subsampled data rather than the entire set, and thus less privacy amplification is produced. But it is worthwhile to note that for both cases, Algorithm \ref{alg: coordinate sampling} and Algorithm \ref{alg: sampling twice} indeed asymptotically achieve the same amplification. We present a formal statement as follows.

\begin{restatable}[Asymptotic Privacy Improvement from Twice Sampling]{corollary}{AsympPrivTSim}
\label{cor: twice-sampling improvement}
For $(q_1, q_2)$-twice sampling Algorithm \ref{alg: sampling twice} with a mixture of $l_p$-norm $(p \in (0,2])$ and $l_{\infty}$-norm clipping as described in Theorem \ref{thm: coordinate-wise-sampling improvement}, if $d_0$ is sufficiently large, then Algorithm \ref{alg: sampling twice} satisfies $(\alpha, \epsilon(\alpha))$-RDP where $\epsilon(\alpha) $ equals
$${\frac{\log\big((1-q_1)^{\alpha}+ \sum_{v=1}^{\alpha}\binom{\alpha}{v}(1-q_1)^{\alpha-v}q^{v}_1e^{v(v-1)q^2_2\tau} \big)}{\alpha-1}},$$
where $\tau = (c_p/\sqrt{2}\sigma)^2$. When $q_1$ and $q_2$ are sufficiently small, $\epsilon(\alpha) = O(\alpha(q_1q_2)^2\tau)$.
\end{restatable}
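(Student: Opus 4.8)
The plan is to obtain the corollary directly by chaining Theorem~\ref{thm: sampling twice} with Theorem~\ref{thm: coordinate-wise-sampling improvement}, and then to run an elementary asymptotic expansion of the resulting closed form. First I would recall that, under the mixed $l_p$/$l_\infty$ clipping of Theorem~\ref{thm: privacy coordiante sampling}, Theorem~\ref{thm: coordinate-wise-sampling improvement} gives that the $q_2$-coordinate-wise-sampled Gaussian mechanism satisfies $(\alpha_0,\epsilon_0(\alpha_0))$-RDP with $\epsilon_0(\alpha_0)\to\alpha_0 q_2^2\tau$ as $d_0\to\infty$, for every integer order $\alpha_0\ge 2$, where $\tau=(c_p/\sqrt2\sigma)^2$. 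Plugging $\epsilon_0(v)=vq_2^2\tau$ into the twice-sampling bound~(\ref{twice-sampling-bound}) of Theorem~\ref{thm: sampling twice} makes the exponent $(v-1)\epsilon_0(v)$ equal to $v(v-1)q_2^2\tau$, which is exactly the closed form asserted in the corollary; the only thing to check here is that one may pass the $d_0\to\infty$ limit through~(\ref{twice-sampling-bound}), which is legitimate since the right-hand side is a finite sum of continuous, monotone functions of the finitely many quantities $\epsilon_0(2),\dots,\epsilon_0(\alpha)$ that it references.

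For the small-$(q_1,q_2)$ estimate, I would rewrite the argument of the logarithm using the binomial identity $\sum_{v=0}^{\alpha}\binom{\alpha}{v}(1-q_1)^{\alpha-v}q_1^{v}=1$. Splitting off the $v=0$ term and the $v=1$ term (the latter contributing $e^{0}-1=0$ to the correction), the argument becomes $1+\sum_{v=2}^{\alpha}\binom{\alpha}{v}(1-q_1)^{\alpha-v}q_1^{v}\bigl(e^{v(v-1)q_2^2\tau}-1\bigr)$. Applying $\log(1+x)\le x$, and using $e^{x}-1\le 2x$ on $0\le x\le 1$ (valid once $q_2$ is small enough that $\alpha(\alpha-1)q_2^2\tau\le 1$), I get
\[
\epsilon(\alpha)\;\le\;\frac{2q_2^2\tau}{\alpha-1}\sum_{v=2}^{\alpha}\binom{\alpha}{v}v(v-1)(1-q_1)^{\alpha-v}q_1^{v}.
\]
Then I would invoke $\binom{\alpha}{v}v(v-1)=\alpha(\alpha-1)\binom{\alpha-2}{v-2}$ and the binomial theorem once more, since $\sum_{v=2}^{\alpha}\binom{\alpha-2}{v-2}(1-q_1)^{\alpha-v}q_1^{v}=q_1^2$, so the sum collapses to $\alpha(\alpha-1)q_1^2$ and hence $\epsilon(\alpha)\le 2\alpha q_1^2q_2^2\tau=O\bigl(\alpha(q_1q_2)^2\tau\bigr)$.

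I expect the only genuinely delicate point to be the interchange of the $d_0\to\infty$ limit with the composition bound~(\ref{twice-sampling-bound}): one must verify that Theorem~\ref{thm: coordinate-wise-sampling improvement} supplies the convergence $\epsilon_0(v)\to vq_2^2\tau$ for each order $v\in\{2,\dots,\alpha\}$ (a finite index set, so pointwise convergence is enough) and that~(\ref{twice-sampling-bound}) is continuous and monotone in those arguments, so that the stated equality is to be read as the $d_0\to\infty$ limit rather than an exact identity at finite $d_0$. Everything downstream is the routine binomial bookkeeping sketched above. I would also remark that $q_1$ need not actually be taken small for the displayed inequality to hold — only $q_2$ is used in the bound $e^{x}-1\le 2x$ — but phrasing both as ``sufficiently small'' keeps the asymptotic statement clean and matches the low-sampling-rate regime in which twice sampling is deployed.
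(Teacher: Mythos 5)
Your proposal is correct and follows essentially the same route as the paper's own proof: plug the $d_0\to\infty$ coordinate-wise bound $\epsilon_0(v)=vq_2^2\tau$ from Theorem~\ref{thm: coordinate-wise-sampling improvement} into~(\ref{twice-sampling-bound}), regroup the binomial sum to isolate $e^{v(v-1)q_2^2\tau}-1$, bound it by $2v(v-1)q_2^2\tau$ under the same smallness condition on $q_2$, and collapse the sum via $\binom{\alpha}{v}v(v-1)=\alpha(\alpha-1)\binom{\alpha-2}{v-2}$ to get $O(\alpha q_1^2q_2^2\tau)$. Your explicit use of $\log(1+x)\le x$ and your remarks on interchanging the $d_0\to\infty$ limit and on $q_1$ not needing to be small are minor refinements of the same argument, not a different approach.
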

\vspace{-0.1 in}
\begin{proof}
    See Appendix \ref{app: cor: twice-sampling improvement}. 
\vspace{-0.05 in}
\end{proof}

Thus, in general, a smaller $q_2$ selected will narrow down the gap between the privacy guarantee from coordinate-wise sampling with rate $q=q_1q_2$ and twice sampling with $(q_1, q_2)$. As a tradeoff, for a fixed $q$, a smaller $q_2$ implies a larger $q_1$ and the time complexity $O(nq_1d)$ of twice sampling increases. Fortunately, as shown below, in practice, the gap is not big where we only need to pay a small overhead for a significant privacy enhancement. 

\begin{algorithm}[t]
\caption{$(q_1, q_2)$-Twice Poisson Sampling}
\begin{algorithmic}[1]
\STATE \textbf{Input:} A processing function $\mathcal{F}(\cdot): \mathcal{X}^* \to \mathbb{R}^d$, sensitive input set $X = \big\{x_1, x_2, ... ,x_n\big\} \in \mathcal{X}^n$ of $n$ datapoints, Poisson sampling rates $q_1$ and $q_2$.
\STATE Apply Poisson sampling with rate $q_1$ on index $[1:n]$ and obtain an index set $\mathcal{I} = \big\{[1], [2], \cdots, [B]\big\}$ of size $B$. Let $X_{\mathcal{I}} = \{x_{[1]}, \cdots, x_{[B]} \}$.
\STATE Apply Algorithm \ref{alg: coordinate sampling} with coordinate-wise Poisson sampling with parameter $q_2$ on $\mathcal{F}$ with respect to $X_{\mathcal{I}}$.
\end{algorithmic}
\label{alg: sampling twice}
\end{algorithm}

\begin{figure*}[t]
  \centering
  \includegraphics[width=0.33\linewidth]{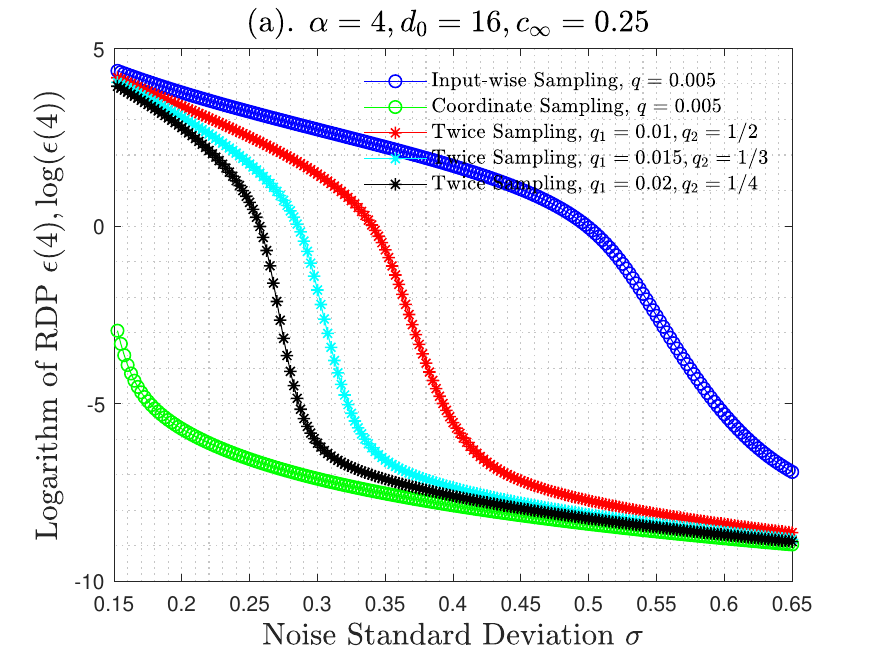}
  \includegraphics[width=0.33\linewidth]{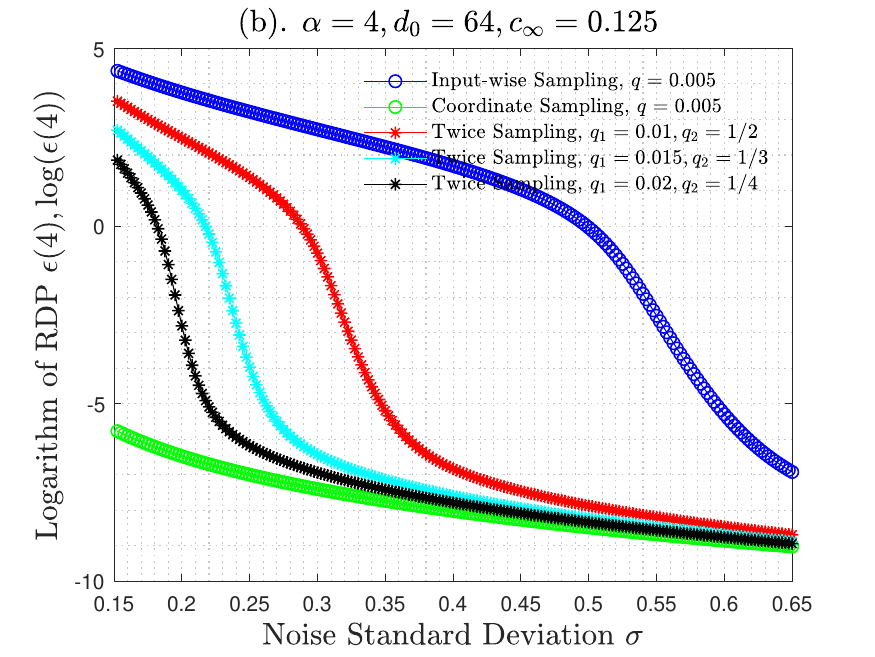}
  \includegraphics[width=0.33\linewidth]{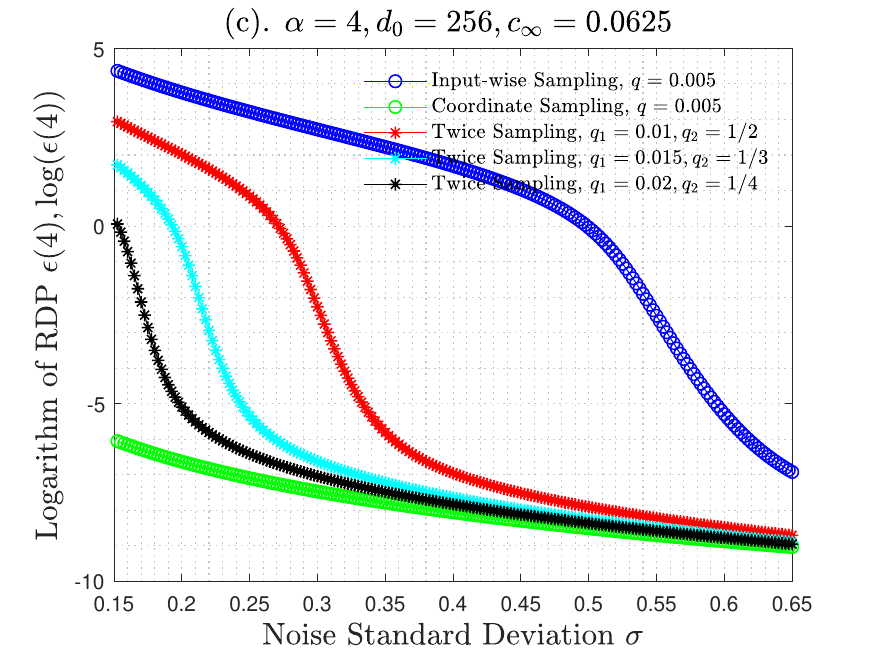} 

  \includegraphics[width=0.33\linewidth]{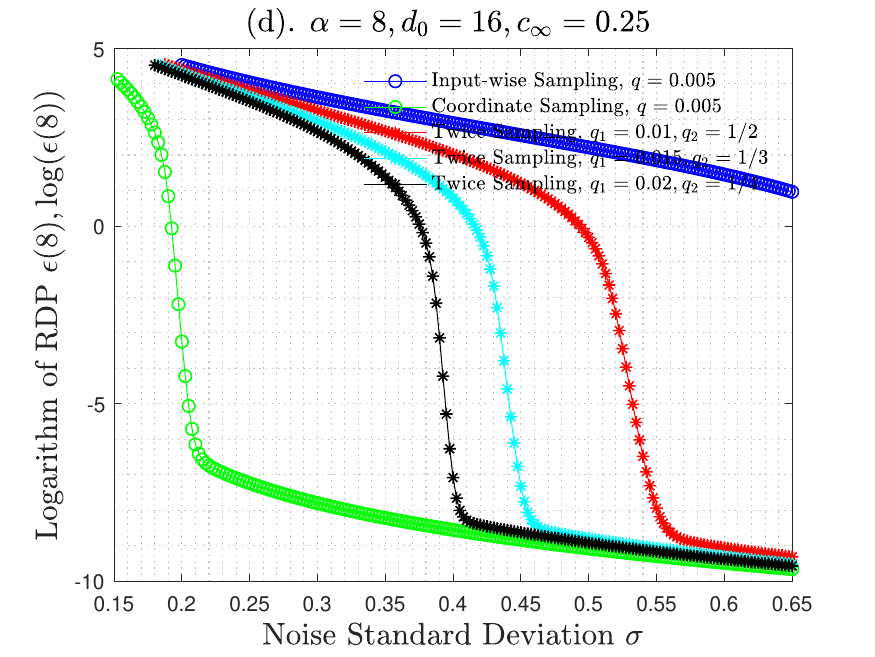}
  \includegraphics[width=0.33\linewidth]{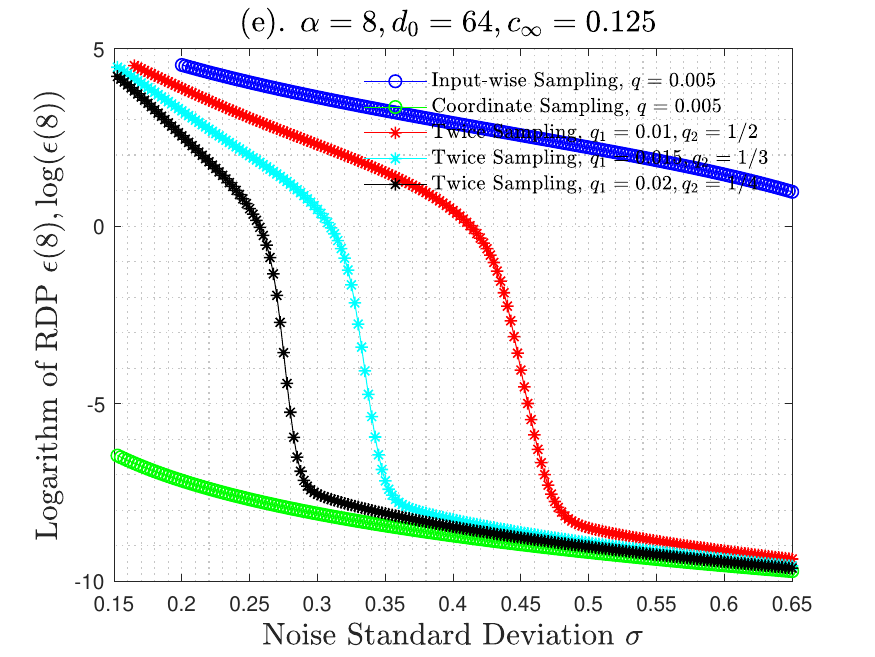}
  \includegraphics[width=0.33\linewidth]{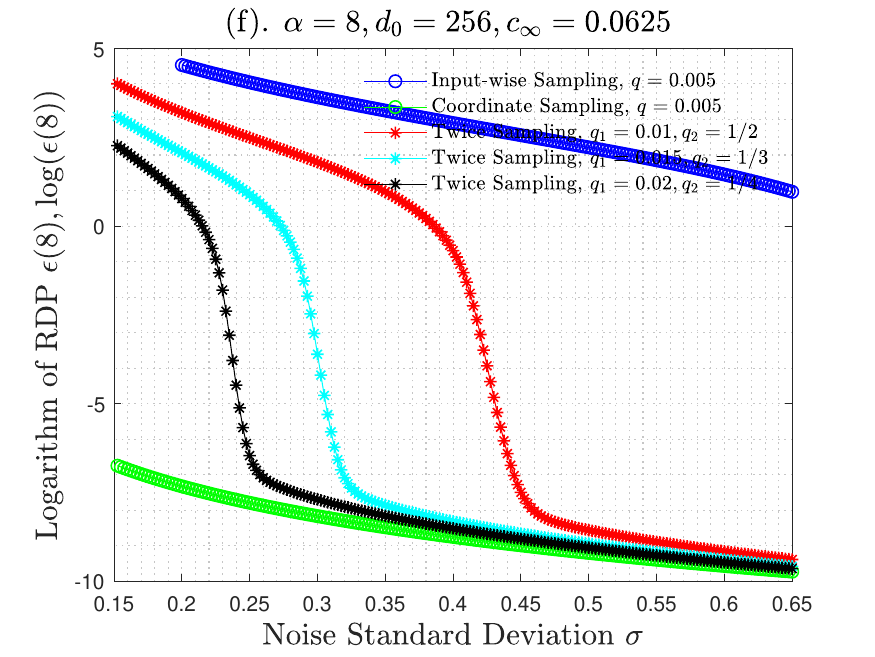} 
  
\caption{Comparison between the Logarithm $\log(\epsilon(\alpha))$ Produced by Input-wise Sampling $q=0.005$, Coordinate-Wise Sampling $q=0.005$ and Twice Sampling $q=q_1q_2=0.005$. $l_2$-norm clipping threshold $c_2 =1$}

\label{Fig. amplification RDP}.
\end{figure*}  

\subsection{Simulation on Privacy Amplification}
\label{sec: simulation sampling}
In the following, we provide simulations on the privacy guarantees produced by an input-wise sampling with $q=0.005$, coordinate-wise sampling with $q=0.005$ and twice sampling $q=q_1q_2=0.005$ for $q_2 \in \{1/2, 1/3, 1/4\}$, combined with the Gaussian mechanism. 
In Fig. \ref{Fig. amplification RDP}, we consider a mixture clipping of $l_2$-norm with fixed $c_2=1$ and $l_{\infty}$-norm with parameter  $c_{\infty}=1/\sqrt{d_0}$ for $d_0$ varying from $\{16,64,256\}$. 
In each subfigure of Fig \ref{Fig. amplification RDP}, the x-axis is the standard deviation $\sigma$ of the injected Gaussian noise. 
The y-axis shows $\log(\epsilon(\alpha))$. In Fig. \ref{Fig. amplification RDP} (a-c), we select $\alpha=4$ while in Fig. \ref{Fig. amplification RDP} (d-f), where we select $\alpha=8$. 
We have the following important observation which also supports our theory in Theorems \ref{thm: privacy coordiante sampling}-\ref{thm: sampling twice}. 

First, it is noted that there is a {\em critical point}: when $\sigma$ is smaller than some critical point, $\epsilon(\alpha)$ is close to $\alpha/(2\sigma^2)$, the $\alpha$-th order of RDP of the pure Gaussian mechanism, independent of $q$, as analyzed in Theorem \ref{thm: coordinate-wise-sampling improvement}; when $\sigma$ is larger than this critical point, $\epsilon(\alpha)$ quickly converges to $\Theta(q^2)$. This is true for any of the sampling methods. But it should be noted that, given larger $d_0$ and consequently smaller $c_{\infty}$, or smaller $\alpha$, this critical point will also be smaller. One can compare the lines of the same color in Fig. \ref{Fig. amplification RDP}. This matches the results of Theorem \ref{thm: coordinate-wise-sampling improvement} and when $d_0 \to \infty$, the critical point  will approach $0$, and twice (coordinate-wise) sampling can always enjoy $\Theta(q^2)$ amplification.

Second, once twice sampling has passed the turning point, the difference between the privacy bound of coordinate-wise sampling and twice sampling is much smaller and also more insensitive to the selection of $d_0$ and $q_2$. 
Indeed, to produce practical security parameters, the difference is almost negligible when $d_0 \geq 50$ and $q_2 \leq 0.5$. 
When $q_2=0.5$, we only need to double the number of subsampled data to preprocess $2q_1n$ samples in expectation.

To provide more intuition about the improvement, in Fig. \ref{Fig. amplification eps-delta DP}, we convert the RDP bound under $T=10,000$ composition to $(\epsilon, \delta)$ using Lemma \ref{lemma: composition-RDP}. 
We select $\delta=10^{-5}$ and the y-axis of Fig. \ref{Fig. amplification eps-delta DP} is $\log(\epsilon)$ rather than $\epsilon$ to illustrate an asymptotic improvement on the exponent. This captures the scenario where we apply a DP-SGD of sampling rate $q=0.005$ for $T=10,000$ iterations.  
From Fig. \ref{Fig. amplification eps-delta DP}(b), we achieve $\epsilon=8$ by applying twice sampling with rates $(q_1=0.01,q_2=1/2)$ and $(q_1=0.015,q_2=1/3)$.
While under $q=0.005$, input-wise sampling can only provide a non-meaningful/weak guarantee $\epsilon=30.4$ and $\epsilon=88.5$, respectively. 
In general, such improvement will be more significant as the sampling rate $q$ gets smaller. In both Algorithm \ref{alg: coordinate sampling} or Algorithm \ref{alg: sampling twice}, when $q \to 1$, the effect of sampling diminishes and the corresponding RDP analysis is closer to the case studied in previous section with the pure Gaussian mechanism, restricted by our negative results on possible improvement. In contrast, when $q \to 0$, the corresponding RDP analysis is closer to a sum of divergences between Gaussian mixture models that reflects the coordinate-wise restriction.

\subsection{Hybrid Clipping and Twice Sampling} 
\label{sec: noise optimization}
In this section, we combine the results in Sections \ref{sec: hypercube} and \ref{subsec: twice sampling}, and describe a hybrid clipping with twice sampling. 
For the hybrid clipping side, given a $d$-dimensional vector $\bm{v} \in \mathbb{R}^d$ and $d$ orthogonal unit bases in $m$ subsets, where each subset is a form $U_j = \{\bm{u}_{j1}, \cdots, \bm{u}_{jr_j}\}$ for $j=1,2,\cdots, r_j$, and $\sum_{j=1}^m r_j =d$, we clip the projection of $\bm{v}$ in each subspace separately. 
Suppose that the expression of $\bm{v}$ under the selected basis is $\bm{v} = \sum_{j=1}^m\sum_{l=1}^{r_j} v_{jl}\bm{u}_{jl}$. 
For each $j \in \{1,2,\cdots, m\}$, we first apply clipping with parameter $c_j$ on the projection of $\bm{v}$ in the $j$-th subspace, i.e., $\sum_{l=1}^{r_j} v_{jl}\bm{u}_{jl}$. Thus, the clipped $\bm{v}$ can then be expressed as 
$\tilde{\bm{v}} = $ $\sum_{j=1}^m \mathcal{CP}(\sum_{l=1}^{r_j}v_{jl}\bm{u}_{jl}, c_j) = \sum_{j=1}^m\sum_{l=1}^{r_j} \tilde{v}_{jl}\bm{u}_{jl}.$ Next, we apply an additional $l_{\infty}$-norm clipping on each coordinate $\tilde{v}_{jl}$. By selecting such clipping in Algorithm \ref{alg: sampling twice}, the only difference compared to regular twice sampling is that we now implement the coordinate-wise sampling with respect to the coordinate in the expression under the given bases $\bm{u}_{jl}$ rather than the natural one-hot bases. However, this does not change the privacy analysis. 
One may imagine that we apply a uniform transform on the processing data by the unitary matrix determined by $\{\bm{u}_{jl}\}$ at the beginning, and it becomes equivalent to conducting the hybrid clipping and twice sampling on the natural bases. We are now able to enjoy the sharpened privacy analysis from both Theorem \ref{thm: hybrid clipping} and Theorem \ref{thm: sampling twice}. The remaining problem is to optimize the noise variance. For example, we can take $l_2$-norm clipping as the building block $\mathcal{CP}$. 
We conclude with the following theorem.

\begin{restatable}[Noise Optimization for Hybrid Clipping and Twice Sampling]{theorem}{FinalTheorem}
Given $m$ sets of orthogonal unit bases $U_j = \{\bm{u}_{j1}, \cdots, \bm{u}_{jr_j}\}$ where $\sum_{j=1}^m r_j=d$, we consider a clipping strategy where for any $\bm{v} \in \mathbb{R}^d$, we clip its projection in the $j$-th subspace spanned by $U_j$ in $l_2$-norm and $l_{\infty}$-norm with parameters $c_{2j}$ and $c_{\infty j}$, respectively, such that $c_{2j} = \sqrt{d_{0j}}c_{\infty j}$, for $j=1,2,\cdots,m$. 
Here, $d_{0j} \le r_j$ are the infinity norm parameters. We inject an isotropic Gaussian noise $\mathcal{N}(\bm{0}, \sigma^2_j\cdot \bm{I}_{r_j})$ embedded into the $j$-th subspace. Then, the following optimized noise parameter can ensure $(\alpha, \epsilon(\alpha))$-RDP of the $(q_1, q_2)$-twice sampling with the above clipping strategy,
\begin{equation}
\vspace{-0.1 in}
\begin{aligned}
& {\bm{\sigma}=(\sigma_1, \cdots, \sigma_m)}= \arg \min_{\bm{\sigma}} \sum_{j=1}^m r_j\sigma^2_j\\
& \text{s.t.} ~ \frac{\log\big((1-q_1)^{\alpha}+ \sum_{v=1}^{\alpha}\binom{\alpha}{v}(1-q_1)^{\alpha-l}q^{v}_1e^{(v-1)\epsilon_0(v)} \big)}{\alpha-1} \leq \epsilon(\alpha),
\end{aligned}
\label{noise optimization form}
\end{equation}
where $\epsilon_0(v)$ for $v = 2,3,\cdots$, is in a form,
\begin{equation*}
\vspace{-0.1 in}
\sum_{j=1}^m \frac{d_{0j}}{v-1}\log\big((1-q_2)^{v} + \sum_{l=1}^{v}\binom{v}{l}(1-q_2)^{v-l}q^{l}_2e^{\frac{l(l-1)c^2_{\infty j}}{2\sigma^2_j}} \big).
\vspace{-0.05 in}
\end{equation*}
\label{cor: optimized noise hybrid+twice}
\end{restatable}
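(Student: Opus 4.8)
The plan is to reduce Theorem~\ref{cor: optimized noise hybrid+twice} to a direct composition of the three pillars already established: the optimal-noise characterization for hybrid clipping (Theorem~\ref{thm: hybrid clipping}), the closed-form RDP bound for coordinate-wise sampling (Theorem~\ref{thm: privacy coordiante sampling}), and the twice-sampling amplification (Theorem~\ref{thm: sampling twice}). First I would argue that the change of basis is privacy-neutral: applying the unitary transform $U$ formed by the orthonormal sets $\{\bm{u}_{jl}\}$ to the processed output is post-processing that is exactly invertible, and by the translation/rotation invariance of Rényi divergence (already used to derive~(\ref{optimal main 2})), it suffices to carry out the whole analysis in the natural one-hot basis. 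Under this reduction, the clipping strategy produces, in the $j$-th block, a sensitivity set that is the intersection of an $l_2$-ball of radius $c_{2j}$ and an $l_\infty$-ball of radius $c_{\infty j}$ with $c_{2j}=\sqrt{d_{0j}}\,c_{\infty j}$ — precisely the setup of Theorem~\ref{thm: privacy coordiante sampling} with $p=2$.

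Second, I would establish the per-coordinate-sampling RDP of the whole (multi-block) mechanism, i.e. the expression for $\epsilon_0(v)$. The key point is that coordinate-wise Poisson sampling acts \emph{independently} across all $d$ coordinates and the injected Gaussian noise is independent across coordinates, so the output distribution factorizes as a product over coordinates; hence the Rényi divergence of order $v$ is \emph{additive} over coordinates, and in particular over the $m$ blocks. Within block $j$, Theorem~\ref{thm: privacy coordiante sampling} (with $\sigma\to\sigma_j$, $c_\infty\to c_{\infty j}$, $d_0\to d_{0j}$, $q\to q_2$) gives the dominating sensitivity $(\underbrace{c_{\infty j},\dots,c_{\infty j}}_{d_{0j}},0,\dots,0)$ and the closed form
\[
\frac{d_{0j}}{v-1}\log\Big((1-q_2)^{v}+\sum_{l=1}^{v}\binom{v}{l}(1-q_2)^{v-l}q_2^{l}e^{\frac{l(l-1)c_{\infty j}^2}{2\sigma_j^2}}\Big).
\]
Summing over $j$ yields the stated $\epsilon_0(v)$. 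One subtlety to check here is that the block-wise dominating sensitivities can be achieved \emph{simultaneously} — this holds because the constraint set is a product of the block constraint sets, so the worst case in each block is attained independently of the others; the additive structure of $\mathcal{D}_v$ over product distributions then makes the sum of block-wise worst cases the global worst case.

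Third, I would invoke Theorem~\ref{thm: sampling twice} verbatim: the mechanism that is being $q_1$-input-sampled is exactly ``$\mathcal{F}$ perturbed by the (block-wise) Gaussian mechanism under $q_2$-coordinate-wise sampling,'' which we have just shown satisfies $(\alpha_0,\epsilon_0(\alpha_0))$-RDP for integer $\alpha_0\ge 2$ with the $\epsilon_0$ above. Plugging this into~(\ref{twice-sampling-bound}) gives precisely the constraint in~(\ref{noise optimization form}), and since the objective $\mathrm{Tr}(\Sigma_0)=\sum_{j=1}^m r_j\sigma_j^2$ is exactly the total noise variance of an isotropic-per-block Gaussian with parameters $\sigma_j$, the claimed optimization program is the correct one: any $\bm\sigma$ feasible for~(\ref{noise optimization form}) yields $(\alpha,\epsilon(\alpha))$-RDP, and minimizing the trace under this constraint gives the optimal such noise within the hybrid/twice-sampling family. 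I would close by noting that feasibility of the program (nonemptiness) is immediate since $\epsilon_0(v)$ and hence the left-hand side of the constraint tend to $0$ as all $\sigma_j\to\infty$.

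The main obstacle I anticipate is not any single hard inequality but rather the bookkeeping of the ``simultaneous worst case'' claim in the second step: one must be careful that, under coordinate-wise sampling, maximizing the order-$v$ Rényi divergence really does decompose into independent per-coordinate (and hence per-block) maximizations, and that the remark following Theorem~\ref{thm: privacy coordiante sampling} (dominating sensitivity lies on vertices of the polytope) applies block-by-block. The rest is assembling Theorems~\ref{thm: privacy coordiante sampling} and~\ref{thm: sampling twice} and recognizing the trace objective; I would keep the write-up short by deferring all closed-form Gaussian-mixture divergence computations to the cited theorems and only verifying the composition/additivity glue here.
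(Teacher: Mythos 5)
Your proposal takes essentially the same route as the paper's own proof: reduce to the natural basis, use the per-coordinate independence of coordinate-wise sampling and noise so that the order-$v$ R\'enyi divergence is additive across coordinates and hence across blocks, apply Theorem~\ref{thm: privacy coordiante sampling} block-by-block to obtain the stated $\epsilon_0(v)$ (with the simultaneous per-block worst cases justified by the product structure of the clipped sensitivity set), and then feed this into Theorem~\ref{thm: sampling twice} to get the constraint in (\ref{noise optimization form}). The only point the paper makes explicit that you pass over with ``invoke Theorem~\ref{thm: sampling twice} verbatim'' is that its proof was written for a single noise level $\sigma$ on every coordinate, so with block-wise different $\sigma_j$ one must re-check the reverse-divergence step; the paper notes this is harmless because the reduction to (\ref{cross product}) is coordinate-wise and the Pearson-Vajda positivity only requires the two Gaussians in each coordinate to share the same variance, which still holds here.
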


Theorem \ref{cor: optimized noise hybrid+twice} then allows us to numerically solve (\ref{noise optimization form}) and obtain the optimized noise variance under the privacy constraint. We leave a closed-form (approximated) solution to (\ref{noise optimization form}) as an open problem.  

\begin{figure*}[t]
  \centering
  \includegraphics[width=0.33\linewidth]{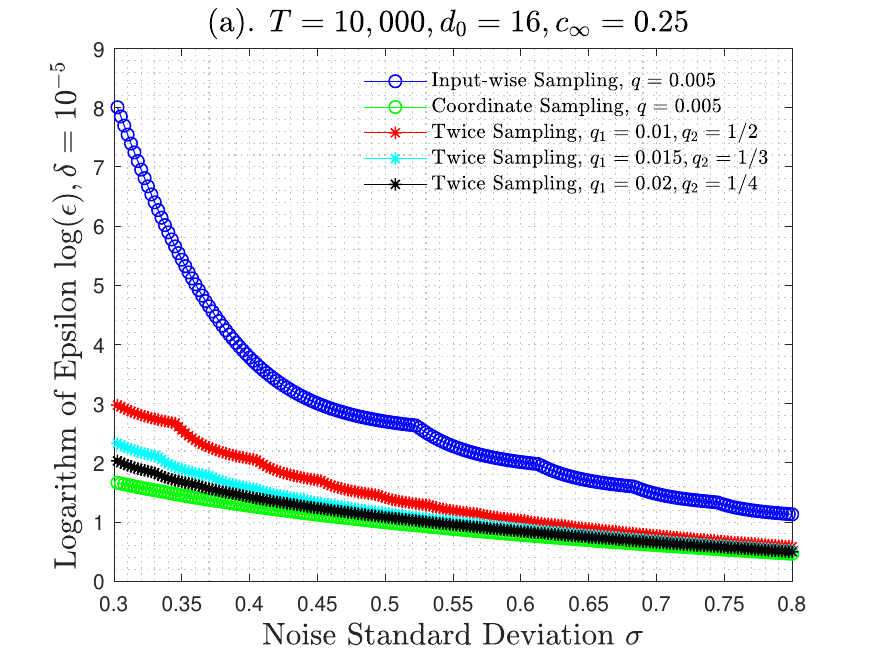}
  \includegraphics[width=0.33\linewidth]{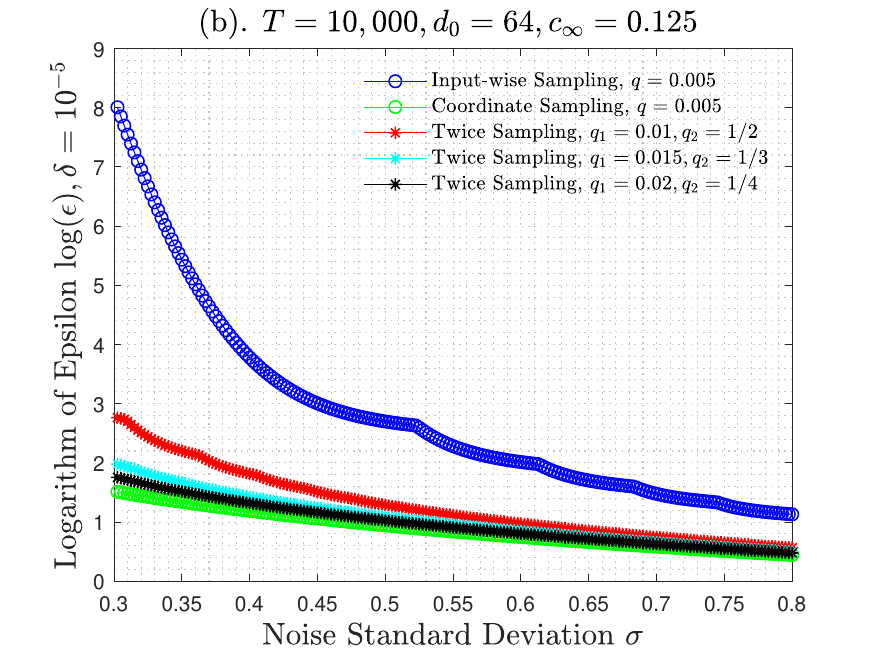}
  \includegraphics[width=0.33\linewidth]{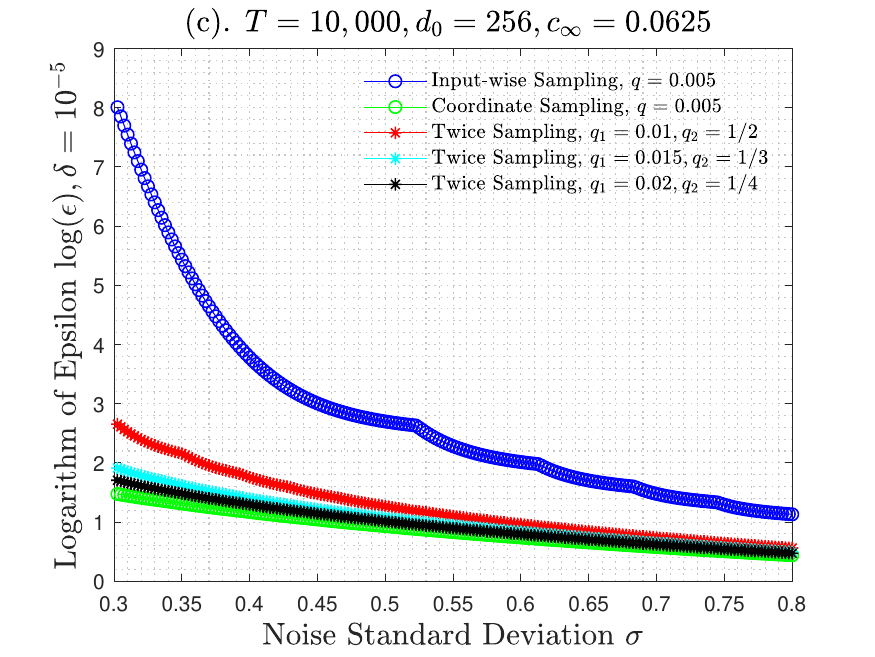} 
\vspace{-0.25 in}
\caption{Comparison between the Logarithm $\log(\epsilon)$ of $(\epsilon, \delta=10^{-5})$-DP Converted by the $T=10, 000$ Composition RDP Bound Produced by Input-wise Sampling $q=0.005$, Coordinate-wise Sampling $q=0.005$ and Twice Sampling $q=q_1q_2=0.005$, $c_2 =1$.}
\label{Fig. amplification eps-delta DP}
\vspace{-0.05 in}
\end{figure*}

\begin{figure}[t] 
    \subfigure[Illustration of Clipping]
    {\includegraphics[width=.475\linewidth]{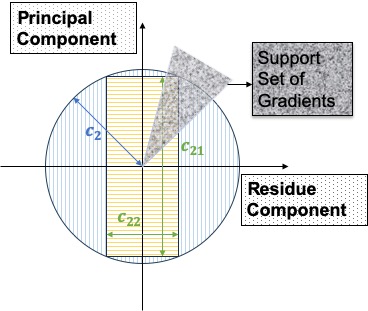}
    \label{Fig: illustration_a}}
    \subfigure[Illustration of Noise]
    {\includegraphics[width=.49\linewidth]{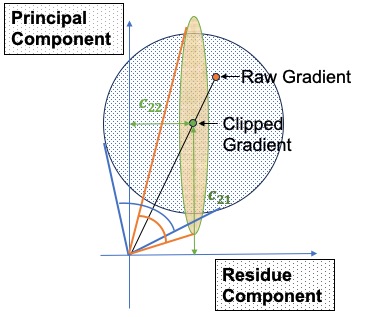}
    \label{Fig: illustration_b}}
    \vspace{-0.2 in}
    \caption{{Illustration of the comparison between Hybrid Clipping and Isotropic Clipping and the corresponding optimal Gaussian noise.}}
    \vspace{-0.25 in}
    \label{Fig: illustration} 
\end{figure}

\section{Applications}

\label{sec: additional exp}
\noindent In this section, we apply our results to privacy-preserving deep learning.
As described in Section \ref{sec: backgroud}, DP-SGD for $T$ iterations is essentially a $T$-adaptive composition on gradient mean estimation, where one may simply take the processing function $\mathcal{F}$ as gradient computation and all our results are straightforwardly applicable. We provide results on training ResNet22 on CIFAR10 and SVHN, respectively, where we assume the entire training dataset is private. 

{ We first continue the example in Section \ref{sec: gradient data}, where we consider the mean estimation of $1,000$ per-sample gradients evaluated on CIFAR10 in {\em one iteration}.
The estimation error is formed by two parts: the clipping error \cite{xiao2023theory} and the DP noise. 
Under the same setup as that described in  Section \ref{sec: gradient data}, the gradients, represented as $d=291,898$-dimensional vectors, are produced from a ResNet22 network on 1,000 randomly sampled CIFAR10 datapoints. 
We select a global $l_2$-norm clipping threshold $c_2 = \sqrt{2.5^2 + 1^2}= 2.69$, and for the hybrid clipping we consider a principal component of rank $r_1=1,000$ and a residue component of rank $r_2=d-r_1$. 
Based on the average power of the projection of the gradients in each subspace, we select a local clipping threshold $c_{21}=2.5$ and $c_{22}=1$ for the two subspaces, respectively. 

To provide more intuition, in Fig. \ref{Fig: illustration} (a), we illustrate both the standard isotropic $l_2$-norm clipping, captured by projection into the blue ball, and the hybrid clipping, captured by projection into the yellow cube. 
The x-axis and y-axis represent the residue and the principal space, respectively. The grey region represents the support set of gradient distributions (true sensitivity geometry). 
It is worth noting that both clipping methods enjoy the same global $l_2$-norm clipping budget, where the difference is that hybrid clipping allocates it differently to different subspaces. 
Moreover, under such a setup, the hybrid clipping cube is fully contained within the isotropic $l_2$ clipping ball. 
To measure the utility of clipped gradient mean estimation, we consider $\cos(\theta) = \frac{\langle v_0, v_c \rangle}{\|v_0\|\|v_c\|}$, the cosine of the angle $\theta$ between the raw gradient mean $v_0$ and the clipped gradient mean $v_c$.
A cosine similarity $\cos(\theta)$ closer to $1$ implies a more accurate estimation on the true gradient direction.  On average, $\cos(\theta)$ for standard $l_2$-norm clipping is 0.78 while that for the hybrid clipping is 0.76. The slight difference is because under such a parameter selection, the hybrid clipping cube is a strict subset of the standard $l_2$-norm ball, which incurs a bit more clipping error.}

{ Now, we consider adding DP noises to the average of clipped gradients, where, for example, we select the scale of DP noises to ensure an $(\epsilon=8, \delta=10^{-5})$-DP guarantee for running DP-SGD for $T=5,000$ iterations with an input-wise subsampling rate $q=1000/50000$. 
In Fig. \ref{Fig: illustration} (b), the red point represents a raw gradient before clipping, and the green one represents its clipped version. 
We illustrate the geometry of the optimal Gaussian noise for both clipping methods. 
As shown in Corollary \ref{cor: l_p mix}, isotropic noise is already optimal for standard $l_2$-norm clipping, which is captured by the blue ball in Fig. \ref{Fig: illustration} (b). 
In contrast, by Theorem \ref{thm: hybrid clipping}, the optimal noise for asymmetric hybrid clipping allocates varying noise power across different subspaces, depending on the space rank and the clipping budget. 
The thin brown ellipse captures the optimal \emph{anisotropic} noise for the above-mentioned hybrid clipping, where the noise we add is much less in the massive residue space. 
To be specific, by Theorem \ref{thm: hybrid clipping}, the variance of DP noises required for standard $l_2$-norm clipping is $5.5\times$ larger than that of hybrid clipping. 
After perturbation, the expectation of $\cos(\theta)$ of standard clipping is $2.9\times$ smaller than that of hybrid clipping. Such improvement translates to improvement in the test accuracy of the model produced as will be shown later in the section.} 

\begin{table}[t]
\scriptsize
\begin{center}
\begin{tabular}{l c c c c c} 
\toprule
Twice-sampling Rate $\backslash$ $\epsilon$ & $2$ & $2.5$ & $4$ & $8$ \\
\midrule
$(q_1 = 0.06, q_2=1/3)$ & 61.1 (86.2) & 65.0 (85.4)  & 71.4 (82.2) & 77.5 (70.8) &  \\
\hline 
$(q_1 = 0.04, q_2=1/2)$ & 60.7 (89.4) & 64.8 (88.2) & 71.1 (77.4) & 77.2 (75.1) & \\
\hline 
$(q_1 = 0.03, q_2=1/3)$ & 59.9 (67.4) & 64.7 (65.9)  & 70.8 (53.2) & 76.7 (43.3)&  \\
\hline 
$(q_1 = 0.02, q_2=1/2)$ & 59.3 (72.8) & 64.4 (71.6) & 70.1 (61.0) & 76.4 (53.5) & \\
\toprule
Input-wise Sampling Rate $\backslash$ $\epsilon$ & $2$ & $2.5$ & $4$ & $8$ \\
\midrule
\hline 
{Baseline Regular DP-SGD} $q = 0.02$ & 59.5 & 61.5  & 67.3 & 73.8 &  \\
\hline 
{Baseline Regular DP-SGD} $q=0.01$ & 55.4 & 60.0 & 64.3  & 70.2 &  \\
\midrule
\midrule
\end{tabular}
\end{center}
\caption{\textbf{Test Accuracy} (and \textbf{Noise Variance Ratio}) (\%) of {training ResNet22 on CIFAR10 with/out Twice Sampling under various $\epsilon$} and fixed $\delta=10^{-5}$.}
\label{tab: pure twice sampling} 
\end{table}

\begin{table}
\scriptsize
\begin{center}
\begin{tabular}{l c c c c c} 
\toprule
Twice-sampling Rate $\backslash$ $\epsilon$ & $2$ & $2.5$ & $4$ & $8$ \\
\midrule
$(q_1 = 0.12, q_2=1/3)$  & 69.7 (24.5) & 72.1 (22.1)  & 76.4 (23.1) & 81.6 (21.5)  &  \\
\hline 
$(q_1 = 0.08, q_2=1/2)$ &  69.4 (25.4) & 72.1 (22.2)  & 76.5 (23.4)  & 81.3 (23.5) &  \\
\hline 
$(q_1 = 0.06, q_2=1/3)$ &  68.6 (21.5) & 71.6 (23.8) & 76.1 (18.4) & 80.8 (19.1) &  \\
\hline 
$(q_1 = 0.04, q_2=1/2)$ &  68.3 (23.2) & 71.4 (27.7) & 75.9 (18.7)  & 81.1 (19.3) & \\
\hline 
$(q_1 = 0.03, q_2=1/3)$ & 67.6 (15.8) & 70.3 (17.1)  & 74.7 (14.3)  & 80.2 (9.58)  &  \\
\hline 
$(q_1 = 0.02, q_2=1/2)$ & 67.8 (16.2) & 70.2 (17.3) & 74.8 (14.6) & 79.8  (12.3) & \\
\midrule
\midrule
\end{tabular}
\end{center}

\caption{{\textbf{Test Accuracy} (\%) of {training ResNet22 on CIFAR10 with both Twice Sampling and Hybrid Clipping provided 2,000 Public ImageNet samples under various $\epsilon$} and fixed $\delta=10^{-5}$} (\textbf{Noise Variance Ratio} (\%) between that with both Twice Sampling and Hybrid Clipping and Regular DP-SGD).}
\label{tab: hybrid+twice sampling} 
\end{table}

In the following, we consider the full implementation of DP-SGD. 
In the first set of experiments, we do not assume any public data and apply twice sampling combined with a mixture of $l_2$-norm and $l_{\infty}$-norm clipping, where we take $d_0=100$, i.e., $c_{\infty} = 0.1 \cdot c_2$. As analyzed in Section \ref{sec: gradient data}, such additional $l_{\infty}$-clipping makes negligible changes to the $l_2$-norm clipped per-sample gradient. 
In Table \ref{tab: pure twice sampling}, we report the test accuracy and the DP noise variance ($\mathbb{E}[\|\bm{e}\|^2]$) ratio between that of $(q_1, q_2)$-twice sampling and $q=q_1q_2$-input-wise sampling (shown in the brackets). 
For each selection of $\epsilon=\{2, 2.5, 4, 8\}$ (each column of Table \ref{tab: pure twice sampling}), we set a corresponding $T=\{1500, 2000, 2500, 5000\}$, and run for $5$ trials and report the median of accuracy. {We need to stress that, as the baseline, the performance of standard DP-SGD with only input-wise sampling, as reported in the last two rows of Table {\ref{tab: pure twice sampling}}, has been optimized. For each case, we search for the optimal hyperparameters, including the selections of clipping threshold $c$ and the number of iterations $T$, such that the standard DP-SGD produces the best accuracy. Then, with exactly the same selection of those hyperparameters, we further incorporate twice sampling in DP-SGD, i.e., an additional $l_{\infty}$-norm clipping and a coordinate-wise sampling. We then report the corresponding performance in the first four rows of Table {\ref{tab: pure twice sampling}}. Our goal here is to provide a clear picture on how much improvement is produced by the sharpened noise bound on the model performance.} 

Consistent with our amplification simulation in Section \ref{sec: simulation sampling}, the improvement due to twice sampling is more significant for smaller sampling rate and smaller noise (larger privacy budget). Due to twice-sampling, for medium privacy budget $\epsilon \geq 4$, the performance gap among different sampling rates is not appreciable. 
Given $\epsilon=8$, with $q_1=0.02, q_2=1/2$, where in expectation we calculate the gradients of $1,000$ samples in each iteration and randomly select $500$ for each coordinate, we achieve $76.4\%$ accuracy, while via $q=0.01$ input-wise sampling, the accuracy is only $70.2\%$; By Theorem \ref{thm: sampling twice}, the improved noise variance is only $53.5\%$ of that for $q=0.01$ input-wise sampling. 

\begin{table}[t]
\scriptsize
\begin{center}
\begin{tabular}{l c c c c c} 
\toprule
Twice-sampling Rate $\backslash$ $\epsilon$ & $2$ & $2.5$ & $4$ & $8$ \\
\midrule
$(q_1 = 0.06, q_2=1/3)$ &  79.8 (89.0) &  80.8 (85.4)  &  88.0 (82.2) & 89.5 (70.8) &  \\
\hline 
$(q_1 = 0.04, q_2=1/2)$ & 79.6 (89.4) &  80.7 (88.2) & 88.1 (77.4) & 89.3 (75.1) & \\
\hline 
$(q_1 = 0.03, q_2=1/3)$ &  78.3 (67.4) &  80.3 (65.9)  &   87.8 (53.2) & 89.1 (43.3)&  \\
\hline 
$(q_1 = 0.02, q_2=1/2)$ &  78.4 (72.8) &  80.1 (71.6) &  87.5 (61.0) &  88.8 (50.4) & \\
\toprule
Input-wise Sampling Rate $\backslash$ $\epsilon$ & $2$ & $2.5$ & $4$ & $8$ \\
\midrule
\hline 
{Baseline Regular DP-SGD} $q = 0.02$ & 78.4 & 79.6  & 83.1 & 86.3 &  \\
\hline 
{Baseline Regular DP-SGD} $q=0.01$ & 73.8 & 74.9 & 79.5  & 82.9 &  \\
\midrule
\midrule
\end{tabular}
\end{center}
\vspace{-0.05 in}
\caption{{\textbf{Test Accuracy} (and \textbf{Noise Variance Ratio}) (\%) of {training ResNet22 on SVHN with/out Twice Sampling under various $\epsilon$} and fixed $\delta=10^{-5}$.}}
\label{tab: pure twice sampling SVHN} 
\vspace{-0.35 in}
\end{table}

\begin{table}[t]
\scriptsize
\begin{center}
\begin{tabular}{l c c c c c} 
\toprule
Twice-sampling Rate $\backslash$ $\epsilon$ & $2$ & $2.5$ & $4$ & $8$ \\
\midrule 
$(q_1 = 0.06, q_2=1/3)$ &   87.3 (45.5) &  88.3 (41.1) &  89.8 (43.3) &  91.3 (31.2) &  \\
\hline 
$(q_1 = 0.04, q_2=1/2)$ &   87.2 (46.1) &  88.4 (41.3) &  89.5 (48.6)  &  91.2 (38.4) & \\
\hline 
$(q_1 = 0.03, q_2=1/3)$ & 86.5 (40.3) &   87.9 (32.4)  &  89.7 (26.9)  & 91.1 (15.6)  &  \\
\hline 
$(q_1 = 0.02, q_2=1/2)$ & 86.5  (40.7) &  87.7 (37.3) &  89.6 (33.2) &  90.9 (21.5) & \\
\midrule
\midrule
\end{tabular}
\end{center}
\vspace{-0.05 in}
\caption{{\textbf{Test accuracy}) (\%) of {training ResNet22 on SVHN with both Twice Sampling and Hybrid Clipping provided 2,000 public ImageNet samples under various $\epsilon$ and fixed $\delta=10^{-5}$}} (\textbf{Noise Variance Ratio} between that with both Twice Sampling and Hybrid Clipping and that of Regular DP-SGD).}
\label{tab: hybrid+twice sampling SVHN} 
\vspace{-0.35 in}
\end{table}

To proceed, in the second set of experiments, we assume a small amount of public data of weak similarity to CIFAR10 to enable the subspace approximation. We adopt the same setup as that of \cite{embedyu2021}, where we randomly select 2,000 samples from ImageNet \cite{Imagenet}, an image pool containing millions of images in thousands of classes, assumed to be public. We then iteratively apply the {\em power method} \cite{power-method, embedyu2021} on public data to approximate four principal subspaces of rank $\{250, 500, 1000, 1500\}$, respectively, and the subsequent residue component subspace. We then apply hybrid clipping and $(q_1, q_2)$-twice sampling together with optimized noise described in Theorem \ref{cor: optimized noise hybrid+twice}. 
This is described as Algorithm \ref{alg: hybrid+twice-sampling} in Appendix \ref{app: hybrid+twice sampling}.
In Table \ref{tab: hybrid+twice sampling}, we record the test accuracy and the ratio between the noise variance given our improved analysis via Theorem \ref{cor: optimized noise hybrid+twice} and that of a trivial, subsampled Gaussian mechanism of $q=q_1q_2$ input-wise sampling (shown in brackets). {Still, to have a clear and fair comparison, all the results reported in Table {\ref{tab: hybrid+twice sampling}} are under the same hyperparameter selections as those for standard DP-SGD in Table {\ref{tab: pure twice sampling}}. We simply further incorporate hybrid clipping by allocating the same global $l_2$-norm budget $c$ to different subspaces, depending on the expected norm of public gradients projected into each subspace.} It is noted that with further hybrid clipping, we achieve almost an order of magnitude improvement on the noise variance. For efficiency, we only use public data to approximate four, relatively small, principal components, but one may split the entire space into more subspaces with more fine-grained clipping, and apply Theorem \ref{cor: optimized noise hybrid+twice} to get even tighter noise bounds.

{We also implement the two above-described sets of experiments on SVHN datasets, shown in Tables {\ref{tab: pure twice sampling SVHN}} and {\ref{tab: hybrid+twice sampling SVHN}}, respectively. The observations are very similar. }

We want to mention that, as our main focus is to study and compare the fundamental privacy and efficiency improvement through twice sampling and the optimal noise for hybrid clipping, we do not very carefully fine-tune the neural network architectures. We only implement standard DP-SGD with proper data augmentation in the experiments, though we note that many nice empirical tricks, such as weight standardization and parameter averaging, are recently proposed in \cite{deepmind} to also significantly enhance the performance of DP-SGD in deep learning from an optimization perspective. Using large batchsize (with input-wise sampling $(q= 0.32)$),  \cite{deepmind} achieves median $62.5\%$ and $80.3\%$ accuracy on CIFAR10 on WideResNet with privacy guarantee $(\epsilon=2,\delta =10^{-5})$ and $(\epsilon=8,\delta = 10^{-5})$, respectively, from our reproduction. {With assistance of a small set of public data, we outperform the state-of-the-art with much lower overhead in terms of both memory and computation time: in all the experiments reported on CIFAR10, our effective batchsize is upper bounded by 2,000.} We release our code (see footnote 5) to help other researchers, especially from the machine learning community, to further improve our results. Besides, compared to the state-of-the-art results with gradient embedding in a same setup of $2,000$ public ImageNet data, \cite{embedyu2021} only achieves an average $73.4\%$ accuracy with $(\epsilon=8, \delta = 10^{-5})$ on CIFAR10 using ResNet20, due to looser privacy analysis.  Our code can be found on GitHub\footnote{\url{https://github.com/Hanshen-Xiao/Twice_Sampling_and_Hybrid_Clipping}}.  

\vspace{-0.05 in}
\section{Related Works}
\label{sec: related-works}
\noindent \textbf{Sensitivity Geometry}: 
Around the same time when Gaussian and Laplace mechanisms were proposed to capture $l_2/l_1$-norm sensitivity, the study on the minimal perturbation for more generic sensitivity has attracted considerable attention. 
Rooted in the applications of private linear query, {\em $K$-norm mechanism} is first proposed in \cite{K-mechanism}, which is shown to produce nearly asymptotically tight (ignoring logarithmic terms) $\epsilon$-DP utility-privacy tradeoff for a class of convex and symmetric sensitivity sets $K$. Recently, further comparison among  different selections of $K$ and the corresponding noise scale required is 
studied in \cite{awan2021structure}. Though the $K$-norm mechanism generates a geometry-adapted noise such that each element lying on the boundary of $K$ has dominating sensitivity in $\epsilon$-DP, it has two major limitations.
First, $K$-norm noise is, in general, inefficient to generate, which requires a uniform sampling over a convex set  \cite{lovasz1993random}. 
Second, the $K$-norm mechanism is most suitable for pure $\epsilon$-DP and it is known that if we switch to the approximate $(\epsilon,\delta)$-DP, there could be a ${\Omega}(\sqrt{d})$ gap between the optimal error and the $K$-norm perturbation \cite{geometry-approx,de2012lower}. 
One main motivation to consider $(\epsilon,\delta)$-DP is to enable advanced composition to upper bound the accumulated privacy loss from multiple releases. 
$T$-fold $(\epsilon_0, \delta_0)$ leakage can be bounded by $\tilde{O}(\sqrt{T}\epsilon_0,T\delta_0)$ while in pure DP, $T$-fold $\epsilon_0$-DP leakage can only be bounded as $T\epsilon_0$-DP \cite{boosting2010}. 
However, since $(\epsilon,\delta)$-DP essentially characterizes a tradeoff function between the two security parameters $\epsilon$ and $\delta$, the corresponding optimal perturbation becomes even harder to construct and analyze. 
So far, only asymptotic results are known for some special cases, mainly applications in linear query \cite{geng-approxDP,bun2014fingerprinting, discrepancy} and convex Lipschitz optimization \cite{bassily2014private}, where the sensitivity set is some transformed or variants of $l_2$ ball. 

The underlying challenges for further generalization are mainly twofold. 
{First, to show optimality, compared to many nice tools that have been developed such as hereditary discrepancy \cite{discrepancy} and fingerprint code  
\cite{bun2014fingerprinting} to prove noise lower bounds, analyzable and efficient randomization as the noise upper bound is less known besides the basic Gaussian mechanism.} This presents challenges to prove the optimality with matched upper and lower bounds. 
Second, and the more practical issue is that, to obtain tighter composition bounds, many DP variants with more complex divergence metrics $\rho$ are developed such as Rényi-DP (RDP) \cite{renyi}. This further complicates the study on optimal perturbation if we want to simultaneously use those advanced tools. Thus, with a careful balance between both theory and practice, in this paper we stick to RDP and have proposed new tricks to study the optimality of Gaussian noise. 

\noindent \textbf{Sampling and Privacy Amplification}: 
The study on DP amplification by Poisson (i.i.d.) sampling dates back to \cite{li2012sampling}. 
In general, the classic privacy amplification problem can described as follows: if a mechanism $\mathcal{M}$ satisfies certain DP guarantees, then what kind of DP guarantees does the composite mechanism $\mathcal{M}^S = \mathcal{M} \circ \mathcal{S}$ have, where $\mathcal{S}$ is some sampling subroutine on input data? 
For $(\epsilon, \delta)$-DP, Balle et al. in  \cite{balle2018privacy} provide generic amplification bounds for a class of sampling methods $\mathcal{S}$, including Poisson sampling and sampling with/out replacement. 
As for RDP, amplification for Poisson sampling with the Gaussian mechanism is studied in \cite{mironov2019r}, and Zhu and Wang present more generic algorithm-independent results in  \cite{zhu2019poission}. 
However, those classic amplification results cannot fundamentally address the curse of dimensionality, unless it can be solved for the original processing $\mathcal{M}$ before the sampling. 
In this paper, we do not take sampling as a blackbox but instead carefully study the algorithmic randomness, especially when we further implement coordinate-wise sampling. 
We have proposed a novel twice sampling protocol to force the sampling randomness to fit the desired high-dimensional geometry. 
Our more involved and closed-form RDP analysis of the proposed twice sampling could also be of independent interest to derive tighter composite sampling privacy amplification.

\noindent \textbf{Dimension Reduction and Private Deep Learning}: 
In theory, sparsity and low rank are two of the most commonly-used assumptions for learnable high-dimensional data. Clearly, when the objective processing {\em does} have certain good properties, the curse of dimensionality of DP noise can be broken. For example, given a sparsity assumption, the {\em sparse vector technique} \cite{dwork2014algorithmic}\footnote{Instead of post processing released data based on priors, in this paper we study, more fundamentally, the randomization that fits the sensitivity geometry in the first place.}
is known to only require a scale of noise logarithmically dependent on the dimension. Research on figuring out conditions when the utility loss could be (nearly) independent of the dimension remains active. One example is private optimization on generalized linear model (GLM) \cite{kairouz2021nearly}, where due to the strong concentration, the scale of subGaussian noise under bounded linear operation is constant. However, there is a large gap between theory and practice. Those good properties do not hold for many complicated processing tasks, and artificial approximation such as sparsification may cause large bias  \cite{Luo2021sparse,zhang2021sparse,Zhu2021sparse}. $l_2$-norm clipping  \cite{deepmind} and its variants, such as layer clipping \cite{layer} or subspace embedding clipping \cite{embedyu2021}, where the objective is split into several segments and each is $l_2$-norm clipped with possibly different parameters, are still the most popular options, especially for deep learning. However, as the corollaries of our results in Section \ref{sec: optimal noise}, those privacy analyses \cite{layer, embedyu2021} are sub-optimal. 

{Indeed, our results also indicate that simply projecting isotropic noise to the objective sensitivity set is inadequate to fit the geometry and leads to suboptimal performance in general. Numerous prior works follow this line to construct noise. For example, if the sensitivity set $\mathsf{S}$ is some subspace of $\mathbb{R}^d$, one may first select a large enough $l_2/l_1$ ball that contains $\mathsf{S}$ and inject a noise following a Gaussian/Laplace mechanism. 
Then, one projects the noisy output back to $\mathsf{S}$ as a postprocessing, which does not cause additional privacy risk  {\cite{song2021evading}},{\cite{embedyu2021}},{\cite{zhoubypassing2021}}. However, as shown by Theorems {\ref{thm: cube}}-{\ref{thm: hybrid clipping}} in Section {\ref{sec: optimal noise}}, the optimal noise bound can be much smaller compared to such post-projected noise.} 

Moreover, due to the lack of theory to systematically improve the privacy-utility tradeoff, current studies on private deep learning mainly focus on searching for the optimal model and hyper-parameters \cite{papernot2020making, deepmind}. {Our results, focusing on the more fundamental optimal perturbation problem, shed new light on systematically improving DP-SGD by developing more efficient high-dimensional clipping with geometry-reflected randomization.}

\vspace{-0.05 in}
\section{Conclusion and Limitations}
\label{sec: conclusion}
In this paper, we study the optimal Gaussian noise for hybrid clipping and propose twice sampling to capture two important geometry properties in practical high-dimensional data processing: asymmetric (non-uniform) distribution and free $l_{\infty}$-norm restriction. We have presented more fundamental results to sharpen the privacy analysis with better randomization and advance the understanding of high-dimensional sensitivity geometry.
There are several promising directions for further generalization. First, though we prove the optimal Gaussian noise bounds in various setups, it does not mean that a Gaussian is the optimal perturbation for desired sensitivity geometry. A next step could be to generalize our optimality results on a broader class of noise distributions, such as 
log-normal, Gumbel and Rayleigh \cite{renyi-common}, which have analyzable Rényi divergences, and explore whether they are more suitable to a certain geometry. As another direction, our results on twice sampling could also be generalized to study more complicated composition of samplings on different dimensions and enforce sampling randomness that reflects different sensitivity geometries.

\noindent
\textbf{Limitations:} Hybrid clipping in general requires stronger directional information on the processed output distribution, which usually needs assistance from public data in practice. Though in the experiments, we only assume a small amount of weakly-correlated data to help determine the embedding/projection parameters, how to  implement hybrid clipping or an even more efficient clipping method based on only sensitive data is an important question for future work.  This may require more extensive studies on practical high-dimensional data distributions and looking for more stable and easily-estimated features.

\vspace{-0.05 in}
\section{Acknowledgements}
We would like to thank Yuqing Zhu for very helpful discussions. We gratefully acknowledge the support of DSTA Singapore, Cisco Systems, Capital One, and a MathWorks fellowship. We also thank the anonymous reviewers for their constructive feedback.  

\bibliographystyle{ACM-Reference-Format}
\bibliography{ccs-sample}

\appendix
\section{Additional Discussion}
\label{sec: add discussion}
{In this section, we want to provide more intuition on how hybrid clipping avoids the curse of dimensionality and discuss more about its implication to construct efficient clipping for practical data processing with black-box success. First, why must standard $l_1/l_2$-norm clipping require a noise in a scale $\Theta(\sqrt{d})$? One intuitive explanation is because we do not know which direction the possible output change will be from.} 
With $l_2$-norm clipping, we can only guarantee that when one arbitrarily removes an individual from the input set, the change is bounded. In other words, for any unit vector $v \in \mathbb{R}^d$, the magnitude of the projection of $\mathsf{S}$ along $v$ is bounded.
However, it could appear in {\em any} possible direction in $\mathbb{R}^d$. 
Therefore, we need to ensure that {\em the noise is of sufficient power such that its variance along any direction in $\mathbb{R}^d$ is big enough to hide the possible change.} This essentially causes an unavoidable $\Theta(\sqrt{d})$ noise scale. 
Thus, in the context of aggregation with DP, when the number of samples $n \ll \sqrt{d}$, we cannot average out the noise and learn anything meaningful from the private release. 

However, the curse of dimensionality from a worst-case privacy perspective also raises a very interesting question about the empirical success of non-private high-dimensional processing, especially deep learning.
Even without DP noise, a statistical data processing still needs to handle the statistical noise of the same dimension due to the data dispersion. 
Nowadays, machine learning with increasingly large models has become a popular trend to improve prediction performance. Given access to representative datasets, non-private deep learning has witnessed many remarkable successes, where for certain image classification problems, well-trained neural networks can already achieve human-level performance 
 \cite{resnet, he2017mask}, even leaving aside the recent breakthrough by ChatGPT  \cite{openai2023gpt4} in large language models with hundreds of billions parameters. 
So {\em why does deep learning not suffer from the dimensionality curse}? 
Over the last several decades, many researchers have tried to explain this mystery by providing evidence of good structural properties of neural networks. 
For example, \cite{low-rank2022, embedyu2021} show that fine-tuned deep models could be distributed in some low-rank space. 
The success of network pruning/compression \cite{han2015deep} shows that there is usually a large redundancy in network representation. 
There are more involved analyses to prove that  under certain assumptions, the fat shattering dimension \cite{anthony1999neural} or Rademacher complexity \cite{golowich2018size} of multilayer perceptrons can be model-size independent. 

Explainable deep learning is still a very active area in machine learning and a full list of all hypotheses is beyond the scope of this paper. 
So far, it is still too early to draw a conclusion about the determining factor and thus we argue for a more conservative way to maintain the empirically successful processing $\mathcal{F}$ as a black box. Our premise is that deep learning can exploit certain good properties of practical data to avoid the dimensionality curse in the average case. 
The key problem left is, with this weak premise, how can we design efficient privatization to fit the objective blackbox processing $\mathcal{F}$ that allows noise with weaker dependence on dimension, rather than artificially modifying $\mathcal{F}$ to fit certain conditions?    

Given the state-of-the-art advances in both privacy and statistics, one of our hopes is to use better clipping to bridge the gap between the average and the worst case. 
Intuitively, if the sampling noise from average-case practical data is tolerable, so should the DP noise. 
{Hybrid clipping provides an example, where, on one hand, we avoid the curse of dimensionality by introducing more involved directional constraints on the power of sensitivity, and meanwhile we preserve the original processing as a black box with minimal change to its distribution. The additional information on the sensitivity allows us to only add necessary noise along each direction to mitigate the strict dependence on the dimensionality $d$. On the other hand, as discussed in Section {\ref{sec: warmup}}, hybrid clipping with $l_{\infty}$-norm constraint is only determined by a few stable aggregate statistics, such as the principal components and the average of the power in each of them, which capture the populational statistics of underlying processed output distribution.} This is a much more smooth operation compared to many other existing clipping methods, such as sparsification \cite{Luo2021sparse,zhang2021sparse,Zhu2021sparse}, where only significant coordinates are preserved or participate in the processing while the remaining are either frozen or removed. Though these artificial dimension-reduction techniques can also decrease the noise scale, the advantage can be easily offset by the large clipping bias produced and may not outperform simple $l_2$-norm clipping, especially in deep learning  \cite{deepmind}. 




\section{Proof of Theorem \ref{thm: signAndperm}}
\label{app: thm: signAndperm}
We adopt the following notations.
Given a unitary matrix $U$ whose columns form a basis, we use $\bm{u}_i$ to denote its $i$-{th} column or the $i$-{th} basis vector.
Similarly, we use $u_{ij}$ to represent the $j$-th coordinate of $\bm{u}_i$.
Recall that we define for any $\bm{s} = (s_1, \dotsb, s_d)$, the function $\mathcal{L}(U, \bm{\sigma}, \mathsf{S})$ can be rewritten as
\[
\mathcal{L}(U, \bm{\sigma}, \mathsf{S}) =
\sup_{\bm{s} \in \mathsf{S}} \sum_{i = 1}^d ({<\bm{s}, \bm{u}_i> \over \sigma_i})^2.
\]
For convenience, we define $\mathcal{L}(U, \bm{\sigma}, \bm{s})$, for an element $\bm{s}$ instead of a set $\mathsf{S}$, as
\[
\mathcal{L}(U, \bm{\sigma}, \bm{s}) = \sum_{i = 1}^d ({<\bm{s}, \bm{u}_i> \over \sigma_i})^2.
\]
Therefore, $\mathcal{L}(U, \bm{\sigma}, \mathsf{S}) = \sup_{\bm{s} \in \mathsf{S}} \mathcal{L}(U, \bm{\sigma}, \bm{s})$.
We now provide the proof of Theorem \ref{thm: signAndperm} as follows.

\OptimalSymmetric*
\begin{proof}
Since the set $\mathsf{S}$ is insensitive to sign and permutation, for any $\bm{s} = (s_1, \dotsb, s_d)\in \mathsf{S}$, $\bm{z} = (z_1, \dotsb, z_d)\in \{-1, 1\}^d$ and any permutation $\pi$, the transformed datapoint
\begin{equation}\label{equ:defnl}
t(\bm{s}, \bm{z}, \pi) = (z_1\cdot s_{\pi(1)}, \dotsb, z_d\cdot s_{\pi(d)})
\end{equation}
is also in $\mathsf{S}$.
We use $T(\bm{s})$ to denote the set $\{t(\bm{s}, \bm{z}, \pi)~|~\bm{z}, \pi\}$ for all selections of $\bm{z}$ and $\pi$. 
For any unitary matrix $U$, any $\bm{\sigma}$ and any $\bm{s}\in S$, we have
\begin{equation}\label{equ:sumT}
\begin{aligned}
\mathcal{L}(U, \bm{\sigma}, \mathsf{S})
& \ge {1\over |T(\bm{s})|}\sum_{\bm{s}'\in T(\bm{s})} \mathcal{L}(U, \sigma, \bm{s}') \\
& = {1\over |T(\bm{s})|}\sum_{\bm{z}, \pi} \sum_{i = 1}^d ({<t(\bm{s}, \bm{z}, \pi), \bm{u}_i> \over \sigma_i})^2,
\end{aligned}
\end{equation}
since the maximum is no less than the average of a set of numbers. Here, $|T(\bm{s})|$ represents the number of elements in $T(\bm{s})$. 
Note that
\begin{equation}
\label{equ:crossToZero}
\begin{aligned}
&~~(<t(\bm{s}, \bm{z}, \pi), \bm{u}_i>)^2 = (\sum_{j = 1}^d z_j s_{\pi(j)} u_{ij})^2 \\
=
& ~~\sum_{j=1}^d (s_{\pi(j)}u_{ij})^2 + \sum_{j\neq k} z_jz_k\cdot (s_{\pi(j)} u_{ij})\cdot (s_{\pi(k)} u_{ik}).
\end{aligned}
\end{equation}
When we sum over all $\bm{z}\in \{-1, 1\}^d$, the second term goes to 0 as $\sum_{\bm{z}} \bm{z}_j\bm{z}_k = 0$ for any $j \not = k$.
Further, the first term is not related to $\bm{z}$.
Now, we go back to (\ref{equ:sumT}) and we have
\begin{equation}\label{equ:sumT2}
\begin{aligned}
\mathcal{L}(U, \bm{\sigma}, \mathsf{S})
& \ge {1\over |T(\bm{s})|}\sum_{\bm{z}, \pi} \sum_{i = 1}^d ({<t(\bm{s}, \bm{z}, \pi), \bm{u}_i> \over \sigma_i})^2 \\
& = {1\over |\{\bm{z}\}| \cdot |\{\pi\}|}\sum_{\bm{z}, \pi} \sum_{i = 1}^d \sum_{j=1}^d ({s_{\pi(j)}u_{ij}\over \sigma_i})^2 \\
& = {1\over |\{\pi\}|} \sum_{i = 1}^d \Big( {1\over \sigma^2_i}\cdot \sum_{j=1}^d (u^2_{ij}\sum_{\pi} s^2_{\pi(j)}) \Big).
\end{aligned}
\end{equation}
An important observation here is that since we are summing over all possible permutations, $\sum_{\pi} s^2_{\pi(j)} = (d - 1)!\|\bm{s}\|_2^2$ for all $j$.
Therefore, 
\begin{equation}\label{equ:sumT20}
\begin{aligned}
\mathcal{L}(U, \bm{\sigma}, \mathsf{S})
& \ge {1\over d!} \sum_{i = 1}^d \Big({1\over \sigma^2_i}\cdot (d - 1)!\|\bm{s}\|_2^2\cdot \sum_{j=1}^d u^2_{ij} \Big) \\
& = {\|\bm{s}\|_2^2\over d} \sum_{i = 1}^d {\|\bm{u}_i\|_2^2\over \sigma^2_i}
= {\|\bm{s}\|_2^2\over d} \sum_{i = 1}^d {1\over \sigma^2_i}.
\end{aligned}
\end{equation}
Since $\sum_{i = 1}^d \sigma^2_i = 1$, H\"older inequality implies that
\[
\sum_{i = 1}^d {1\over \sigma^2_i} = \sum_{i = 1}^d {1\over \sigma^2_i}\cdot \sum_{i = 1}^d \sigma^2_i \ge d^2.
\]
The minimum is achieved when $\sigma_1 = \dotsb = \sigma_d = 1/ \sqrt{d}$.
Taking this back to (\ref{equ:sumT20}), we have $\mathcal{L}(U, \sigma, S) \ge d\|\bm{s}\|_2^2$ for any $\bm{s}\in S$.
Note that for any unitary matrix $U$, as long as we choose $\sigma_1 = \dotsb = \sigma_d = 1/ \sqrt{d}$, the privacy loss on any input $\bm{s}$ is exactly $d\|\bm{s}\|_2^2$.
This implies $\mathcal{L}(U, \bm{\sigma}, \mathsf{S})$ is exactly $d(\max_{\bm{s}\in S}\|\bm{s}\|_2)^2$ for any unitary matrix $U$ and that the optimal privacy loss is achieved when we select $\sigma_1 = \dotsb = \sigma_d = 1/ \sqrt{d}$.
\end{proof}

\section{Proof of Theorem \ref{thm: cube}}
\label{app: thm: cube}
We first prove a useful lemma. A matrix $M$ is called stochastic matrix if each entry of $M$ is non-negative and the sum of each row or column equals 1. 
\begin{lemma}\label{lem:stochasticConcave}
For any $d \times d$ doubly stochastic matrix $M$, any concave function $f$ and any non-negative $a_1\ge a_2 \ge \dotsb \ge a_d \ge 0$, we have
\[
\sum_{i = 1}^d f(\sum_{j = 1}^d a_j M_{ij}) \ge \sum_{i = 1}^d f(a_i),
\]
where $M_{ij}$ is the entry of $M$ at the crossing of $i$-th row and $j$-th column. 
\end{lemma}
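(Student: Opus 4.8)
The plan is to recognize Lemma \ref{lem:stochasticConcave} as an instance of the general fact that a doubly stochastic matrix contracts a vector in the majorization order, combined with the Schur-concavity of the symmetric sum $\bm{x}\mapsto\sum_i f(x_i)$ for concave $f$. Concretely, I would prove it via the Birkhoff--von Neumann theorem and Jensen's inequality, which avoids developing the machinery of majorization from scratch. Write $\bm{a}=(a_1,\dots,a_d)$.

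\textbf{Step 1 (decompose $M$).} By the Birkhoff--von Neumann theorem, every $d\times d$ doubly stochastic matrix $M$ is a convex combination of permutation matrices: $M=\sum_k \lambda_k P_k$ with $\lambda_k\ge 0$, $\sum_k \lambda_k=1$, and each $P_k$ the permutation matrix of some $\pi_k$ on $\{1,\dots,d\}$. Taking the $i$-th coordinate of $M\bm{a}$ then gives $\sum_{j=1}^d a_j M_{ij}=\sum_k \lambda_k\, a_{\pi_k(i)}$; in particular each such coordinate is a convex combination of $a_1,\dots,a_d$, hence lies in $[a_d,a_1]\subseteq[0,\infty)$, so $f$ is only ever evaluated where it is defined.

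\textbf{Step 2 (Jensen and summation).} Since $f$ is concave, for every index $i$,
\[
f\Big(\sum_{j=1}^d a_j M_{ij}\Big)=f\Big(\sum_k \lambda_k\, a_{\pi_k(i)}\Big)\ \ge\ \sum_k \lambda_k\, f\big(a_{\pi_k(i)}\big).
\]
Summing over $i$ and exchanging the two finite sums,
\[
\sum_{i=1}^d f\Big(\sum_{j=1}^d a_j M_{ij}\Big)\ \ge\ \sum_k \lambda_k \sum_{i=1}^d f\big(a_{\pi_k(i)}\big)=\sum_k \lambda_k \sum_{i=1}^d f(a_i)=\sum_{i=1}^d f(a_i),
\]
where the middle equality holds because $i\mapsto\pi_k(i)$ merely permutes the summands and the last because $\sum_k \lambda_k=1$. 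This is exactly the claimed inequality.

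\textbf{Main obstacle and remarks.} There is no deep obstacle; the argument is short once Birkhoff--von Neumann is invoked. The only points requiring care are: (i) the hypothesis that $M$ is doubly stochastic (both row \emph{and} column sums equal to $1$) is precisely what Birkhoff--von Neumann needs — a merely row-stochastic $M$ would fail the lemma, since then $\sum_i f\big(a_{\pi_k(i)}\big)\ne\sum_i f(a_i)$ in general; and (ii) one should confirm $f$ is concave on an interval containing $[a_d,a_1]$, which Step 1 shows is where all arguments live. As an alternative that sidesteps Birkhoff--von Neumann, one can instead verify directly that $M\bm{a}$ is majorized by $\bm{a}$ (this is where the ordering $a_1\ge\cdots\ge a_d$ would enter) and then quote the standard Schur-concavity of $\bm{x}\mapsto\sum_i f(x_i)$; this is essentially the same content repackaged. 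I would note in passing that the ordering $a_1\ge\cdots\ge a_d\ge 0$ is not actually used by the Jensen argument beyond pinning down the domain of $f$ — it is presumably stated because it holds in the application to Theorem \ref{thm: cube}.
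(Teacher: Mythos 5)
Your proof is correct, and it takes a genuinely different route from the paper. The paper argues by an exchange (transfer) argument: starting from an arbitrary doubly stochastic $P$, it repeatedly shifts mass $\Delta$ within a $2\times 2$ minor toward the diagonal, using the elementary concavity inequality $f(x)+f(y)\ge f(x+\delta)+f(y-\delta)$ for $x\ge y$, $\delta\ge 0$, to show each update does not increase $\sum_i f(\sum_j a_j M_{ij})$, until the matrix becomes the identity; this is where the ordering $a_1\ge\cdots\ge a_d$ and a row-rearrangement step genuinely enter. You instead invoke Birkhoff--von Neumann to write $M=\sum_k\lambda_k P_k$, apply Jensen coordinatewise, and use that each permutation leaves $\sum_i f(a_i)$ invariant. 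Your argument is shorter, needs no extremal or termination reasoning (the paper's ``after finitely many updates the matrix becomes $I$'' step, and its argmin/maximum phrasing, require some care), and it exposes that the monotone ordering of the $a_i$ is not actually needed — only that $f$ be concave on an interval containing the $a_i$. What the paper's approach buys in exchange is self-containedness: it uses only the two-point concavity inequality rather than quoting Birkhoff--von Neumann, so it is elementary modulo bookkeeping. Your closing remark that the majorization/Schur-concavity phrasing is the same content repackaged is accurate, and that packaging is in fact closer in spirit to the paper's transfer argument.
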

\begin{proof}
Let $\Psi$ be the set of all $d\times d$ doubly stochastic matrices. 
Given $M\in \Psi$ and $a_1\ge a_2 \ge \dotsb \ge a_d \ge 0$, we define the function
\begin{equation}\label{equ:defnf}
F(M, a) = \sum_{i = 1}^d f(\sum_{j = 1}^d a_j M_{ij}).
\end{equation}
We will prove by contradiction that for any non-negative vector $a= (a_1, a_2, \cdots, a_d)$, $F(M, a)$ reaches maximum when $M$ is the identity matrix $\bm{I}$, i.e.,
\[
\bm{I} = \text{argmin}_{M\in \Psi}F(M, a).
\]
Suppose that $P = \text{argmin}_{M \in \Psi}F(M, a)$ and $P \neq I$.
Note that in (\ref{equ:defnf}), switching two rows of the matrix $M$ does not affect the output of $F(M, a)$.
Let $v_i = f(\sum_{j = 1}^d a_j P_{ij})$, we can assume w.l.o.g. that $v_1\ge v_2\ge \dotsb \ge v_d$.
If not, we can rearrange the rows of $P$ to make sure this holds without affecting the value of $F(P, a)$.

Let $i$ be the smallest such that $P_{ii}\neq 1$.
Since $P_{ii}\neq 1$, there must exists $j$ and $k$ such that $P_{ij} > 0$ and $P_{ki} > 0$.
This also implies that
\[
P_{jj} \le 1 - P_{ij} < 1 ~~~\text{and}~~~ P_{kk}\le 1 - P_{ki} < 1.
\]
Since $i$ is the smallest such that $P_{ii}\neq 1$, it must be that $j > i$ and $k > i$.
Let $\Delta = \min(P_{ij}, P_{ki})$.
We define a new matrix $Q$ such that $Q = P$ except in the following four positions,
\[
Q_{ii} = P_{ii} + \Delta, Q_{ij} = P_{ij} - \Delta, Q_{ki} = P_{ki} - \Delta, Q_{kj} = P_{kj} + \Delta.
\]
It can be verified that $Q$ remains a doubly stochastic matrix and that
\begin{equation}\label{equ:PQdiff}
\begin{aligned}
& ~~~F(P, a) - F(Q, a) \\
= & ~~~f(\sum_{l = 1}^d a_lP_{il}) + f(\sum_{l = 1}^d a_lP_{kl}) - 
    f(\sum_{l = 1}^d a_lQ_{il}) - f(\sum_{l = 1}^d a_lQ_{kl}) \\
= & ~~~ f(v_i) + f(v_k) - f(v_i + (a_i - a_j)\Delta) - f(v_k - (a_i - a_j)\Delta).
\end{aligned}
\end{equation}
Since $f$ is a concave function, for any $x \ge y$ and $\delta \ge 0$,
\[
f(x) + f(y) \ge f(x + \delta) + f(y - \delta).
\]
This is because $f(x+\delta)-f(x) = \int_{x}^{x+\delta} f'(z)dz$ while $f(y)-f(y-\delta) = \int_{y-\delta}^{y} f'(z)dz$. 
Since $f''(z) \leq 0$ and thus $f'(z)$ is non-increasing, and the above holds as assumed $x \geq y$. 

Recall that $j > i$ and $k > i$, which implies that
\[
v_k \ge v_i~~~\text{and}~~~ a_i - a_j \ge 0.
\]
Taking it back into (\ref{equ:PQdiff}), we get that $F(P, a) \ge F(Q, a)$.
This means that we can keep updating the matrix without increasing $F(\cdot, a)$.
Notice that every time we update the matrix, a non-zero position in the off-diagonal of row $i$ becomes zero.
Therefore, after finite number of updates, the matrix becomes $I$.
This implies that $F(I, a)\le F(P, a)$, which implies that $I = \text{argmin}_{M\in \Psi}F(M, a)$.

In conclusion, for any doubly stochastic $M$, any concave function $f$ and any non-negative $a_1\ge a_2 \ge \dotsb \ge a_d \ge 0$,
\[
F(M, a) \ge F(I,a) = \sum_{i = 1}^d f(a_i).
\]
\end{proof}

We now move on to the proof of Theorem \ref{thm: cube}.
\OptimalHyperCube*
\begin{proof}
The set $\mathsf{S}$ is a hypercube defined by the basis $U = (\bm{u}_1, \dotsb, \bm{u}_d)$, where
\begin{equation*}
    \mathsf{S} = \{\bm{s}=\sum_{l=1}^d v_l\bm{u}_l: v_l \in [-V_l,V_l], l=1,2,\cdots, d\}.
\end{equation*}
W.l.o.g., in the rest of the proof, we simply consider all vectors are expressed using the basis $U = (\bm{u}_1, \dotsb, \bm{u}_d)$ and the coordinate is also with respect to such expression. In other words, when we say a vector $\bm{x} = (x_1, \dotsb, x_d)$, we means that $\bm{x} = x_1\bm{u}_1 + \dotsb + x_d\bm{u}_d$.
In this way, the set $S$ can be rewritten as 
\[
 \mathsf{S} = \{(s_1, \dotsb, s_d)~|~ \forall l=1,2,\cdots, d, s_l \in [-V_l,V_l]\}.
\]
The advantage of this representation is that $\mathsf{S}$ is now invariant to sign under the new basis.
This allows us to use the same technique as in the proof of Theorem \ref{thm: signAndperm}.

Let us consider any unitary matrix $W = (\bm{w}_1, \dotsb, \bm{w}_d)$ and $\bm{\sigma}$.
For any $\bm{s} = (s_1, \dotsb, s_d) \in S$, we define set
\[
T(\bm{s}) = \{t(\bm{z}, \bm{s}) = (z_1s_1, \dotsb, z_ds_d) ~|~ z_1, \dotsb, z_d \in \{-1, 1\}\}.
\]
For any $\bm{s} \in S$, we have
\begin{equation}\label{equ:sumT3}
\begin{aligned}
\mathcal{L}(W, \bm{\sigma}, \mathsf{S})
& \ge {1\over |T(\bm{s})|} \sum_{\bm{s}'\in T(\bm{s})} \mathcal{L}(W, \bm{\sigma}, \bm{s}') \\
& = {1\over |T(\bm{s})|} \sum_{\bm{z}} \sum_{i = 1}^d ({<t(\bm{z}, \bm{s}), \bm{w}_i> \over \sigma_i})^2 \\
& = {1\over |\{\bm{z}\}|} \sum_{\bm{z}} \sum_{i = 1}^d ({\sum_{j = 1}^d z_js_j w_{ij}\over \sigma_i})^2 \\
& = {1\over |\{\bm{z}\}|} \sum_{\bm{z}} \sum_{i = 1}^d \sum_{j = 1}^d ({s_j w_{ij}\over \sigma_i})^2
= \sum_{i = 1}^d \sum_{j = 1}^d ({s_j w_{ij}\over \sigma_i})^2.
\end{aligned}
\end{equation}
Here, in the last line, we use (\ref{equ:crossToZero}) and the fact that $\sum_{\bm{z}} z_j z_k = 0$ for any $j \not = k$, to remove any crossing term that contains $z_j z_k$.
Since $\sum \sigma_i^2 = 1$, we can use H\"older inequality to show that
\begin{equation}\label{equ:sumT4}
\begin{aligned}
\mathcal{L}(W, \bm{\sigma}, \mathsf{S})
& \ge \sum_{i = 1}^d \sum_{j = 1}^d ({s_j w_{ij}\over \sigma_i})^2 \\
& = \Big(\sum_{i = 1}^d {\sum_{j = 1}^d s^2_j w^2_{ij}\over \sigma^2_i}\Big)\cdot (\sum_{i=1}^d \sigma_i^2) \\
& \ge \Big(\sum_{i = 1}^d \sqrt{\sum_{j = 1}^d s^2_j w^2_{ij}}\Big)^2.
\end{aligned}
\end{equation}
Lemma \ref{lem:stochasticConcave} implies that for any $d \times d$ doubly stochastic matrix $M$, any concave function $f$ and any non-negative $a_1\ge a_2 \ge \dotsb \ge a_d \ge 0$,
\[
\sum_{i = 1}^d f(\sum_{j = 1}^d a_j M_{ij}) \ge \sum_{i = 1}^d f(a_i).
\]
We apply this to (\ref{equ:sumT4}) where we set matrix $M$ such that $M_{ij} = w_{ij}^2$, set $f(x) = \sqrt{x}$ and $a_j = s^2_j$.
Since $\bm{w}_1, \dotsb, \bm{w}_d$ form a basis, the matrix $M$ where $M_{ij} = w_{ij}^2$ is doubly stochastic. 
This implies that 
\[
\sum_{i = 1}^d \sqrt{\sum_{j = 1}^d s^2_j w^2_{ij}} \ge \sum_{i = 1}^d |s_i|.
\]
Therefore, $\mathcal{L}(W, \sigma, S) \ge (\sum_{i = 1}^d |s_i|)^2$.
Since this holds for all $\bm{s}\in S$, we have
\[
\mathcal{L}(W, \bm{\sigma}, \mathsf{S}) \ge \max_{\bm{s}\in S}(\sum_{i = 1}^d |s_i|)^2 = (\sum_{i = 1}^d V_i)^2.
\]
Note that when we select $\bm{w}_i = \bm{u}_i$ and set
\[
\sigma_i = \sqrt{V_i \over V_1 + \dotsb + V_d},
\]
$\mathcal{L}(W, \sigma, S)$ is exactly $(\sum_{i = 1}^d V_i)^2$.
It also implies that this is the optimal noise.
\end{proof}

\section{Proof of Theorem \ref{thm: hybrid clipping}}
\label{app: thm: hybrid clipping}
\OptimalHybrid*
\begin{proof}
We first make use of the property that $\mathsf{S}$ is invariant to sign, and is invariant to permutation in each subspace.
For any $\bm{s} = (\bm{s}_1, \bm{s}_2, \dotsb, \bm{s}_m)\in \mathsf{S}$, any $\bm{z}_1, \dotsb, \bm{z}_m$ and $\pi_1, \dotsb, \pi_m$ where 
\begin{itemize}
\item $\bm{s}_i \in \mathbb{R}^{r_i}$ is the coordinate block (sub vector) in the $i$-{th} subspace,
\item $\bm{z}_i\in \{-1, 1\}^{r_i}$ is a sign vector and
\item $\pi_i$ is a permutation on $\{1, \dotsb, r_i\}$.
\end{itemize}
We use $T(x)$ to denote the set
\[
\{(t(\bm{s}_1, \bm{z}_1, \pi_1), \dotsb, t(\bm{s}_m, \bm{z}_m, \pi_m)) ~|~ \bm{z}_1, \dotsb, \bm{z}_m, \pi_1, \dotsb, \pi_m\}.
\]
Recall that the $t(\bm{s}_i, \bm{z}_i, \pi_i)$ function means applying the sign vector $\bm{z}_i$ and the permutation $\pi_i$ on $\bm{s}_i$.
It is formally defined in (\ref{equ:defnl}).
By definition, for any $\bm{s}\in \mathsf{S}$, $T(\bm{s})\subseteq \mathsf{S}$.

Consider a basis $U$, for convenience, we separate each basis vector $\bm{u}_i$ according to the subspace division of $\mathsf{S}$.
Specifically, suppose $\bm{u}_i = (\bm{v}_{i1}, \dotsb, \bm{v}_{im})$ where $\bm{v}_{ij}$ is of dimension $r_j$.
Summing $\mathcal{L}(U, \sigma, \bm{s}')$ over all $\bm{s}'\in T(\bm{s})$, we have
\begin{equation*}
\begin{aligned}
\mathcal{L}(U, \sigma, \mathsf{S})
& \ge {1\over |T(\bm{s})|} \sum_{\bm{s}'\in T(s)} \mathcal{L}(U, \sigma, \bm{s}') \\
& = {1\over |T(\bm{s})|} \sum_{\bm{z}, \pi} \sum_{i=1}^d \sum_{j=1}^m ({<t(\bm{s}_j, \bm{z}_j, \pi_j), \bm{v}_{ij}>\over \sigma_i})^2 \\
& = \sum_{j=1}^m {1\over |\{(\bm{z}_j, \pi_j)\}|} \sum_{\bm{z}_j, \pi_j} \sum_{i=1}^d ({<t(\bm{s}_j, \bm{z}_j, \pi_j), \bm{v}_{ij}>\over \sigma_i})^2
\end{aligned}
\end{equation*}
Here, we use the fact that
\[
|T(\bm{s})| = \prod_{j=1}^m |\{(\bm{z}_j, \pi_j)\}|.
\]
In (\ref{equ:sumT2}) and (\ref{equ:sumT20}), we showed that 
\[
{1\over |\{(\bm{z}_j, \pi_j)\}|} \sum_{\bm{z}_j, \pi_j} \sum_{i=1}^d ({<t(\bm{s}_j, \bm{z}_j, \pi_j), \bm{v}_{ij}>\over \sigma_i})^2 = {\|\bm{s}_j\|_2^2\over r_j}\cdot \sum_{i = 1}^d {\|\bm{v}_{ij}\|_2^2\over \sigma^2_i}.
\]
Using H\"older inequality and $\sum_{i = 1}^d \sigma^2_i = 1$, we have 
\begin{equation}
\begin{aligned}
\mathcal{L}(U, \sigma, S)
& \ge \sum_{i = 1}^d \sum_{j = 1}^m {\|\bm{s}_j\|_2^2 \cdot \|\bm{v}_{ij}\|_2^2 \over r_j \sigma^2_i} \\
& = (\sum_{i = 1}^d \sum_{j = 1}^m {\|\bm{s}_j\|_2^2 \cdot \|\bm{v}_{ij}\|_2^2 \over r_j \sigma^2_i}) \cdot (\sum_{i = 1}^d \sigma^2_i) \\
& = \Big(\sum_{i = 1}^d \sqrt{\sum_{j = 1}^m {\|\bm{s}_j\|_2^2 \cdot \|\bm{v}_{ij}\|_2^2 \over r_j}}\Big)^2.
\end{aligned}
\end{equation}
The last term is very similar to (\ref{equ:sumT4}).
However, we cannot directly apply Lemma \ref{lem:stochasticConcave} to it since $m\neq d$ and $[\|\bm{v}_{ij}\|_2^2]_{ij}$ is not a doubly stochastic matrix.
To fix this, we define $a_1, \dotsb, a_d$ such that 
\[
a_{i} = {\|\bm{s}_j\|_2 \over \sqrt{r_j}} \text{~~for all~~} i \in (\sum_{k = 1}^{j - 1} r_k, \sum_{k = 1}^{j} r_k].
\]
For convenience, we denote the range $(\sum_{k = 1}^{j - 1} r_k, \sum_{k = 1}^{j} r_k]$ as $R_j$.
Note that $a_k = a_l$ for all $k, l\in R_j$.
Therefore,
\[
{\|\bm{s}_j\|_2^2 \cdot \|\bm{v}_{ij}\|_2^2 \over r_j}
=
{\|\bm{s}_j\|_2^2\over r_j}\cdot \sum_{l\in R_j} u^2_{il}
=
\sum_{l\in R_j} a_l^2 u^2_{il},
\]
where recall that $u_{ij}$ is the $j$-th coordinate in $\bm{u}_i$.
In this way, we can rewrite (\ref{equ:sumT4}) into
\[
\mathcal{L}(U, \sigma, S) \ge \Big(\sum_{i = 1}^d \sqrt{\sum_{l = 1}^d a^2_j u^2_{ij}}\Big)^2.
\]
Now, we can apply Lemma \ref{lem:stochasticConcave}, which implies that 
\[
\mathcal{L}(U, \sigma, S) \ge (\sum_{l = 1}^d a_l)^2 = (\sum_{j = 1}^m \|\bm{s}_j\|_2\sqrt{r_j})^2.
\]
This holds for all $\bm{s} = (\bm{s}_1, \bm{s}_2, \dotsb, \bm{s}_m)\in S$.
Each $\|\bm{s}_j\|_2$ is upper bounded by $c_{2j}$.
Therefore, $\mathcal{L}(U, \sigma, S) \ge \sum_{j = 1}^m c_{2j}\sqrt{r_j}$. 
The equation holds when we use the original bases and set
\[
\sigma_i = \sqrt{c_{2j} \over \sqrt{r_j}\sum_{k = 1}^m c_{2k}\sqrt{r_k}}
\]
for all $i\in R_j$.
\end{proof}



\section{Proof of Theorem \ref{thm: privacy coordiante sampling}}
\label{app: thm: privacy coordiante sampling} 
Before start, we first prove a useful lemma.  
\begin{lemma} For arbitrary two positive differential convex functions $f(x)$ and $g(x)$, if both $\log(f(x))$ and $\log(g(x))$ are convex, then $\log(f(x)+g(x))$ is also convex. 
\label{lemma: log convex}
\end{lemma}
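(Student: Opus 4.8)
The claim is that log-convexity is preserved under addition: if $\log f$ and $\log g$ are both convex (with $f,g>0$ and differentiable), then $\log(f+g)$ is convex. The plan is to verify the second-derivative condition directly. Writing $h = f+g$, convexity of $\log h$ is equivalent to $h\,h'' \ge (h')^2$ pointwise (since $(\log h)'' = (h h'' - (h')^2)/h^2$ and $h>0$). So first I would expand $h h'' - (h')^2 = (f+g)(f''+g'') - (f'+g')^2$, which splits as
\[
\big(f f'' - (f')^2\big) + \big(g g'' - (g')^2\big) + \big(f g'' + g f'' - 2 f' g'\big).
\]
The first two bracketed terms are nonnegative by the hypothesis that $\log f$ and $\log g$ are convex (same identity applied to $f$ and to $g$ separately). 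So it remains to show the cross term $f g'' + g f'' - 2 f' g'$ is nonnegative.

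For the cross term, the key step is to use the log-convexity hypotheses in the form $f'' \ge (f')^2/f$ and $g'' \ge (g')^2/g$, valid because $f,g>0$. Substituting these lower bounds,
\[
f g'' + g f'' - 2 f' g' \;\ge\; f\cdot\frac{(g')^2}{g} + g\cdot\frac{(f')^2}{f} - 2 f' g' \;=\; \frac{1}{fg}\Big( f^2 (g')^2 + g^2 (f')^2 - 2 fg f' g'\Big) \;=\; \frac{(f g' - g f')^2}{fg} \;\ge\; 0,
\]
since $fg>0$. Combining, $h h'' - (h')^2 \ge 0$ everywhere, hence $(\log h)'' \ge 0$, i.e. $\log(f+g)$ is convex, which is what we wanted.

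I do not anticipate a serious obstacle here: the argument is the standard AM–GM / completing-the-square trick once the problem is reduced to the second-derivative inequality. The only points requiring a little care are (i) recording the equivalence between convexity of $\log f$ and the inequality $f f'' \ge (f')^2$, which needs $f>0$ and differentiability of $f$ (twice, implicitly — if one only wants to assume $C^1$, one should instead phrase the whole argument via the midpoint/Cauchy–Schwarz inequality $f(x)g(x) \le \ldots$ or note that log-convex functions are automatically locally Lipschitz and a.e. twice differentiable, so the pointwise computation suffices after a standard smoothing/approximation remark), and (ii) handling possible points where derivatives vanish, which cause no trouble since all inequalities are stated as "$\ge 0$" and the division by $f$, $g$, or $fg$ is always legitimate. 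An alternative, even shorter route — which I would mention if space permits — is to invoke Hölder's inequality: for log-convex $f,g$ and $\lambda\in[0,1]$, $f(\lambda x + (1-\lambda)y) \le f(x)^\lambda f(y)^{1-\lambda}$ and similarly for $g$, so
\[
(f+g)(\lambda x + (1-\lambda)y) \le f(x)^\lambda f(y)^{1-\lambda} + g(x)^\lambda g(y)^{1-\lambda} \le \big(f(x)+g(x)\big)^\lambda \big(f(y)+g(y)\big)^{1-\lambda},
\]
the last step being exactly Hölder's inequality applied to the two-term sums; taking logs gives convexity of $\log(f+g)$ without any differentiability assumption at all. I would likely present the derivative computation as the main proof (to match the differentiability hypothesis in the statement) and note the Hölder argument as a remark.
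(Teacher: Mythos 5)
Your proof is correct and follows essentially the same route as the paper's: reduce convexity of $\log(f+g)$ to $(f+g)(f''+g'')\ge (f'+g')^2$, split off the two terms $ff''-(f')^2$ and $gg''-(g')^2$, and bound the cross term $fg''+gf''-2f'g'$ using the hypotheses $ff''\ge (f')^2$, $gg''\ge (g')^2$. The only difference is cosmetic — you complete the square to get $(fg'-gf')^2/(fg)\ge 0$ where the paper invokes AM--GM to get $2\sqrt{f''fg''g}-2f'g'\ge 0$ — and your closing Hölder remark is a nice bonus but not part of the paper's argument.
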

\begin{proof}
By definition, $\log(f(x))$ is convex iff 
\[
(\log(f(x)))'' = \frac{f''(x)f(x)-(f'(x))^2}{(f(x))^2} \geq 0.
\]
This suggests that $f''(x)f(x) \geq (f'(x))^2$ for any $x$, and similarly, $g''(x)g(x) \geq (g'(x))^2$. Now, we calculate the second derivative of $\log(f(x)+g(x))$, which equals $$\frac{(f''(x)+g''(x))(f(x)+g(x)) -(f'(x)+g'(x))^2}{(f(x)+g(x))^2}.$$ To show $\log(f(x)+g(x))$ is convex, it suffices to show
$$ (f''(x)+g''(x))(f(x)+g(x)) \geq  (f'(x)+g'(x))^2.$$
On the other hand,
\begin{equation}
    \begin{aligned}
        ~~~&(f''(x)+g''(x))(f(x)+g(x)) -  (f'(x)+g'(x))^2 \\
    =
        ~~~& (f''(x)f(x) - f'(x)^2) + (g''(x)g(x) - g'(x)^2) \\
        ~~~& \quad + f''(x)g(x)+g''(x)f(x) -2f'(x)g'(x) \\
    \geq 
        ~~~&f''(x)g(x)+g''(x)f(x) - 2f'(x)g'(x)\\
    \geq 
        ~~~& 2\sqrt{f''(x)f(x)g''(x)g(x)} - 2f'(x)g'(x) \geq 0.
    \end{aligned}
\label{AM-GM f,g}
\end{equation}
In the last line of (\ref{AM-GM f,g}), we use the fact that $f''(x),g''(x) \geq 0$ due to the convexity assumption and the AM-GM inequality. 
\end{proof}

Now, we are ready to study the RDP of Algorithm \ref{alg: coordinate sampling}.
\PriAmpCoor*
\begin{proof}
For any two adjacent datasets $X$ and $X'$ where without loss of generality $X' = X \cup x$ and $X$ is of $n$ elements, let $\mathcal{J}=\{J_1, J_2, \cdots, J_{2^{n}}\}$ be set of all the subsets of $X$ where $p_j$ is the probability that $J_j$ is selected under $q$-Poisson sampling on $X$. 
We use $\mathcal{F}^{CS}$ to denote Algorithm \ref{alg: coordinate sampling}. 
It is noted that both the sampling and noise in each dimension is independent and thus each coordinate of $\mathcal{F}^{CS}(X)$($\mathcal{F}^{CS}(X')$) is independently generated. 
Therefore, the $\alpha$-Rényi divergence $\mathcal{D}_{\alpha}$ between $\mathcal{F}^{CS}(X')$ and $\mathcal{F}^{CS}(X)$ can be written as 
\begin{equation}
\begin{aligned}
    &\mathcal{D}_{\alpha}(\mathbb{P}_{\mathcal{F}^{CS}(X')}\| \mathbb{P}_{\mathcal{F}^{CS}(X)})\\ 
    &= \frac{1}{\alpha-1}\log \int_{\bm{o}} \mathbb{P}_{\mathcal{F}^{CS}(X)}(\bm{o}) \cdot \big( \frac{\mathbb{P}_{\mathcal{F}^{CS}(X')}(\bm{o})}{\mathbb{P}_{\mathcal{F}^{CS}(X)}(\bm{o})} \big)^{\alpha} d \bm{o}\\
    & = \frac{1}{\alpha-1}\log \int_{\bm{o}} \prod_{l=1}^d\mathbb{P}_{\mathcal{F}^{CS}(X)(l)}(o_l) \cdot \big( \frac{\prod_{l=1}^d \mathbb{P}_{\mathcal{F}^{CS}(X')(l)}(o_l)}{\prod_{l=1}^d \mathbb{P}_{\mathcal{F}^{CS}(X)(l)}({o}_l)} \big)^{\alpha} d \bm{o}\\
    & =  \frac{1}{\alpha-1}\log \prod_{l=1}^d \int_{o_l}  \frac{\big(\mathbb{P}_{\mathcal{F}^{CS}(X')(l)}(o_l)\big)^{\alpha}}{\big(\mathbb{P}_{\mathcal{F}^{CS}(X)(l)}({o}_l)\big)^{\alpha-1}}  d o_l\\
    & = \sum_{l=1}^d \mathcal{D}_{\alpha}(\mathbb{P}_{\mathcal{F}^{CS}(X')(l)}\| \mathbb{P}_{\mathcal{F}^{CS}(X)(l)}),
\end{aligned}
\label{coordiante decomposition}
\end{equation}
where $\bm{o}=(o_1,\cdots, o_d)$. One may also obtain a similar form of $\mathcal{D}_{\alpha}(\mathbb{P}_{\mathcal{F}^{CS}(X)}\| \mathbb{P}_{\mathcal{F}^{CS}(X')})$. In (\ref{coordiante decomposition}), $\mathbb{P}_{\mathcal{F}^{CS}(X)(l)}$ is the density function of the $l$-th coordinate of $\mathcal{F}^{CS}(X)$. Thus, due the independence, from  (\ref{coordiante decomposition}) we know that the RDP analysis of Algorithm \ref{alg: coordinate sampling} is equivalent to studying the sum of coordinate-wise Rényi divergence. Now, we consider the sensitivity set of $\mathcal{F}^{CS}$. Let $s_{l} = |\mathcal{CP}(\mathcal{F}(x))(l)|$ be the difference in the $l$-th coordinate when we happen to select the differing datapoint $x$ in the processing. Thus, the distribution of $\mathcal{F}^{CS}(X)$
is indeed a Gaussian mixture model, where 
\begin{equation}
\mathbb{P}_{\mathcal{F}^{CS}(X)(l)} = \sum_{j=1}p_j \mathcal{N}(\mathcal{F}(J_j)(l), \sigma^2). 
\end{equation}
Similarly, for $\mathcal{F}^{CS}(X')$, it is noted that $J_j$ and $J_j \cup x$ will be selected from $X'$ with probability $p_j(1-q)$ and $p_jq$, respectively, and thus
\begin{equation}
\mathbb{P}_{\mathcal{F}^{CS}(X')(l)} = \sum_{j}  p_j\big((1-q) \mathcal{N}(\mathcal{F}(J_j)(l), \sigma^2) + q\mathcal{N}(\mathcal{F}(J_j)(l)+s_{l}, \sigma^2) \big).  
\end{equation}
By the quasi-convexity of Rényi divergence \cite{van2014renyi,mironov2019r}, we have the following upper bound on the divergence between two mixture distributions by the maximal divergence between their components,  
\begin{equation}
\begin{aligned}
   &\mathcal{D}_{\alpha}(\mathbb{P}_{\mathcal{F}^{CS}(X')(l)}\| \mathbb{P}_{\mathcal{F}^{CS}(X)(l)}) \\
   &\leq \max_{j} \mathcal{D}_{\alpha}\big((1-q) \mathcal{N}(\mathcal{F}(J_j)(l), \sigma^2) + q\mathcal{N}(\mathcal{F}(J_j)(l)+s_{l}, \sigma^2)\\
   & \quad \quad \quad \quad \quad \quad \|\mathcal{N}(\mathcal{F}(J_j)(l), \sigma^2) \big)\\
   & =  \mathcal{D}_{\alpha}\big((1-q) \mathcal{N}(0, \sigma^2) + q\mathcal{N}(s_{l}, \sigma^2) \|\mathcal{N}(0, \sigma^2) \big).
\end{aligned}
\label{quasi convexity}
\end{equation}
Thus, plugging (\ref{quasi convexity}) back to (\ref{coordiante decomposition}), we have that 
\begin{equation}
\begin{aligned}
    &\mathcal{D}_{\alpha}(\mathbb{P}_{\mathcal{F}^{CS}(X')}\| \mathbb{P}_{\mathcal{F}^{CS}(X)})\\ 
    & \leq \sum_{l=1}^d \mathcal{D}_{\alpha}\big((1-q) \mathcal{N}(0, \sigma^2) + q\mathcal{N}(s_{l}, \sigma^2) \|\mathcal{N}(0, \sigma^2) \big) 
\end{aligned}
\label{coordiante decomposition 1}
\end{equation}
Based on our assumption on sensitivity set $\mathsf{S}$ where any $\bm{s} \in \mathsf{S}$ satisfies 
$$ \max_{l} |s_l|<c_{\infty}, \sum_{l=1}^d |s_l|^p \leq (c_p)^p.$$
On the other hand, RDP on one-dimensional subsampled Gaussian mechanism is a known result and has a closed form \cite{mironov2019r}, where 
\begin{equation}
\begin{aligned}
 &\mathcal{D}_{\alpha}\big((1-q) \mathcal{N}(0, \sigma^2) + q\mathcal{N}(s_{l}, \sigma^2) \|\mathcal{N}(0, \sigma^2) \big)\\
 & = \frac{1}{\alpha-1}\log\big((1-q)^{\alpha-1}(\alpha q-q+1) + \sum_{v=2}^{\alpha}\binom{\alpha}{v}(1-q)^{\alpha-v}q^{v}e^{\frac{v(v-1)s^2_l}{2\sigma^2}} \big). 
\end{aligned}
    \label{RDP Gaussian}
\end{equation}
Moreover, it is also proved in \cite{mironov2019r} that 
\begin{align*}
    \mathcal{D}_{\alpha}\big((1-q) \mathcal{N}(0, \sigma^2) & + q\mathcal{N}(s_{l}, \sigma^2) \|\mathcal{N}(0, \sigma^2) \big)\\
    & \geq \mathcal{D}_{\alpha}\big(\mathcal{N}(0, \sigma^2) \| (1-q) \mathcal{N}(0, \sigma^2) + q\mathcal{N}(s_{l}, \sigma^2) \big),
\end{align*}
and thus the above upper bound also works for $\mathcal{D}_{\alpha}(\mathbb{P}_{\mathcal{F}^{CS}(X)}\| \mathbb{P}_{\mathcal{F}^{CS}(X')})$. Now, the remainder problem is to determine the dominating sensitivity for (\ref{coordiante decomposition 1}), which is equivalent to solving the following constraint optimization problem,
\begin{equation}
    \begin{aligned}
        &\sup_{\bm{s}} \sum_{l=1}^d \log\big((1-q)^{\alpha-1}(\alpha q-q+1) + \sum_{v=2}^{\alpha}\underbrace{\binom{\alpha}{v}(1-q)^{\alpha-v}q^{v}e^{\frac{v(v-1)s^2_l}{2\sigma^2}}} \big),\\
        & s.t.  \max_{l} |s_l|<c_{\infty}, \sum_{l=1}^d |s_l|^p \leq (c_p)^p. 
    \end{aligned}
    \label{optimization l_p l_infty}
\end{equation}
It is noted that for each underlined component in the log term of (\ref{optimization l_p l_infty}) can be written in the following form
\begin{align*}
    & b_{1l} \cdot exp(b_{2l} \cdot s^2_l) = b_{1l} \cdot exp\big(b_{2l} \cdot (|s_l|^p)^{2/p} \big) = b_{1l} \cdot exp(b_{2l} \cdot y_l^{2/p}), 
\end{align*}
where $b_{1l}$ and $b_{2l}$ are some positive constants and $y_l = |s_l|^p$. 
Thus, for $p\leq 2$, it can be verified that $\log(b_1 \cdot exp(b_2 \cdot y_l^{2/p}))$ is a convex function with respect to $y_l$, and thus by Lemma \ref{lemma: log convex}, the objective function in (\ref{optimization l_p l_infty}) is also convex with respect to each $y_l$. On the other hand, $\bm{y}= (y_1, \cdots, y_d) = (|s_1|^p, \cdots, |s_d|^p)$,  given $\sum_{l=1}^d |s_l|^p=(c_p)^p$ and $\max_l |s_l|^p \leq (c_{\infty})^p$, is indeed an intersection between an $l_1$ ball and an $l_{\infty}$ ball, which is a polyhedron.

We now use the a folk lemma that the maximum of a convex function on a convex domain must be reached at the boundary.
And if the domain is a polyhedron, then the maximum must be reached at the vertices.
Excluding the trivial vertex at the zeros, the remaining vertices $\bm{y}$ are all in a form $$(\underbrace{c^p_{\infty}, \cdots, c^p_{\infty}}_{d_0},\underbrace{0,\cdots,e0}_{d-d_0}),$$ with permutation on the coordinates. 
Thus, we transform $\bm{y}$ back to $\bm{s}$ and we have determined the dominating sensitivity and the theorem follows.    
\end{proof}

\section{Proof of Theorem \ref{thm: coordinate-wise-sampling improvement}}
\AsympPrivCoor*
\begin{proof}
\label{app: thm: coordinate-wise-sampling improvement}
With the notation that $\tau = (\frac{c_p}{\sqrt{2}\sigma})^2$, the $(\alpha, \epsilon(\alpha))$ bound of Algorithm \ref{alg: coordinate sampling} in (\ref{coordinate-sample-bound}) can be rewritten as follows, 
\begin{equation} 
\begin{aligned}
&\epsilon(\alpha) \\
& = \frac{d_0}{\alpha-1}\cdot \log\big((1-q)^{\alpha-1}(\alpha q-q+1) + \sum_{v=2}^{\alpha}\binom{\alpha}{v}(1-q)^{\alpha-v}q^{v}e^{\frac{v(v-1)c^2_{\infty}}{2\sigma^2}} \big) \\
&= \frac{d_0}{\alpha-1}\cdot \log\big((1-q)^{\alpha-1}(\alpha q-q+1) + \sum_{v=2}^{\alpha}\binom{\alpha}{v}(1-q)^{\alpha-v}q^{v}e^{\frac{v(v-1)\tau}{d_0}} \big)\\
& = \frac{d_0}{\alpha-1}\cdot \log\big((1-q+q)^{\alpha}+ \sum_{v=2}^{\alpha}\binom{\alpha}{v}(1-q)^{\alpha-v}q^{v}(e^{\frac{v(v-1)\tau}{d_0}}-1) \big)
\end{aligned}
\label{l(l-1) regroup}
\end{equation} 
Thus, when $d_0 \to \infty$, $e^{\frac{v(v-1)\tau}{d_0}}$ will approach towards $1+\frac{v(v-1)\tau}{d_0}$ and thus (\ref{l(l-1) regroup}) convergences to 
\begin{equation}
 \lim_{d_0 \to \infty} \epsilon(\alpha) = \frac{d_0}{\alpha-1}\cdot \log\big(1+ \sum_{v=2}^{\alpha}\binom{\alpha}{v}(1-q)^{\alpha-v}q^{v}\cdot\frac{v(v-1)\tau}{d_0} \big).
\label{l(l-1) regroup 1} 
\end{equation}
It is noted that 
\[
\binom{\alpha}{v} = \frac{\alpha(\alpha-1)\cdots(\alpha-v+1)}{v(v-1)\cdots 1} = \frac{\alpha(\alpha-1)}{v(v-1)}\cdot \binom{\alpha-2}{v-2}.
\]
Therefore, (\ref{l(l-1) regroup 1}) can be rewritten as follows when $d_0 \to \infty$,  
\begin{equation*}
\begin{aligned}
& ~~  \lim_{d_0 \to \infty} \epsilon(\alpha) \\
= & ~~  \lim_{d_0 \to \infty}  \frac{d_0}{\alpha-1}\cdot \log\big(1+ \alpha(\alpha - 1)\sum_{v=2}^{\alpha}\binom{\alpha - 2}{v - 2}(1-q)^{\alpha-v}q^{v}\cdot {\tau\over d_0}\big)  \\
= & ~~  \lim_{d_0 \to \infty}  \frac{d_0}{\alpha-1}\cdot \log\big(1+ \alpha(\alpha-1)q^2 \frac{\tau}{d_0} (1-q + q)^{\alpha-2}\big)  \\
= & ~~  \alpha q^2\tau.  
\end{aligned}
\label{l(l-1) regroup 2} 
\end{equation*}
In the following, we focus on the case when $d_0 =1$ and $c_{\infty}=c_p$. It is noted that for any $v \in [2,\alpha]$,  
\[
\binom{\alpha}{v}(1-q)^{\alpha-v}q^{v}(e^{v(v-1)\tau}-1) \leq \alpha^v q^{v} e^{{v(v-1)\tau}} = (\alpha q e^{(v-1)\tau})^v.
\]
Thus, when $q^{1/6} \leq 1/(\alpha e^{\alpha \tau})$, for any $v\ge 3$,
\[
(\alpha q e^{(v-1)\tau})^v
\le
q^v\cdot (\alpha e^{\alpha\tau})^v
\le
q^{2.5} \cdot q^{v/6}\cdot (\alpha e^{\alpha\tau})^v
\le 
q^{2.5}.
\]
Therefore,
$$\sum_{v=2}^{\alpha}\binom{\alpha}{v}(1-q)^{\alpha-v}q^{v}(e^{{v(v-1)\tau}}-1) \leq \frac{\alpha(\alpha-1)}{2}q^2(e^{\tau}-1) + \alpha q^{2.5},$$ and thus from (\ref{l(l-1) regroup}) we have that 
$$ \lim_{q \to 0}\epsilon(\alpha) = O\big(\frac{1}{\alpha-1}\cdot \log\big( 1+ \alpha^2 q^2 (e^{\tau}-1) \big) \big) = O(\alpha q^2 (e^{\tau}-1)),$$
On the other hand, we have that 
$$ \epsilon(\alpha) \geq \frac{1}{\alpha-1}\cdot \log\big((1-q+q)^{\alpha}+ \frac{\alpha(\alpha-1)}{2}(1-q)^{\alpha-2}q^{2}(e^{\tau}-1)\big),$$
whose limit is $O(\alpha q^2 (e^{\tau}-1))$ as $q \to 0$. Thus, when $d_0=1$, $\epsilon = \Theta(\alpha q^2 (e^{\tau}-1))$. 

When $\alpha(\alpha-1)\tau \geq 2$ and $q$ is large such that $q \geq 1/(e^{(\alpha-1)\tau/2})$, then $qe^{(\alpha-1)\tau} \geq e^{(\alpha-1)\tau/2}$.
Take (\ref{l(l-1) regroup 1}) and only consider the last term in the summation (when $v = \alpha$), we have 
\begin{equation}
\begin{aligned}
    \epsilon(\alpha) &\geq \frac{1}{\alpha-1}\cdot \log \big( 1+ q^{\alpha}(e^{\alpha(\alpha-1)\tau}-1)\big)\\ 
    & \geq \frac{1}{\alpha-1}\cdot \log \big( 1+ 0.5 \cdot q^{\alpha}e^{\alpha(\alpha-1)\tau}\big)\\
    & =  \frac{1}{\alpha-1}\cdot \log \big( 1+ 0.5 \cdot (q\cdot e^{(\alpha-1)\tau})^{\alpha}\big)\\
    & = \Omega(\alpha\tau). 
\end{aligned}
\end{equation}
In the second line, we use our assumption that $\alpha(\alpha - 1)\tau \geq 2$.
This completes our proof.
\end{proof}

\section{Proof of Theorem \ref{thm: sampling twice}}
\label{app: thm: sampling twice}
\PrivAmplTS*
\begin{proof}
Twice-sampling is essentially a composition of two sampling subroutines, which, to be specific, forms by a Poisson sampling on sample dimension followed by a coordinate-wise sampling. 
As the first step, we need to characterize the mixture output distribution from twice sampling. With similar notations as those used in Appendix \ref{app: thm: privacy coordiante sampling}, we suppose two adjacent datasets $X$ of $n$ datapoints and $X'=X \cup x$ where $x$ is the differing datapoint. Let $(s_1, s_2, \cdots, s_d)= \mathcal{CP}(\mathcal{F}(x))$ be the clipped processing on the differing datapoint $x$.  
Since we conduct two samplings on input data and coordinate, each sampled instance is different from that in Algorithm \ref{alg: coordinate sampling}. 
In the following, we introduce a set of indicators $\bm{1}_{1i}$ and $\bm{1}_{2i}(l)$, for $i=1,2,\cdots, n$, (denoted as $\bm{1}_{1x}$ and $\bm{1}_{2x}(l)$ for the differing datapoint $x$). 
$\bm{1}_{1i}$ and $\bm{1}_{2i}(l)$ are independent Bernoulli variables of parameter $q_1$ and $q_2$.
\begin{itemize}
\item $\bm{1}_{1i}$ equals $1$ if and only if sample $i$ is selected in the first round of input-level sampling, and
\item $\bm{1}_{2i}(l)$ equals $1$ if and only if the $l$-{th} coordinate of sample $i$ is selected in the second round of coordinate processing. 
\end{itemize}
For the common set part between $X$ and $X'$, we use $\bm{I}_C$ to denote the selection of  $\bm{1}_{1i}$ and $\bm{1}_{2i}(l)$, for $i=1,2,\cdots,n$ and $l=1,2,\cdots,d$. 
Similarly, $\bm{I}_x$ is used to denote the indicators for $x$.
We use $p(\bm{I}_C)$ to represent the probability that $\bm{I}_C$ is selected by running twice sampling on $X$. 
Similarly, we can define $p(\bm{I}_C, \bm{I}_x)$ when running twice sampling on $X'$. 

We use $\tilde{\mathcal{F}}$ to represent Algorithm \ref{alg: sampling twice}, the privatized $\mathcal{F}$ combined with twice-sampling where each coordinate is perturbed by an independent Gaussian noise distributed in $\mathcal{N}(0,\sigma^2)$. 
With a slight abuse of the notation, we use $\mu_0(\bm{I}_C)$ to denote the distribution of applying $\tilde{\mathcal{F}}$ on $X$ where the selection from twice-sampling is determined as $\bm{I}_C$. 
We want to stress that even after $\bm{I}_C$ is determined, the final output still depends on the noise.
Therefore, $\mu_0(\bm{I}_C)$ is a distribution, not a fixed output.
Similarly, we use $\mu_1(\bm{I}_C)$ to denote the distribution of applying $\tilde{\mathcal{F}}$ on $X'$ given that $x$ is selected in the input sampling.
Here, the final output not only depends on the noise, but also $\bm{I}_x$.
Therefore, $\mu_1(\bm{I}_C)$ is also a distribution.

Given $(\bm{I}_C, \bm{I}_x)$, when we apply $\tilde{\mathcal{F}}$ to $X$, the result is
\begin{equation}\label{equ:PCI}
P(\bm{I}_C, \bm{I}_x) = \mu_0(\bm{I}_C).
\end{equation}
Note that $x\notin X$ and thus $\bm{I}_x$ does not affect the final results.
On the other hand, when we apply $\tilde{\mathcal{F}}$ to $X'$, the result is
\begin{equation}\label{equ:QCI}
Q(\bm{I}_C, \bm{I}_x) = (1 - q_1)\mu_0(\bm{I}_C) + q_1\mu_1(\bm{I}_C).
\end{equation}
This is because $Q(\bm{I}_C, \bm{I}_x)$ equals $\mu_0(\bm{I}_C)$ conditioned on $\bm{1}_{1x}=0$ and equals $\mu_1(\bm{I}_C)$ conditioned on $\bm{1}_{1x}=1$.

With these notations, the distribution of $\tilde{\mathcal{F}}(X)$ can be written in a mixture form  
$\sum_{\bm{I}_C} p(\bm{I}_C) \mu_0(\bm{I}_C).$
Similarly, for $\tilde{\mathcal{F}}(X')$, the distribution of  $\tilde{\mathcal{F}}(X)$ can be expressed as 
$\sum_{\bm{I}_C} p(\bm{I}_C)\big( (1-q_1)\mu_0(\bm{I}_C) + q_1\mu_1(\bm{I}_C) \big).$
Then, by definition
\begin{equation}
    \begin{aligned}
        & e^{(\alpha-1)\mathcal{D}_{\alpha}(\mathbb{P}_{\tilde{F}(X)}\| \mathbb{P}_{\tilde{F}(X')})} \\ 
        & = \int_{\bm{o}} \frac{ \big(\sum_{\bm{I}_C} p(\bm{I}_C) \cdot \mu_0(\bm{I}_C)(\bm{o}) \big)^{\alpha} }{\big(\sum_{\bm{I}_C} p(\bm{I}_C)\big((1-q_1)\mu_0(\bm{I}_C)(\bm{o}) + q_1\mu_1(\bm{I}_C)(\bm{o})\big)\big)^{\alpha-1}} 
        d\bm{o}.
    \end{aligned}
    \label{generic sampling RDP 0}
\end{equation}
For $\mathcal{D}_{\alpha}( \mathbb{P}_{\tilde{F}(X')}\| \mathbb{P}_{\tilde{F}(X)})$, by Jensen inequality \cite{zhu2019poission}
\begin{equation}
    \begin{aligned}
        & e^{(\alpha-1)\mathcal{D}_{\alpha}( \mathbb{P}_{\tilde{F}(X')}\| \mathbb{P}_{\tilde{F}(X)})} \\ 
        & = \int_{\bm{o}} \frac{\big(\sum_{\bm{I}_C} p(\bm{I}_C)\big((1-q_1)\mu_0(\bm{I}_C)(\bm{o}) + q_1\mu_1(\bm{I}_C)(\bm{o})\big)\big)^{\alpha}}{ \big(\sum_{\bm{I}_C} p(\bm{I}_C) \cdot \mu_0(\bm{I}_C)(\bm{o}) \big)^{\alpha-1} } 
        d\bm{o}\\
        & \leq \sum_{\bm{I}_C} p(\bm{I}_C) \mathbb{E}_{\mu_0(\bm{I}_C)} \big( \frac{(1-q_1)\mu_0(\bm{I}_C) + q_1\mu_1(\bm{I}_C)}{\mu_0(\bm{I}_C)}\big)^{\alpha}\\
        & = \sum_{\bm{I}_C} p(\bm{I}_C) \mathbb{E}_{\mu_0(\bm{I}_C)} \big( (1-q_1) + q_1 \cdot \frac{\mu_1(\bm{I}_C)}{\mu_0(\bm{I}_C)}\big)^{\alpha}\\
        & = \sum_{\bm{I}_C}p(\bm{I}_C)\big(\sum_{v=0}^{\alpha}\binom{\alpha}{v}(1-q_1)^{\alpha-v}q_1^{v}\mathbb{E}_{\mu_0(\bm{I}_C)}(\frac{\mu_1(\bm{I}_C)}{\mu_0(\bm{I}_C)})^{v}\big).
    \end{aligned}
    \label{generic sampling RDP 1}
\end{equation}
It is noted that in Theorem \ref{thm: privacy coordiante sampling} we have already studied and provided the upper bound of $\mathbb{E}_{\mu_0(\bm{I}_C)}(\frac{\mu_1(\bm{I}_C)}{\mu_0(\bm{I}_C)})^{v}$. 
For any fixed $\bm{I}_C$, i.e.,  $\mu_0(\bm{I}_C)$ corresponds to a Gaussian distribution, where each coordinate of the mean is determined by the selected samples in $\bm{I}_C$. 
On the other hand, the $l$-th coordinate of $\mu_1(\bm{I}_C)$ is independently distributed in a Gaussian mixture in a form $$(1-q_2)\mathcal{N}(\mathbb{E}[\mu_0(\bm{I}_C)(l)]  ,\sigma^2) + q_2\mathcal{N}(\mathbb{E}[\mu_0(\bm{I}_C)(l)]+s_l ,\sigma^2).$$
Thus, it is exactly reduced to the coordinate-sampling scenario and we have 
\begin{equation}
    \mathbb{E}_{\mu_0(\bm{I}_C)}(\frac{\mu_1(\bm{I}_C)}{\mu_0(\bm{I}_C)})^{v} \leq e^{(v-1)\cdot \epsilon(v)},
\end{equation}
as we assume that the provided the Gaussian noise coordinate-wise sampling achieves $(v,\epsilon(v))$-RDP.  

The more tricky part is to upper bound (\ref{generic sampling RDP 0}). To handle this, we borrow the decomposition idea in \cite{zhu2019poission}.
We first combine (\ref{equ:PCI}) and (\ref{equ:QCI}) and observe that
\[
P(\bm{I}_C, \bm{I}_x) = Q(\bm{I}_C, \bm{I}_x) + q_1\mu_0(\bm{I}_C) - q_1\mu_1(\bm{I}_C).
\]
Therefore, we have
\begin{equation}
    \begin{aligned}
        & e^{(\alpha-1)\mathcal{D}_{\alpha}(\mathbb{P}_{\tilde{F}(X)}\| \mathbb{P}_{\tilde{F}(X')})} \\ 
        & = \mathbb{E}_{Q(\bm{I}_C, \bm{I}_x)} \big({P(\bm{I}_C, \bm{I}_x) \over Q(\bm{I}_C, \bm{I}_x)}\big)^{\alpha} \\
        & = \mathbb{E}_{Q(\bm{I}_C, \bm{I}_x)}[\big( \frac{Q(\bm{I}_C, \bm{I}_x)+q_1(\mu_0(\bm{I}_C)-\mu_1(\bm{I}_C))}{Q(\bm{I}_C, \bm{I}_x)}\big)^{\alpha}]\\
        & = q_1\mathbb{E}_{\bm{\bm{I}_C}}\mathbb{E}_{\mu_1}\big[\big( \frac{(1-q_1)\mu_1(\bm{I}_C)+q_1\mu_0(\bm{I}_C)}{\mu_1(\bm{I}_C)}\big)^{\alpha}| \bm{1}_{1x}=1\big]\\
        & ~~~~~~+ (1-q_1)\mathbb{E}_{\bm{\bm{I}_C}}\mathbb{E}_{\mu_0}\big[\big( \frac{(1+q_1)\mu_0(\bm{I}_C)-q_1\mu_1(\bm{I}_C)}{\mu_0(\bm{I}_C)}\big)^{\alpha}| \bm{1}_{1x}=0\big]\\
        & = \mathbb{E}_{\bm{I}_C}\big[q_1\mathbb{E}_{\mu_1}\big[(1-q_1+q_1\cdot \frac{\mu_0}{\mu_1})^{\alpha}\big] \\
        & \quad \quad \quad \quad + (1-q_1) \mathbb{E}_{\mu_0}\big[(1-q_1+q_1(2-\frac{\mu_1}{\mu_0}))^{\alpha}\big]\big]\\
        & = \mathbb{E}_{\bm{I}_C}\big[\sum_{v=0}^{\alpha}\binom{\alpha}{v}(1-q_1)^{\alpha-v}q_1^{v}\big\{ q_1 \mathbb{E}_{\mu_1}[(\frac{\mu_0}{\mu_1})^v] \\
        & \quad \quad \quad \quad + (1-q_1)\mathbb{E}_{\mu_0}[(2-\frac{\mu_1}{\mu_0})^v]\big\}  \big]. 
    \end{aligned}
    \label{generic sampling RDP 0-1}
\end{equation}
Comparing (\ref{generic sampling RDP 1}) and (\ref{generic sampling RDP 0-1}), in the following, we will prove 
\[
e^{(\alpha-1)\mathcal{D}_{\alpha}(\mathbb{P}_{\tilde{F}(X)} \| \mathbb{P}_{\tilde{F}(X')})} \leq 
e^{(\alpha-1)\mathcal{D}_{\alpha}(\mathbb{P}_{\tilde{F}(X')} \| \mathbb{P}_{\tilde{F}(X)})}
\]
by showing that 
$  \mathbb{E}_{\mu_0}[(2-\frac{\mu_1}{\mu_0})^v] \leq \mathbb{E}_{\mu_0}[(\frac{\mu_1}{\mu_0})^v]$.
A similar result was proved in \cite{zhu2019poission}.
We follow their high level idea and provide the proof as follows. 

First, we have
\[
\begin{aligned}
& \mathbb{E}_{\mu_0}[(2-\frac{\mu_1}{\mu_0})^v] - \mathbb{E}_{\mu_0}(\frac{\mu_1}{\mu_0})^v \\
& = \sum_{j = 0}^v {v\choose j} \cdot \Big(\big((-1)^j - 1\big) \cdot \mathbb{E}_{\mu_0}[(\frac{\mu_1}{\mu_0} - 1)^v] \Big) \\
& = - 2\sum_{\text{j is odd},j\leq v} \binom{v}{l} \mathbb{E}_{\mu_0}\big[(\frac{\mu_1}{\mu_0}-1)^j\big].
\end{aligned}
\]
Therefore,
\begin{equation}
     \mathbb{E}_{\mu_0}[(2-\frac{\mu_1}{\mu_0})^v]  = \mathbb{E}_{\mu_0}(\frac{\mu_1}{\mu_0})^v - 2\sum_{\text{j is odd},j\leq v} \binom{v}{l} \mathbb{E}_{\mu_0}\big[(\frac{\mu_1}{\mu_0}-1)^j\big].
\end{equation}
It suffices to prove for any $j$, $\mathbb{E}_{\mu_0}\big[(\frac{\mu_1}{\mu_0}-1)^j\big]\geq 0$. Now, for any given $\bm{I}_C$,  let $\bm{\kappa} = (\kappa_1, \kappa_2, \cdots, \kappa_d) = \mathbb{E}[\mu_0(\bm{I}_C)]$, and from the independent coordinate sampling, we know the $l$-th coordinate of $\mu_1$ is independently distributed in a Gaussian mixture $(1-q_2)\mathcal{N}(\kappa_l,\sigma^2)+q_2\mathcal{N}(\kappa_l+s_l,\sigma^2)$, while that of $\mu_0$ is a pure Gaussian $\mathcal{N}(\kappa_l,\sigma^2)$. 
For simplicity, in the following we use $\mathcal{G}_{1l}$ to denote the probability density function of $\mathcal{N}(\kappa_l+s_l,\sigma^2)$ and  $\mathcal{G}_{0l}$ for that of $\mathcal{N}(\kappa_l,\sigma^2)$. 
With the preparation, $\mathbb{E}_{\mu_0}\big[(\frac{\mu_1}{\mu_0}-1)^j\big]$ can be rewritten as 
\begin{equation}
    \begin{aligned}
        &\mathbb{E}_{\mu_0}\big[(\frac{\mu_1}{\mu_0}-1)^j\big] \\
        &= \int_{\bm{o}} \prod_{l=1}^d \mathcal{G}_{0l}(o_l) (\prod_{l=1}^d \frac{(1-q_2)\mathcal{G}_{0l}(o_l)+q_2\mathcal{G}_{1l}(o_l)}{\mathcal{G}_{0l}(o_l)}-1)^j d\bm{o}\\
        & = \int_{\bm{o}} \prod_{l=1}^d \mathcal{G}_{0l}(o_l)\big(\prod_{l=1}^d ((1-q_2) + q_2\frac{\mathcal{G}_{1l}(o_l)}{\mathcal{G}_{0l}(o_l)})-1\big)^j d\bm{o}\\
        & = \int_{\bm{o}} \prod_{l=1}^d \mathcal{G}_{0l}(o_l)\big(\prod_{l=1}^d (1 + q_2(\frac{\mathcal{G}_{1l}(o_l)}{\mathcal{G}_{0l}(o_l)}-1))-1\big)^j d\bm{o}\\
        & = \int_{\bm{o}} \prod_{l=1}^d \mathcal{G}_{0l}(o_l)\big(\sum_{J \subset \{1,2,\cdots, d\}, J\neq\emptyset} q_2^{d-|J|}\prod_{l \in J} (\frac{\mathcal{G}_{1l}(o_l)}{\mathcal{G}_{0l}(o_l)}-1)\big)^j d\bm{o}\\
        & = \text{Poly}\big(\mathbb{E}_{\mathcal{G}_{0l}}\big (\frac{\mathcal{G}_{1l}}{\mathcal{G}_{0l}}-1)^t\big), l=1,2,\cdots, d, t=0,1,\cdots,j\big). 
    \end{aligned}
    \label{cross product}
\end{equation}
From the second to the last line of (\ref{cross product}), we can see that $\mathbb{E}_{\mu_0}\big[(\frac{\mu_1}{\mu_0}-1)^j\big]$ can be expressed as a polynomial $\text{Poly}(\cdot)$ of the terms $\mathbb{E}_{\mathcal{G}_{0l}}[\big (\frac{\mathcal{G}_{1l}}{\mathcal{G}_{0l}}-1)^t]$ , which is known as the Pearson-Vajda $\chi^t$-pseudo-divergence, which is known to be positive for Gaussian distributions with the same variance (please see Theorem 17 in \cite{zhu2019poission} for the proof). 

Further, the coefficients of $\text{Poly}(\cdot)$ are all positive.
Thus, for any positive integer $j$, $\mathbb{E}_{\mu_0}\big[(\frac{\mu_1}{\mu_0}-1)^j\big]\geq 0$ and (\ref{generic sampling RDP 1}) is a global bound for both cases. 
This completes our proof. 
\end{proof}

\section{Proof of Corollary \ref{cor: twice-sampling improvement}}
\label{app: cor: twice-sampling improvement}

\begin{proof}
    By Theorem \ref{thm: coordinate-wise-sampling improvement}, when $d_0$ is sufficiently large, we know $q_2$-coordinate-wise sampling satisfying $(v, v q^2_2\tau)$-RDP. Plugging it into (\ref{twice-sampling-bound}) in Theorem \ref{thm: sampling twice}, we have that $(q_1, q_2)$-twice sampling in the same setup satisfies $(\alpha, \epsilon(\alpha))$, where
    \begin{equation}
        \epsilon(\alpha) = \frac{\log\big((1-q_1)^{\alpha}+ \sum_{v=1}^{\alpha}\binom{\alpha}{v}(1-q_1)^{\alpha-l}q^{v}_1e^{v(v-1)q^2_2\tau} \big)}{\alpha-1}. 
         \label{pf twice sampling bound}
    \end{equation}    

    With the same trick we used in (\ref{l(l-1) regroup}) in Theorem \ref{thm: coordinate-wise-sampling improvement}, $\epsilon(\alpha)$ can be rewritten as
    \begin{equation}
        \epsilon(\alpha) = \frac{1}{\alpha-1}\cdot \log\big(1 + \sum_{v=2}^{\alpha}\binom{\alpha}{v}(1-q_1)^{\alpha-v}q^{v}_1(e^{{v(v-1)q^2_2\tau}}-1) \big).
        \label{pf twice sampling bound 1}
    \end{equation}
    When $q_2$ is sufficiently small such that $q^2_2 < {{1} \over {\alpha(\alpha-1)\tau}}$, then 
    $$e^{{v(v-1)q^2_2\tau}}-1 \leq  2\big(v(v-1)q^2_2\tau\big).$$
    Therefore,
    \begin{equation*}
    \begin{aligned}
        \epsilon(\alpha) = \frac{1}{\alpha-1}\cdot \log\big(1 + \sum_{v=2}^{\alpha}\binom{\alpha}{v}(1-q_1)^{\alpha-v}q^{v}_1(2{v(v-1)q^2_2\tau}) \big).
        \label{pf twice sampling bound 1}
    \end{aligned}
    \end{equation*}
    Using the fact that
    \[
    {\alpha \choose v} = {\alpha(\alpha - 1)\over v(v-1)}\cdot {\alpha - 2\choose v - 2},
    \]
    We have
    \begin{equation}
    \begin{aligned}
        \epsilon(\alpha) 
        & = \frac{1}{\alpha-1}\cdot \log\big(1 + 2\alpha(\alpha - 1)q_2^2\tau\sum_{v=2}^{\alpha}\binom{\alpha-2}{v-2}(1-q_1)^{\alpha-v}q^{v}_1 \big) \\
        & = \frac{1}{\alpha-1}\cdot \log\big(1 + 2\alpha(\alpha - 1)q_1^2q_2^2\tau \big).
    \label{pf twice sampling bound 2}
    \end{aligned}
    \end{equation}
    Thus, as both $q_1$ and $q_2$ convergences to $0$, we have that $\epsilon(\alpha) = O(\alpha q^2_1q^2_2\tau)$.  
\end{proof}

\section{Algorithm Description of Hybrid Clipping with Twice Sampling}
\label{app: hybrid+twice sampling}
\begin{algorithm}[t]
\caption{{Hybrid Clipping with Twice Sampling}}
\begin{algorithmic}[1]
\STATE \textbf{Input:} We receive the following as inputs.
\begin{enumerate}[leftmargin=*]
\item a processing function $\mathcal{F}(\cdot): \mathcal{X}^* \to \mathbb{R}^d$.
\item sensitive input set $X = \big\{x_1, x_2, ... ,x_n\big\} \in \mathcal{X}^n$ of $n$ datapoints, 
\item Poisson sampling rates for the input-wise $q_1$ and the coordinate-wise $q_2$, 
\item a set of orthogonal unit basis in $m$ subsets where each subset is in a form $\{\bm{u}_{jl}, l=1,2,\cdots, r_j\}$, $\sum_{j=1}^m r_j =1$, 
\item $l_2$-norm clipping thresholds $c_{21}, c_{22}, \cdots , c_{2m}$, 
\item $l_{\infty}$-norm clipping thresholds $c_{\infty 1}, c_{\infty 2}, \cdots c_{\infty m}$ such that $c^2_{\infty j}d_{0j} = c^2_{2j}$ for any $j$, 
\item Gaussian noise parameters $\bm{\sigma}=(\sigma_1, \cdots, \sigma_m)$.  
\end{enumerate}
\STATE Apply Poisson sampling with rate $q_1$ on index $[1:n]$ and obtain an index set $\mathcal{I} = \big\{[1], [2], \cdots, [B]\big\}$ of size $B$. Let $X_{\mathcal{I}} = \{x_{[1]}, \cdots, x_{[B]} \}$.
\STATE Let $U = (\bm{u}_{1,1}, \dotsb, \bm{u}_{1,r_1}, \dotsb, \bm{u}_{m,1}, \dotsb, \bm{u}_{m,r_m})$.
\FOR{$i=1,2,\cdots, B$}
\STATE Compute $y_{[i]} = \mathcal{F}(x_{[i]})\cdot U  = (\bar{y}_{[i]1}, \cdots, \bar{y}_{[i]m})$, where $\bar{y}_{[i]j}$ is the $j$-th segment containing the $(1 + \sum_{w=1}^{j-1}r_w)$-th coordinate to the $(\sum_{w=1}^{j}r_w)$-th coordinate of $y_{[i]}$.  
\STATE Implement $l_2$-norm clipping with parameter $c_{2j}$ on $\bar{y}_{[i]j}$ and obtain
\begin{equation*}
    \begin{aligned}
        \tilde{y}_{[i]} & = \big(\bar{y}_{[i]1}\min\{1, \frac{c_{21}}{\|\bar{y}_{[i]1}\|_2}\}, \cdots, \bar{y}_{[i]m}\min\{1, \frac{c_{2m}}{\|\bar{y}_{[i]m}\|_2}\}\big) \\
        & = (\tilde{v}_{1,1}, \cdots, \tilde{v}_{1,r_1}, \cdots, \tilde{v}_{m,1}, \cdots, \tilde{v}_{m, r_m}).
    \end{aligned}
\end{equation*}
\STATE Implement an additional $l_{\infty}$-norm clipping with parameters $c_{\infty j}$ on the $j$-th segment for $j=1,2,\cdots, m$, and obtain $\vardbtilde{y}_{[i]} = \big(\tilde{v}_{1, 1}\cdot \min\{1, \frac{c_{\infty 1}}{|\tilde{v}_{1, 1}|}\}, \cdots, \tilde{v}_{m, r_m}\cdot \min\{1, \frac{c_{\infty m}}{|\tilde{v}_{m, r_m}|}\} \big).$
\ENDFOR 
\STATE Apply Algorithm \ref{alg: coordinate sampling} with coordinate-wise Poisson sampling with parameter $q_2$ on $\vardbtilde{Y}_{\mathcal{I}} = \{ \vardbtilde{y}_{[i]}, i=1,2, \cdots, B\}$ in the representation under $U$ and the output is $\bm{o} \in \mathbb{R}^d$. 
\STATE Generate $d$ independent Gaussian noises where ${e}_{jl} \sim \mathcal{N}(\bm{0},1)$ and let $\tilde{\bm{e}} = \sum_{j=1}^m\sum_{l=1}^{r_j} \sigma_{j}e_{jl}\cdot \bm{u}_{jl}$.
\STATE \textbf{Output}: return $\bm{o}U^T+\tilde{\bm{e}}$.
\end{algorithmic}
\label{alg: hybrid+twice-sampling}
\end{algorithm}

\begin{algorithm}
\caption{Iterative Principal Space Approximation from Public Samples with Power Method}
\begin{algorithmic}[1]
\STATE \textbf{Generating-Basis}($\mathcal{F}, X^{p}$):
\STATE Apply $\mathcal{F}(\cdot)$ to the public dataset and get $\{y^p_i\} = \{\mathcal{F}(x^p_i)\}$.
\STATE Let $Y^p = (y^p_1, \dotsb, y^p_k)$ be a $d\times k$ matrix.
\FOR {$i = 1, 2, \dotsb, m - 1$}
    \STATE $U_i\leftarrow \text{Approx-Eigen}(Y^p, r_i, t)$.
    \STATE $Y^p\leftarrow Y^p - U_iU_i^{T}Y^{p}$.
\ENDFOR
\STATE Choose $U_m$ such that $U = (U_1, \dotsb, U_m)$ forms an orthogonal basis.
\STATE Return $(U_1, \dotsb, U_m)$ as the basis.
\end{algorithmic}
\hrulefill
\textbf{Subroutine: Approx-Eigen}($M$, $r$, $t$)
\begin{algorithmic}[1]
\STATE Suppose $M$ is a $d\times k$ matrix, first generate a random $d\times r$ matrix $U$.
\FOR {$i = 1, 2, \dotsb, t$}
    \STATE $U\leftarrow (M M^{T})\cdot U$.
    \STATE Apply Gram–Schmidt process on $U$ such that columns of $E$ are orthogonal.
\ENDFOR
\STATE Return $U = (\bm{u}_1, \dotsb, \bm{u}_r)$ where $\bm{u}_i$ is $U$'s $i$-{th} column. 
\end{algorithmic}
\label{alg:basis}
\end{algorithm}

\noindent We formally describe the hybrid clipping with twice sampling in Algorithm \ref{alg: hybrid+twice-sampling}. In Algorithm \ref{alg: hybrid+twice-sampling}, we describe the set of orthogonal unit basis $\{\bm{u}_{jl}, l=1,2,\cdots, r_j\}$ as inputs.
This may be determined by either prior knowledge or public data. In Section \ref{sec: additional exp}, we use public data to approximate the basis in principal component space, inspired by the implementation in \cite{embedyu2021}. 
Given a public dataset $X^p = \big\{x^p_1, x^p_2, ... ,x^p_k\big\}$ where $k = \sum_{j = 1}^{m - 1}r_{j}$ and an approximation parameter $t$, we describe how to compute the basis in Algorithm \ref{alg:basis}.


\section{Proof of Theorem \ref{cor: optimized noise hybrid+twice}}
\label{app: cor: optimized noise hybrid+twice}
\FinalTheorem*
\begin{proof}
Without loss of generality, we assume that $U_j= \{\bm{u}_{j1}, \cdots,$ $\bm{u}_{j r_j}\}$ are natural bases. 
Given the clipping strategy, for any vector $\bm{v} \in \mathbb{R}^d$, we split it into $m$ segments, and each is clipped by a mixture of $l_{\infty}$ and $l_2$ norm. 
Thus, for any element $\bm{s} \in \mathsf{S}$, we use $\bm{s} = (\bm{s}_1, \cdots, \bm{s}_m)$ to represent its expression in $m$ segments, where $\bm{s}_j \in \mathbb{R}^{m_j}$ and $\|\bm{s}_j\|_2 \leq c_{2j}$, $\|\bm{s}_j\|_{\infty} \leq c_{\infty j} = c_{2j}/\sqrt{d_{0j}}$. We generate a Gaussian noise in distribution $\mathcal{N}(0, \sigma^2_j)$ for each coordinate in the $j$-th segment.

Now, first consider the $(\alpha_0,\epsilon(\alpha_0))$-RDP guarantee if we apply $q_2$-coordinate-wise sampling given such clipping and randomization, denoted by $\mathcal{F}^{CS}$. By (\ref{coordiante decomposition}) in Appendix \ref{app: thm: privacy coordiante sampling}, since the distribution of the output on each coordinate is still independent, we have that
\begin{equation}
\begin{aligned}
    &\mathcal{D}_{\alpha_0}(\mathbb{P}_{\mathcal{F}^{CS}(X')}\| \mathbb{P}_{\mathcal{F}^{CS}(X)}) = \sum_{l=1}^d \mathcal{D}_{\alpha}(\mathbb{P}_{\mathcal{F}^{CS}(X')(l)}\| \mathbb{P}_{\mathcal{F}^{CS}(X)(l)})\\
    & \leq \sum_{j=1}^m \sum_{l=1}^{r_j} \mathcal{D}_{\alpha_0}\big((1-q_2) \mathcal{N}(0, \sigma^2_j) + q_2\mathcal{N}(s_{jl}, \sigma^2) \|\mathcal{N}(0, \sigma^2_j) \big) \\
    &  = \sum_{j=1}^m \sum_{l=1}^{r_j}\frac{1}{\alpha_0-1}\log\big((1-q_2)^{\alpha_0}+\sum_{v=1}^{\alpha_0}\binom{\alpha}{v}(1-q_2)^{\alpha_0-v}q^{v}_2e^{\frac{v(v-1)s^2_{jl}}{2\sigma^2_{j}}} \big)\\
    & \leq \sum_{j=1}^m \frac{d_{0j}}{\alpha_0-1}\log\big((1-q_2)^{\alpha_0}+\sum_{v=1}^{\alpha_0}\binom{\alpha}{v}(1-q_2)^{\alpha_0-v}q^{v}_2e^{\frac{v(v-1)c^2_{\infty j}}{2\sigma^2_{j}}} \big).
\end{aligned}
\label{mix coordiante decomposition}
\end{equation}
The last line of (\ref{mix coordiante decomposition}) is simply from applying 
Theorem \ref{thm: privacy coordiante sampling} on the divergence sum in each segment. 
With an identical reasoning as that in Theorem \ref{thm: privacy coordiante sampling}, by the dominating divergence of one-dimensional subsampled Gaussian \cite{mironov2019r}, 
\begin{equation}
\begin{aligned}
    &\mathcal{D}_{\alpha_0}(\mathbb{P}_{\mathcal{F}^{CS}(X)}\| \mathbb{P}_{\mathcal{F}^{CS}(X')}) = \sum_{l=1}^d \mathcal{D}_{\alpha}(\mathbb{P}_{\mathcal{F}^{CS}(X)(l)}\| \mathbb{P}_{\mathcal{F}^{CS}(X')(l)})\\
    & \leq \sum_{j=1}^m \sum_{l=1}^{r_j} \mathcal{D}_{\alpha_0}\big(\mathcal{N}(0, \sigma^2_j)\|(1-q_2) \mathcal{N}(0, \sigma^2_j) + q_2\mathcal{N}(s_{l}, \sigma^2)  \big) \\
    &\leq \sum_{j=1}^m \sum_{l=1}^{r_j} \mathcal{D}_{\alpha_0}\big((1-q_2) \mathcal{N}(0, \sigma^2_j) + q_2\mathcal{N}(s_{l}, \sigma^2) \|\mathcal{N}(0, \sigma^2_j) \big),
\end{aligned}
\label{mix coordiante decomposition 1}
\end{equation}
Therefore, the bound in (\ref{mix coordiante decomposition}) is a global upper bound for both $\mathcal{D}_{\alpha_0}(\mathbb{P}_{\mathcal{F}^{CS}(X')}\| \mathbb{P}_{\mathcal{F}^{CS}(X)})$ and $\mathcal{D}_{\alpha_0}(\mathbb{P}_{\mathcal{F}^{CS}(X)}\| \mathbb{P}_{\mathcal{F}^{CS}(X')})$, and thus an upper bound of $\epsilon(\alpha_0)$.

We proceed to consider the twice sampling where the $q_2$ coordinate-wise sampling is implemented on a subsampled dataset by $q_1$-input-wise sampling. 
The rest privacy analysis is then the same as that in Appendix \ref{app: thm: sampling twice}, except that we add different amount of noise to each coordinate. 
This does not affect the conclusion, since finally it is still reduced to the form in (\ref{cross product}), a polynomial with positive coefficients on multiple one-dimensional Pearson-Vajda $\chi^t$-pseudo-divergences on Gaussian distributions with different variance $\sigma_j$. Thus, plugging (\ref{mix coordiante decomposition 1}) to (\ref{twice-sampling-bound}) in Theorem \ref{thm: sampling twice}, we obtain the RDP bound for the mixture of hybird clipping and twice sampling claimed.  
\end{proof}



\end{document}